\def\1{{\bf 1}}
\def\id{\mbox{id\,}}
\def\ot{\!\otimes\!}
\def\ots{\otimes_{\star}}
\def\F{{\cal F}}
\def\bF{\mbox{$\overline{\cal F}$}}
\def\f{{\scriptscriptstyle {\cal F}}}
\def\bbf{{\scriptscriptstyle \overline{\cal F}}}
\def\cross{\mbox{$\rule{0.7pt}{1.3ex}\!\times $}}
\def\cocross{\mbox{$\times \!\rule{0.3pt}{1.1ex}\,$}}
\def\ra{\rangle}
\def\la{\langle}
\def \A{{\cal A}}
\def \B {{\cal B}}
\def \C {{\cal C}}
\def \C {{\cal C}}
\def \E {{\sf E}}
\def \Ms {{\scriptscriptstyle M}}
\def \Mcs {{\scriptscriptstyle M_c}}
\def \O {{\cal O}}
\def \L {{\cal L} }
\def \M {{\cal M}}
\def \N {{\cal N}}
\def \Q {{\cal Q}}
\def \QM {{\cal Q}_{\scriptscriptstyle M_c}}
\def \QMst {{\cal Q}_{{\scriptscriptstyle M_c}\star}}
\def \X {{\cal X} }
\def\R{\mbox{$\cal R$}}
\def\r{{\scriptscriptstyle {\cal R}}}
\def\bR{\mbox{$\overline{\cal R}$}}
\def\br{{\scriptscriptstyle \overline{\cal R}}}
\def\hH{\mbox{$\hat{H}$}}
\newcommand{\trc}{\triangleright}
\newcommand{\ltlc}{\stackrel{\scriptscriptstyle \triangleleft}{}}
\def\g{\mathfrak{g}}
\def\k{\mathfrak{k}}
\def\h{\mathfrak{h}}
\newcommand{\gm}{{\bf g}}
\newcommand{\gmp}{\gm_{\scriptscriptstyle \perp}}
\newcommand{\gmps}{\gm_{\scriptscriptstyle \perp\star}}
\newcommand{\XiM}{\Xi_{\scriptscriptstyle M}}
\newcommand{\XiMst}{\Xi_{\scriptscriptstyle M\star}}
\newcommand{\XM}{{\cal X}^{\scriptscriptstyle M}}
\newcommand{\Xip}{\Xi_{\scriptscriptstyle \perp}}
\newcommand{\Xips}{\Xi_{{\scriptscriptstyle \perp}\star}}
\newcommand{\Up}{U_{\scriptscriptstyle \perp}}
\newcommand{\Vp}{V_{\scriptscriptstyle \perp}}
\newcommand{\Np}{N_{\scriptscriptstyle \perp}}
\newcommand{\Omp}{\Omega_{\scriptscriptstyle \perp}}
\newcommand{\Omps}{\Omega_{\scriptscriptstyle \perp \star}}
\newcommand{\Pp}{\mathrm{pr}_{\scriptscriptstyle \perp}}
\newcommand{\Pps}{\mathrm{pr}_{{\scriptscriptstyle{\perp}}\star}}
\newcommand{\Pt}{\mathrm{pr}_{ t}}
\newcommand{\Pts}{\mathrm{pr}_{t\star}}
\def\TT{{\mathcal T}}
\def\Xis{{\Xi_\star }}
\def\Om{\Omega}
\def\Oms{\Omega_\star}
\def\rR{\mathsf{R}}
\def\ric{\mathsf{Ric}}
\def\tT{\mathsf{T}}
\def\b#1{{\mathbb #1}}
\newcommand{\CC}{\mathbb{C}}
\newcommand{\RR}{\mathbb{R}}
\newcommand{\NN}{\mathbb{N}}
\def\nn{\nonumber \\}
\newcommand{\be}{\begin{equation}}
\newcommand{\ee}{\end{equation}}
\newcommand{\bea}{\begin{eqnarray}}
\newcommand{\eea}{\end{eqnarray}}
\newcommand{\ba}{\begin{array}}
\newcommand{\ea}{\end{array}}
\newtheorem{theorem}{Theorem}
\newtheorem{prop}[theorem]{Proposition}
\def\sq{\mbox{\rlap{$\sqcap$}$\sqcup$}}
\newenvironment{proof}[1]{\vspace{5pt}\noindent{\bf Proof #1}\hspace{6pt}}%
{\hfill\sq}
\newcommand{\bp}{\begin{proof}}
\newcommand{\ep}{\end{proof}\par\vspace{10pt}\noindent}
\begin{document}

\title{Twisted submanifolds of $\RR^n$}

\author{Gaetano Fiore\footnote{
Dip. di Matematica e Applicazioni, Universit\'a di Napoli ``Federico II'',
\& I.N.F.N., Sezione di Napoli,\\
Complesso Universitario MSA,
Via Cintia, 80126 Napoli, Italia.\ 
Email: gaetano.fiore@na.infn.it}
~and
Thomas Weber\footnote{
Dip. di Matematica, Universit\'a di Bologna,
Piazza di Porta S. Donato, 5, 40126 Bologna, Italia.\\
Email: thomasmartin.weber@unibo.it}}

\date{}

\maketitle
\abstract{We propose a general procedure to construct  noncommutative deformations
of an embedded submanifold $M$ of $\RR^n$ determined by a set of smooth  equations $f^a(x)=0$. We use the framework of Drinfel'd  twist 
deformation of differential geometry of [Aschieri et al.,
Class. Quantum Gravity 23 (2006), 1883]; the commutative pointwise product
is replaced by a (generally noncommutative) $\star$-product determined by a  Drinfel'd twist. The twists we employ are based on the Lie algebra $\Xi_t$
of vector fields that are tangent to {\it all} 
the submanifolds that are level sets  of the $f^a$ (tangent infinitesimal diffeomorphisms); the twisted Cartan calculus is automatically equivariant under twisted $\Xi_t$. We can
consistently project a connection from the twisted $\RR^n$ to the twisted 
$M$ if the twist is based on a suitable Lie subalgebra $\mathfrak{e}\subset\Xi_t$. 
If we endow $\RR^n$ with a metric, then twisting and projecting to the normal and tangent vector fields commute, and we can project the Levi-Civita connection
consistently  to the twisted $M$, 
provided the twist is based on the Lie subalgebra $\k\subset\mathfrak{e}$
of the  Killing  vector fields of the metric;  a twisted Gauss theorem
follows, in particular. Twisted algebraic manifolds
can be characterized in terms of generators
and $\star$-polynomial relations. We present in some detail twisted cylinders embedded  in twisted Euclidean  $\RR^3$
and twisted hyperboloids  embedded in twisted  Minkowski $\RR^3$ [these are twisted (anti-)de Sitter spaces $dS_2,AdS_2$].
}

\bigskip\noindent
{\bf Keywords.} \ Drinfel’d twist; Deformation quantization;  noncommutative geometry; Hopf algebras, their representations; tangent, normal vector fields; first,  second fundamental form.

\medskip\noindent
{\bf Declarations.} \ \
 i) Funding: We do not acknowledge any specific funding for this work. \

\noindent
ii) Conflicts of interest/Competing interests: On behalf of all authors, the corresponding author states that there is no conflict of interest. \

\noindent
iii) Availability of data and material: not applicable. \

\noindent
iv) Code availability: not applicable.


\section{Introduction}

The notion of a  submanifold $N$ of a manifold $M$ is a fundamental concept in (differential) geometry, playing   a crucial role in various branches of mathematics and physics.
A metric, connection, ..., on $M$ uniquely induces (see e.g. \cite{Kobayashi1996}) 
a metric, connection, ..., on $N$.
In the last few decades various deep physical and mathematical reasons have stimulated the 
generalization of differential geometry to so-called Noncommutative Geometry (NCG)  \cite{Connes,Lan97,Madore99,Majid2000,GraFigVar00}. 
In particular, NCG has been advocated as a suitable framework for formulating
a fundamental (or at least an effective) theory of quantum spacetime allowing the
quantization of gravity (see e.g. \cite{DopFreRob95,Aschieri2006}) and/or for unifying fundamental interactions (see e.g. \cite{ConLot91,ChaConvan}).
It is therefore natural and 
important to investigate whether and to what extent a notion of a submanifold is possible and fruitful in the NCG framework.  Surprisingly, this question has received little 
systematic attention so far 
(rather isolated exceptions are the papers  \cite{Masson1995,Giunashvili,TWeber2019}).
On several noncommutative (NC) spaces one can make sense of special classes of NC submanifolds,
but some features of the latter may depart from their commutative counterparts. For instance,
from the noncommutative algebra ``of functions on the quantum group $U_q(n)$", which 
is generated by the $n^2$ matrix elements of a $n\times n$ unitary matrix, one can obtain
the one $\A$ on the quantum group $SU_q(n)$  by imposing that the so-called $q$-determinant  (a suitable central element) be 1, as in the commutative  ($q=1$) limit; but the so-called quantum group bicovariant differential calculus on $\A$ (i.e. the corresponding $\A$-bimodule $\Omega$ of 1-forms) remains of dimension  $n^2$ instead of  $n^2-1$ \cite{Woronowicz1989}.
The same phenomenon occurs e.g. obtaining the $SO_q(n)$-covariant 
quantum Euclidean spheres $S^{n-1}_q$   from the  $SO_q(n)$-covariant 
quantum Euclidean spaces $\RR^n_q$, by imposing that the [central and
$SO_q(n)$-invariant] square distance $r^2$
from  the origin be 1; said differently, the 1-form $dr^2$ cannot be imposed to vanish, and actually the graded commutator \
$\left[\frac 1{q^2-1}r^{-2} dr^2,\,\cdot\,\right]$ \ acts as  the exterior derivative  \cite{Fio06JPCS,Ste96JMP,FioMad00,CerFioMad01}.

In the present work we wish to address the above question systematically within the framework
of deformation quantization \cite{BayFlaFroLicSte} in the particular approach based on  Drinfel'd twisting \cite{Drinfeld1983} of Hopf algebras.
We restrict our attention to the noncommutative generalization of embedded submanifolds of $\RR^n$, because
by the Whitney embedding theorems \cite{Lee2012}
one can always embed a smooth manifold $M$ in  
$\RR^n$ with  a sufficiently high dimension $n$. 
More precisely, we shall assume that $ M\subset \RR^n$
consists  of points of $x\in\RR^n$  fulfilling a set of equations  
\be
f^a(x)=0,\qquad a=1,2,...,k<n,                  \label{DefIdeal}
\ee
where $f\equiv(f^1,...,f^k):\RR^n\rightarrow\RR^k$ are smooth 
functions such that the Jacobian matrix $J=\partial f/\partial x$ is of rank $k$ 
on all $\RR^n$;
or, more generally, that $f$ is well-defined and $J$ is of rank $k$ on an open
subset ${\cal D}_f\subset\RR^n$, and $M$
consists  of the points of  ${\cal D}_f$  fulfilling (\ref{DefIdeal}).
In all our examples here ${\cal E}_f:=\RR^n\setminus{\cal D}_f$ will be
empty or of zero measure.
By replacing in (\ref{DefIdeal}) \ $f^a(x)\mapsto f^a_c(x):=f^a(x)-c^a$, \ with  $c \equiv(c^1,...,c^k)\in f\left({\cal D}_f\right)
$, \ we
 obtain a whole $k$-parameter family of embedded manifolds $M_c$ ($M_0=M$) of dimension $n\!-\!k$,  that are level sets of $f$. 
Embedded submanifolds $N\subset M$ can be obtained by
adding more equations to (\ref{DefIdeal})\footnote{Given a smooth 
manifold $M$ one could introduce such a $N\subset M$ also 
{\it patchwise}, i.e. as a level set of eqs. of the type (\ref{DefIdeal}) 
on each chart of an atlas of $M$ fulfilling suitable matching conditions in intersecting charts.}.
The $*$-algebra $\XM$ of smooth  complex-valued functions on $M$
can be expressed as the quotient of  the  
$*$-algebra  $\X=C^\infty({\cal D}_f)$ of  smooth 
 functions on ${\cal D}_f$ over the ideal $\C\subset\X$ of smooth functions 
vanishing on $M$:
\be
\XM:=\X/\C\equiv\big\{\, [\alpha]:=\alpha+\C \:\: |\:\: \alpha\in\X\big\};            \label{quotient}
\ee
In the appendix we prove that $\C$ is generated by the left-hand sides (lhs) of (\ref{DefIdeal}):
\begin{theorem}
\ \ $\C=
\bigoplus_{a=1}^k \X f^a =\bigoplus_{a=1}^k f^a\X$, \ i.e. for all
$h\in\C$ there exist  $h^a\in\X$ such that
\be
h(x)=\sum_{a=1}^k h^a(x)f^a(x)=\sum_{a=1}^k f^a(x)h^a(x).        \label{DecoCeq}
\ee
\label{DecoC}
\end{theorem}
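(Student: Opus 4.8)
The plan is to reduce (\ref{DecoCeq}) to two local statements and then glue them with a partition of unity. The inclusion $\sum_{a=1}^k\X f^a\subseteq\C$ is trivial because each $f^a$ vanishes on $M$, and the two sums in (\ref{DecoCeq}) coincide since $\X$ is commutative; so the content is the reverse inclusion, with coefficients found in $\X$. (The symbol $\bigoplus$ is not a genuine internal direct sum of submodules — already $f^1f^2\in\X f^1\cap\X f^2$ when $k\ge2$ — but is to be read as the module-generation statement in (\ref{DecoCeq}).)

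For the local statement near $M$, I would fix $p\in M$: since $J(p)$ has rank $k$, after permuting the coordinates of $\RR^n$ the minor $\partial(f^1,\dots,f^k)/\partial(x^1,\dots,x^k)$ is invertible at $p$, so $\varphi:=(f^1,\dots,f^k,x^{k+1},\dots,x^n)$ is, by the inverse function theorem, a diffeomorphism from some open $U_p\ni p$ onto an open box $B=(-\delta,\delta)^k\times B''$ (note the first $k$ components of $\varphi(p)$ vanish). On $B$ the set $M$ is the slice $\{y^1=\dots=y^k=0\}$, so $\tilde h:=h\circ\varphi^{-1}$ vanishes there; since $B$ is convex, Hadamard's lemma (differentiating under the integral sign along the segment to that slice) produces smooth $\tilde h_a$ with $\tilde h(y)=\sum_a y^a\tilde h_a(y)$, whence $h=\sum_{a=1}^k f^a\,h^{(p)}_a$ on $U_p$ with $h^{(p)}_a\in C^\infty(U_p)$. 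For the local statement away from $M$: every $q\in\D_f\setminus M$ has a neighbourhood $W_q$ with $\overline{W_q}$ compact and disjoint from $M$, on which $|f|^2=\sum_a(f^a)^2\ge\varepsilon_q>0$; hence $g^{(q)}_a:=f^a h/|f|^2\in C^\infty(W_q)$ and $h=\sum_{a=1}^k f^a\,g^{(q)}_a$ on $W_q$.

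To globalize I would take a locally finite smooth partition of unity $\{\chi_\iota\}$ on the (paracompact) manifold $\D_f$ subordinate to the cover $\{U_p\}_{p\in M}\cup\{W_q\}_{q\in\D_f\setminus M}$. Each product of $\chi_\iota$ with the corresponding local coefficient of $f^a$ extends by zero to a smooth function on all of $\D_f$, because $\mathrm{supp}\,\chi_\iota$ lies inside the relevant patch; summing over $\iota$ yields smooth $H^a\in\X$, and $\sum_a f^a H^a=\sum_\iota\chi_\iota\,h=h$, which is (\ref{DecoCeq}).

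The only delicate point is the gluing across the topological boundary of $\D_f\setminus M$, which is contained in $M$: the coefficient $f^a h/|f|^2$ can blow up as one approaches $M$, so one must choose the neighbourhoods $W_q$ with $\overline{W_q}$ disjoint from $M$ (so that $|f|^2$ is bounded below there) and rely on the compact support of the $\chi_\iota$ to suppress these coefficients before they reach $M$. A minor technical point is shrinking the chart $U_p$ so that $\varphi(U_p)$ is convex, which is exactly what makes Hadamard's integral representation available on it.
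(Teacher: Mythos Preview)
Your proof is correct and follows essentially the same strategy as the paper's: local straightening of $M$ via the inverse function theorem plus Hadamard's lemma near $M$, an explicit division away from $M$, and a partition-of-unity glue. The only cosmetic difference is that away from $M$ the paper covers by the $k$ sets $V_a=\{f^a\neq 0\}$ and writes $h=f^a\cdot(h/f^a)$ on each, whereas you use $h=\sum_a f^a\cdot(f^a h/|f|^2)$ on small $W_q$'s; both are the same trick.
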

Similarly, $\X^{\scriptscriptstyle N}$ can be obtained as the quotient  of  $\XM$ over 
the ideal generated by further equations of the type (\ref{DefIdeal}), 
or equivalently  as the quotient  of $\X$  over the larger ideal generated by all such equations. Identifying vector fields with first order differential operators, we denote as \
$\Xi:=\{X=X^i\partial_i \:\: |\:\:  X^i\in \X\}$ \ the Lie
algebra of  smooth  vector fields $X$ on ${\cal D}_f$; here and below we abbreviate $\partial_i\equiv \partial /\partial x^i$. Those vector fields $X\in\Xi$
such that $X(f^a)$ belong to $\C$ for all $a$, or equivalently
such that $X(h)$  belongs to $\C$ if $h$ does (i.e. vanishes when restricted to $M$) make up a Lie subalgebra $\Xi_\C$, which is also a $\X$-bimodule; 
those such that $X(h)$  belongs to $\C$ 
for all $h\in\X$ make up a smaller
Lie subalgebra $\Xi_{\C\C}$, which is actually an ideal in  $\Xi_\C$ and itself a  $\X$-bimodule. 
It decomposes as $\Xi_{\C\C}=\bigoplus_{a=1}^k f^a\Xi$.
The Lie algebra $\XiM$ of vector fields tangent to $M$ can be identified with
that of derivations of $\XM$, namely with the quotient
\be
\XiM:=\Xi_\C/\Xi_{\C\C}\equiv\big\{\, [X]:=X+\Xi_{\C\C} \:\: |\:\: X\in\Xi_\C\big\}.                                     \label{quotient'}
\ee

If $f^a(x)$ are polynomial functions fulfilling suitable irreducibility conditions and we set $\X=\mbox{Pol}^\bullet(\RR^n)$, the $*$-algebra  of complex-valued  polynomial
 functions on $\RR^n$ (instead of $\X=C^\infty({\cal D}_f)$), then again the $*$-algebra $\XM$ of complex-valued polynomial  functions on $M$
can be expressed as the quotient  $\XM=\X/\C$, where $\C\subset\X$ is the ideal of polynomial functions 
vanishing on $M$,  $\Xi:=\{X=X^i\partial_i \:\:\: |\:\:\: X^i\in \X\}$ is the Lie
algebra of   polynomial vector fields $X$ on $\RR^n$, etc.
$\C$ can be decomposed again in the form (\ref{DecoCeq}), with $\X=\mbox{Pol}^\bullet(\RR^n)$ \cite{FioFraWebquadrics}.

Often one is interested in noncommutative {\it deformations} of differential geometry on a manifold, i.e. in {\it families} of NCGs depending on a formal parameter  $\nu$ and reducing to the original one if we formally set $\nu=0$.
Deformation quantization  \cite{BayFlaFroLicSte,Kon97} provides a general framework  to deform $\X$ into a noncommutative algebra $\X_\star$  over  $\CC[[\nu]]$ 
(the ring of formal power series in $\nu$ with coefficients in $\CC$): as a module over  
$\CC[[\nu]]$
$\X_\star$  coincides with $\X[[\nu]]$, but the commutative pointwise product $\alpha \beta$ of $\alpha,\beta\in\X$ ($\CC[[\nu]]$-bilinearly extended to $\X[[\nu]]$)
is deformed into a possibly noncommutative (but still associative) product,
\be
\alpha\star \beta=\alpha \beta+\sum\nolimits_{l=1}^\infty \nu^l B_l(\alpha,\beta),
\ee
where $B_l$ are suitable bidifferential operators of degree $l$ at most.
We wish to deform $\XM$ into a noncommutative algebra $\XM_\star$ in the form of a quotient
\be
\XM_\star:=\X_\star/\C_\star\equiv\big\{\, [\alpha]:=\alpha+\C_\star \:\: |\:\: \alpha\in\X_\star\big\} \label{quotientstar},
\ee
with 
$\C_\star$ a two-sided ideal of  $\X_\star$.
To make also $\XM_\star=\XM[[\nu]]$ hold as an equality of 
$\CC[[\nu]]$-modules we  require that $\C_\star= \C[[\nu]]$, 
i.e. that \ $c\star \alpha,\, \alpha\star c\in \C[[\nu]]$  for all $\alpha\in\X$, $c\in\C$, \  so that \ $(\alpha+c)\star(\alpha'+c')-\alpha\star \alpha'\in \C[[\nu]]$  for all $\alpha,\alpha'\in \X[[\nu]]$ and $c,c'\in\C[[\nu]]$; \ 
as a result, taking the quotient and deforming the product commute: \  $(\X/\C)_\star=\X_\star/\C_\star$. \
This is fulfilled if\footnote{In fact, for
all $c\equiv \sum_{a=1}^kf^a c^a\in \C$ ($c^a\in\X$) (\ref{cond1}) implies 
$c=\sum_{a=1}^kf^a \star c^a$, so that for all $\alpha\in\X$, by the associativity of $\star$, 
$ c\star \alpha=(\sum_{a=1}^kf^a \star c^a)\star \alpha=\sum_{a=1}^kf^a \star (c^a\star \alpha)
\stackrel {(\ref{cond1})}{=}\sum_{a=1}^kf^a (c^a\star \alpha)\in \C[[\nu]]$; and similarly for $\alpha\star c$. 
Note that it is not sufficient to require that \ $\alpha\star f^a\!-\!\alpha f^a$, $f^a\star \alpha\!-\!f^a\alpha$, 
or equivalently  $B_l(\alpha,f^a), B_l(f^a,\alpha)$, \ belong to $\C$ to obtain the same results. As a more general condition ensuring $c\star \alpha,\, \alpha\star c\in \C[[\nu]]$ one could require that for all $a=1,..,k$ and  $\alpha\in\X$ the product  $f^a \alpha=\alpha f^a$ can be expressed as a combination of $\star$-products: \ $f^a \alpha=f^b\star A^a_b(\alpha)=A'{}^a_b(\alpha)\star f^b$.}
\be
\alpha\star f^a=\alpha f^a=f^a\star \alpha\qquad\Leftrightarrow\qquad B_l(\alpha,f^a)=0=B_l(f^a,\alpha)\qquad \forall l\in\mathbb{N}
\label{cond1}
\ee 
for all $\alpha\in\X$, $a=1,..,k$ [(\ref{cond1}) implies that the $f^a$ are central in $\X_\star$,  again].

In  \cite{Drinfeld1983} Drinfel'd has introduced  a general deformation quantization procedure of 
universal enveloping algebras $U\g$ (seen as Hopf algebras) of Lie groups  $G$
and of their module algebras, based on {\it twisting}; a {\it twist} is a suitable
element (a 2-cocycle, see section \ref{TwistSym})
\be
\F=\1\ot\1+\sum_{l=1}^\infty \nu^l \sum_{I_l} \F^{I_l}_{1} \otimes\F^{I_l}_{2} \in U\g\otimes U\g [[\nu]]                                   \label{twist}
\ee
(here $\otimes=\otimes_{\CC[[\nu]]}$).
It acts on the tensor product of any two $U\g$-modules or module algebras, 
in particular 
algebras of functions on any $G$-manifold,
including some 
symplectic\footnote{However this quantization procedure does not apply to every Poisson manifold:
there are several symplectic manifolds, e.g. the symplectic $2$-sphere and the
symplectic Riemann surfaces of genus $g>1$, which do not admit a $\star$-product induced by
a Drinfel'd twist (c.f. \cite{Thomas2016,FrancescoThomas2017}). Nevertheless, if one
is not taking into account the Poisson structure, every $G$-manifold (i.e. 
smooth manifolds $G$ acts on) can be quantized
via the above approach.} manifolds \cite{Aschieri2008}.
In \cite{Aschieri2006}   the authors consider the Lie algebra $\g=\XiM$ of 
smooth vector fields on a generic smooth manifold $M$ (this is the  Lie algebra of the infinite-dimensional Lie group of 
diffeomorphisms of $M$) and  the $U\XiM$-module algebra 
$\XM=C^\infty(M)$; $ \F_{1}^{I_l},\F_{2}^{I_l}$ seen as differential operators acting on $\XM$ have order  $l$ at most and no zero-order term. 
The deformed product reads
\be
\alpha\star \beta:=\alpha \beta+  \sum_{l=1}^\infty \nu^l \sum_{I_l}\bF_{1}^{I_l}   (\alpha) \:\:\bF_{2}^{I_l} (\beta)\,,
 \label{starprod}
\ee
where  \ $\bF=\1\ot\1+\sum_{l=1}^\infty \nu^l \sum_{I_l} \bF_{1}^{I_l}\otimes\bF_{2}^{I_l}$ \ is the inverse of the twist. In the sequel we will abbreviate 
\ $\F=\F_1 \otimes\F_2$, \ $\bF=\bF_1 \otimes\bF_2$ \
(Sweedler notation with suppressed summation symbols). 
In the presence of several copies of $\F$ we write $\F_1\ot\F_2$ and $\F'_1\ot\F'_2$ etc., in order
to distinguish the summations.
Actually, Ref. \cite{Aschieri2006} twists not only $U\XiM$ into a new Hopf algebra $U\XiM^\f$
and $\XM$ into a $U\XiM^\f$-module algebra $\XM_\star$,  but also the
 $U\XiM$-equivariant $\XM$-bimodule of differential forms on $M$, their tensor powers,
the Lie derivative, and the geometry on $M$ (metric, connection, curvature, torsion,...) - if present -, into 
deformed counterparts. 

Here, as  in  \cite{Masson1995}, we shall take the algebraic characterization (\ref{quotient}-\ref{quotient'}) as the starting point  for defining submanifolds in NCG, but use a twist-deformed differential calculus on it\footnote{The derivation-based approach to differential calculi of Dubois-Violette and Michor 
\cite{DVM1996}, which was used in \cite{Masson1995}, does not encompass several differential calculi (e.g. quantum group covariant ones), or requires algebra extensions to succed (see e.g. \cite{CerFioMad01}). 
The approaches to the differential calculus  \`a la  Connes \cite{Connes}
and Woronowicz \cite{Woronowicz1989} 
(which include  the one  considered here) are  more general:
the bimodule of noncommutative differential 1-forms is the primary object  whereby the whole calculus  can be derived by imposing the Leibniz rule and nilpotency of the exterior derivative. As a result, the dual module consists of noncommutative vector fields which are no longer derivations.}. 
\be
\Xi_t:=\{X\in\Xi \:\:\: |\:\:\: X (f^1)= 0,\;...,\:X (f^k) =0  \}\subset\Xi_\C           \label{defeqXis}
\ee 
is the Lie subalgebra of vector fields  tangent to {\it all} submanifolds $M_c$ (because they fulfill $X(f^a_c)=0$ for all $c\in f({\cal D}_f)$) at all points; it  is  an  $\X$-bimodule,
as well. 
The key observation is that, 
applying this deformation procedure to \ $\X=C^\infty({\cal D}_f)$ \ with a 
twist \ $\F\in U\Xi_t\otimes U\Xi_t [[\nu]]$, \ we 
satisfy (\ref{cond1}) and therefore obtain   a deformation $\X_\star$ of $\X$ such that 
\ $\X^{\scriptscriptstyle M_c}_\star\!=\!\X^{\scriptscriptstyle M_c}[[\nu]]=\X_\star/\C_{\star}^c$,
\ for all $c\in f({\cal D}_f)$; \ moreover, $\Xi_{\scriptscriptstyle M_c\star}\!=\!\Xi_{\scriptscriptstyle M_c}[[\nu]]=\Xi_{\C^c\star}/\Xi_{\C\C^c\star}$, \ see section \ref{TwistDiffGeomSubman}. 
In other words, we obtain a noncommutative deformation, in the sense of deformation quantization
 and in the form of quotients as in (\ref{quotient}-\ref{quotient'}), of the $k$-parameter family of embedded manifolds $M_c\subset\RR^n$. 
Actually, for every $X\in\Xi_\C$ there is an element in the equivalence
class $[X]$ that belongs to $\Xi_t$, namely its tangent projection $X_t$ (see Proposition \ref{prop05}). 
$\X_\star,\Xis,...$ are $U\Xi^\f$-equivariant,
while $\X^{\scriptscriptstyle M_c}_\star,\XiM{}_{\star},\Xi_{t\star},...$ are $U\Xi_t^\f$-equivariant.
If $\F$ is unitary or real, then $U\Xi^\f$ and $\X_\star$ admit $*$-structures (involutions), making them a Hopf $*$-algebra and a $U\Xi^\f$-module $*$-algebra,
respectively; thereby $U\Xi_t^\f$ is a Hopf $*$-subalgebra and $\X^{\scriptscriptstyle M_c}_\star,\Xi_{t\star},...$ are a $U\Xi_t^\f$-module
$*$-algebra and $U\Xi_t^\f$-equivariant Lie $*$-algebras, respectively.
By the same procedure one can obtain noncommutative deformations of differential 
geometry on submanifolds $N\subset M\subset\RR^n$.

The plan of the paper will be as follows. 

In section \ref{Preli} we present  preliminary material, first on twisting (section \ref{TwistAlgStruc}), then 
on its application \cite{Aschieri2006,AschieriCastellani2009,Aschieri2009} to the differential geometry on a generic manifold  (section \ref{TwistedNCGonM}).

In section \ref{TwistDiffGeomSubman} we deal with twist deformations of embedded manifolds $M\subset\RR^n$ in the smooth context.
In section \ref{DiffGeomSubman} we pave the way for these deformations recalling or deriving basic facts about
differential geometry on a submanifold $M$ of $\RR^n$, i.e. how the Cartan calculus and any connection, metric, etc., on $\RR^n$ induces corresponding data on $M$,
how to concretely build bases of the bimodules $\Xi_t,\Xip$ of tangent and normal vectors fields (i.e. sections in the tangent and normal bundle),
the corresponding projections $\Pt,\Pp$, etc.
In section \ref{SecTwistingVF} we first show that the
whole twisted Cartan calculus on  $\X_\star$ is projected to the one on $\XM_\star$, in the same way as for its undeformed counterpart, and that projection commutes with twisting,
for all twists \ $\F\in U\Xi_t\otimes U\Xi_t [[\nu]]$. Then we show that the same can be done
for: i) a connection $\nabla$, using a twist $\F\in U\mathfrak{e}\otimes U\mathfrak{e} [[\nu]]$, where $\mathfrak{e}$ is the corresponding {\it equivariance Lie algebra} (a Lie subalgebra of $\Xi_t$); ii)
the metric, and the associated Levi-Civita connection, using a twist $\F\in U\k\otimes U\k [[\nu]]$, where $\k\subseteq\mathfrak{e}$ is the Lie subalgebra of the corresponding {\it Killing vector fields}. Under the latter assumptions one can build a twisted
version not only of the first, but also of the second fundamental form, and prove
a twisted version of Gauss theorem.  Twisted $\Xi_t,\Xip,\Pt,\Pp$ stay essentially undeformed; we find suitable  $\k$-invariant bases for them.
Here we limit ourselves to developing  (pseudo)Riemannian geometry
for our physical interests, but other geometric
structures (say projective, affine, conformal,...) could be explored as well.
To build concrete examples of twisted submanifolds 
one can look for $M\subset\RR^n$ such that $\Xi_t$ contains a  finite-dimensional Lie subalgebra $\g$, because the simplest known Drinfel'd twists are 
based on such a $\g$; a nontrivial $\g$ surely exists if $M$ is symmetric under some Lie group.

In particular, one can apply \cite{FioFraWebquadrics}
 this procedure  to algebraic submanifolds $M\subset \RR^n$, e.g.
quadrics (i.e. level sets of a polynomial function $f(x)=0$ of degree 2); for the latter there exists a Lie subalgebra $\g$ (of dimension at least 2) of both $\Xi_t$ and 
 the Lie algebra ${\sf aff}(n)$ of the affine group ${\sf Aff}(\RR^n)=\RR^n\cocross GL(n)$ of $\RR^n$. If we choose a twist 
$\F\in U\g\otimes U\g[[\nu]]$ all the structures
can be formulated in terms of generators
and $\star$-polynomial relations. More precisely,
the algebra  $\X=\mathrm{Pol}^\bullet(\RR^n)$
of polynomial functions (with complex coefficients) in the set of Cartesian coordinates $x^1,...,x^n$ is deformed so that 
every $\star$-polynomial of degree $k$ in $x$ equals an ordinary polynomial of the same degree in $x$, and vice versa. 
The same occurs with the $\X_\star$-bimodules  
and algebras $\Omega^\bullet_\star$ of differential forms, that of differential operators, etc. In \cite{FioFraWebquadrics} the authors discuss in detail deformations of all families
of quadric surfaces embedded in $\RR^3$ that are induced by twists of
the abelian \cite{Reshetikhin1990} or Jordanian \cite{Ogievetsky1992,Ohn1992} type.  
In section \ref{Examples} of the present work, as illustrations of the approach, we just present cocycle twist deformations  of elliptic (in particular, circular) cylinders 
(first family) as well as hyperboloids and cone (second family)  embedded in $\RR^3$; they are induced by unitary abelian  or Jordanian twists.
Endowing $\RR^3$ with the Euclidean (resp. Minkowski) 
metric gives the circular cylinders (resp. hyperboloids and cone)
a  Lie algebra $\mathfrak{k}$ 
of isometries of dimension at least 2; choosing a twist $\F\in U\k\otimes U\k [[\nu]]$
we thus find twisted (pseudo)Riemannian $M_c$ (with the metric given by the
twisted first fundamental form) that are symmetric  under 
the Hopf algebra $U\mathfrak{k}^{\f}$ (the ``quantum group of isometries"),
and the twisted Levi-Civita connection on all $M_c$ equals the projection of the twisted  Levi-Civita connection on $\RR^3$ (the exterior derivative), 
while the twisted curvature can be expressed 
in terms of the twisted second fundamental form through the twisted Gauss theorem.
Actually, the metric, Levi-Civita connection, intrinsic and extrinsic curvatures of the circular
cylinders and hyperboloids, as elements in the appropriate tensor spaces, remain undeformed; the twist enters only their action on twisted tensor products of vector fields.
The twisted hyperboloids  can be seen as twisted (anti-)de Sitter spaces $dS_2,AdS_2$.

In section \ref{Outlook} we summarize  our results, add further remarks, mention 
possible mathematical developments, physical applications, issues
worth further investigations.

\section{Preliminaries}
\label{Preli}

\subsection{Twisted algebraic structures}
\label{TwistAlgStruc}

\subsubsection{Twisting Hopf algebras $H\!:=\!U\g$}
\label{TwistSym}

As known, the universal enveloping  algebra (UEA) $H\!:=\!U\g$ of the
Lie algebra $\g$ of any Lie group $G$ is a Hopf algebra. 
First, we briefly recall what this means. Let
$$
\ba{lll}
\varepsilon(\1)=1,\qquad \quad &\Delta(\1)=\1\ot\1,\qquad \quad & S(\1)=\1,\\[8pt]
\varepsilon(g)=0,\qquad \quad & \Delta(g)=g\ot\1+\1\ot g,\qquad
\quad &  S(g)=-g,\qquad \qquad \mbox{if }g\in\g; \ea
$$
$\varepsilon,\Delta$ are extended to all of $H$ as
algebra maps, $S$ as  an antialgebra map: \be \ba{lll}
\varepsilon:H\to\b{C},\quad\qquad  & \Delta:H\to H\ot H,\quad \qquad  
 & S:H\to H,\\[8pt]
\varepsilon(ab)=\varepsilon(a)\varepsilon(b),\quad \qquad  &
 \Delta(ab)=\Delta(a)\Delta(b),\quad\qquad  & S(ab)=S(b)S(a)
.\ea \label{deltaprop} 
\ee
The extensions of $\varepsilon,\Delta,S$ are unambiguous, because
$\varepsilon(g)=0$,
$\Delta\big([g,g']\big)=\big[\Delta(g),\Delta(g')\big]$,
$S\big([g,g']\big)=\big[S(g'),S(g)\big]$ if $g,g'\in\g$. The maps
$\varepsilon,\Delta,S$ are the abstract operations by which one
constructs the trivial representation, the tensor product of any two
representations and the contragredient of any representation,
respectively. $H\!=\!U\g$ equipped with  $\varepsilon,\Delta,S$
is a Hopf algebra; this means that a number of properties 
(see e.g. \cite{Chari1995,Majid2000,ES2010})
are fulfilled, in particular 
$(\Delta\otimes\mathrm{id})\circ\Delta=(\mathrm{id}\otimes\Delta)\circ\Delta$
(coassociativity),
$(\epsilon\otimes\mathrm{id})\circ\Delta=\mathrm{id}
=(\mathrm{id}\otimes\epsilon)\circ\Delta$
(counitality), 
$\mu\circ(S\otimes\mathrm{id})\circ\Delta
=\eta\circ\epsilon
=\mu\circ(\mathrm{id}\otimes S)\circ\Delta$
(antipode property)
[$\mu: H\otimes H\to H$ denotes the product in $H$, $\mu(a\otimes b)=ab$,
and $\eta:\CC\to H$ is defined by $\eta(\alpha)=\alpha\1$].
$H$ is cocommutative, i.e.  $\tau\circ\Delta=\Delta$, where $\tau$ is the flip
operator: $\tau(a\otimes b)=b \otimes a$.

If $G$ is a real form of a Lie group then there exists also a $*$-structure
on $H\!=\!U\g$, i.e. an involution $*:H\to H$ such that   for all $a,b\in H$
and $\alpha,\beta\in\CC$
\bea
\1^*=\1, \qquad (\alpha a+\beta b)^*= \bar\alpha a^*+\bar\beta b^*, 
\qquad (ab)^*=b^*a^*, \label{*prop1} \\[6pt]
\varepsilon(a^*)=[\varepsilon(a)]^*\qquad
  \Delta(a^*)=[\Delta(a)]^{*\ot *},\qquad  
S\left\{\left[S(a^*)\right]^*\right\}=a. \label{*prop2}
\eea
$H$ equipped with  $*,\varepsilon,\Delta,S$
is a Hopf $*$-algebra.

\medskip
Secondly, we recall how to deform a Hopf algebra using
a {\it twist} \cite{Drinfeld1983} (see also
\cite{Tak90,Chari1995}). Let $\hH=H[[\nu]]$.
Given a twist, i.e. an element $\F=\1\ot\1+\mathcal{O}(\nu)\in(H\ot H)[[\nu]]$
fulfilling 
\bea &&
(\epsilon\ot\id)\F=
(\id\ot\epsilon)\F=\1,                 \label{twistcond}\\[8pt]
&&(\F\ot\1)[(\Delta\ot\id)(\F)]=(\1\ot\F)[(\id\ot\Delta)(\F)]=:\F^3,
                                           \label{cocycle}
\eea 
we shall call $H_s\!\subseteq\!H$ the smallest Hopf
subalgebra such that $\F\!\in\! (H_s\ot H_s)[[\nu]]$, and 
\be
\beta:=\F_1 S\left(\F_2\right)
\in H_s\qquad\Rightarrow\qquad \beta^{-1}=S\left(\bF_1\right)\bF_2.    \label{defbeta} 
\ee
Extending the product, $\Delta,\varepsilon,S$ linearly to the formal
power series in $\nu$ and setting 
\be 
\Delta_\f(a) :=\F
\Delta(a) \bF, \qquad  S_\f(a):=\beta \, S(a)\beta^{-1}
,\qquad \R:=\F_{21}\bF,\label{inter-2}
 \ee
one finds that the analogs of conditions (\ref{deltaprop}), as well as
analogs of the coassociativity, counitality and antipode property are
satisfied and therefore $H^\f=(\hH,\Delta_\f,\varepsilon,S_\f)$ is a
Hopf algebra deformation of $(H,\Delta , \varepsilon, S)$. While the
latter was cocommutative,  $H^\f$ is triangular
noncocommutative (or quasi-cocommutative), i.e.
$\tau\!\circ\!\Delta_\f(a)\!=\!\R\Delta_\f(a)\bR$, where
$\bR\!=\!\R_{\scriptscriptstyle 21}$ is the inverse of the so-called
{\it universal R-matrix} or {\it triangular structure} $\R$.
Correspondingly, $\Delta_\f ,
S_\f$ replace $\Delta, S$ in  the tensor
product of any two representations and the contragradient of any
representation, respectively.
Drinfel'd has shown \cite{Drinfeld1983} that any triangular deformation of
the Hopf algebra $H$ can be obtained in this way (up to
isomorphisms).

To obtain a new Hopf $*$-algebra we need some further assumption on
the twist. Without loss of generality $\nu$ can be assumed real.
If $\F$ is {\it real} (i.e. $\F^{*\ot
*}=(S\ot S)[\F_{21}]$), then $\beta^*=\beta$ and
$\R^{*\ot*}=(\beta\ot\beta)^{-1}\bR(\beta\ot\beta)
=(\beta\ot\beta)\bR(\beta\ot\beta)^{-1}$.
Introducing the new $*$-structure
\be
g^{*_\f}:=\beta g^*\beta^{-1}
\ee
makes $(H^\f,*_\f)$ into a triangular Hopf $*$-algebra, i.e. also
(\ref{*prop1}-\ref{*prop2}) and $\R^{*_\f\ot *_\f}=\bR$ are satisfied. 
$*_\f$ can be transformed back to $*$ by the Hopf algebra
isomorphism  (\ref{eq20}), which transforms the product of $\hat H$ 
into the $\star$-product induced by $\F$. 
Another possibility is that $\F$ is {\it unitary} (i.e. $\F^{*\ot *}=\bF$).
Then \ $\beta^*=S(\beta^{-1})$, \ $\R^{*\ot*}\!=\!\bR$, \ and 
 \ $(H^\f,*)$ \ itself is a triangular Hopf $*$-algebra.

Eq. (\ref{cocycle}), (\ref{inter-2}) imply the
generalized intertwining relation
$\Delta_\f^{(n)}(a)\!=\!\F^n\Delta^{(n)}(a)(\F^n)^{-1}$
 for the iterated coproduct. By definition
$$
\Delta_\f^{(n)}: \hH\to \hH^{\ot n},\qquad\Delta^{(n)}: H[[\nu]]\to
(H)^{\ot n}[[\nu]],\qquad\F^n\in (H_s)^{\ot n}[[\nu]]
$$
reduce to $\Delta_\f,\Delta,\F$ for $n=2$, whereas for $n>2$ they
can be defined recursively as \be \ba{l}
\Delta_\f^{(n\!+\!1)}=(\id^{\ot^{n\!-\!1}}\ot\Delta_\f)
\circ\Delta_\f^{(n)},\qquad\Delta^{(n\!+\!1)}=
(\id^{\ot{(n\!-\!1)}}\ot\Delta)\circ\Delta^{(n)},\\[8pt]
\F^{n\!+\!1}=(\1^{\ot{(n\!-\!1)}}\ot\F)[(\id^{\ot{(n\!-\!1)}}
\ot\Delta)\F^n].                \label{iter-n} \ea \ee
The result for
$\Delta_\f^{(n)},\F^n$ are the same if in definitions (\ref{iter-n})
we iterate the coproduct on a different sequence of tensor factors
[coassociativity of $\Delta_\f$; this follows from the coassociativity of
$\Delta$ and the cocycle condition (\ref{cocycle})];
for instance, for $n\!=\!3$ this amounts to (\ref{cocycle}) and
$\Delta_\f^{(3)}\!=\!(\Delta_\f\ot\id)\!\circ\!\Delta_\f$. For
any $a\in H[[h]]= \hH$ we shall use the Sweedler notations
(summations are understood)
$$
\Delta^{(n)}(a)=
a_{(1)} \otimes a_{(2)} \otimes ...\otimes a_{(n)},\qquad\qquad
\Delta_\f^{(n)}(a)=
a_{\widehat{(1)}} \otimes a_{\widehat{(2)}} \otimes ...\otimes a_{\widehat{(n)}}.
$$

\medskip  
We consider the following examples of twists:
\begin{enumerate}
\item[i.)]
Let $n\in\mathbb{N}$, \
$P:=\sum_{i=1}^ne_i\otimes e_{n+i}\in\mathfrak{g}\otimes\mathfrak{g}$, \
with pairwise commuting elements \ $e_1,...,e_{2n}\in\mathfrak{g}$.
Then
$$
\mathcal{F}=\exp(i\nu P)\in( U\g \otimes U\g )[[\nu]]
$$
is a Drinfel'd twist on $ U\g $ (\cite{Reshetikhin1990}).
We refer to it as an
abelian (or {\it Reshetikhin}) twist on $ U\g $. 
It is unitary if $P^{^*\otimes^*}=P$; 
this is e.g. the case if the $e_i$ are anti-Hermitian or Hermitian.
It is immediate to check that the twist with $P$ replaced by $P'=\frac 12\sum_{i=1}^n
(e_i\otimes e_{n+i}-e_{n+i}\otimes e_i)$ is both unitary and real, 
leads to the same $\R$ and makes $\beta=\1$,
whence $S_\f=S$, and the $*$-structure remains undeformed also for 
$H$-$*$-modules and module algebras, see (\ref{star'}).

\item[ii.)]
Let $H,E\in\mathfrak{g}$ be elements of a Lie algebra such that
$
[H,E]=2E.
$
Then 
$$
\mathcal{F}
=\exp\left[\frac{1}{2}H\otimes\log(\1+i\nu E)\right]
\in( U\g \otimes U\g )[[\nu]]
$$
defines a Drinfel'd twist, called Jordanian twist
\cite{Ogievetsky1992,Ohn1992}.
If $H$ and $E$ are anti-Hermitian, $\mathcal{F}$ is unitary.
More sophisticated twists can be obtained using this 
as a prototype \cite{Borowiec2005,Pachol2017,BorMelMelPac18}.
\end{enumerate}

There are numerous other examples of Drinfel'd twists. We refer to \cite{Jonas2017}
for an explicit (recursive) construction of twists on UEA via a Fedosov method and a classification
(of equivalence classes) of twists in terms of the Chevalley-Eilenberg cohomology of
the Lie algebra.

\subsubsection{Twisting $H$-modules and $H$-module algebras}
\label{TwistMod}

We recall that, given  a Hopf ($*$-)algebra $H$ over $\b{C}$,
a left $H$-module $(\M,\trc)$ is a vector space $\M$ over $\b{C}$
equipped with a left action, i.e. a $\b{C}$-bilinear map
$(g,a)\!\in\! H\!\times\!\M\mapsto g\trc a\!\in\!\M$
such that (\ref{leibniz})$_1$ and $1\trc a=a$ hold.
An element $a\in\M$ of a left $H$-module is said to be
$H$-invariant if $g\trc a=\epsilon(g)a$ for all $g\in H$.
Equipped also with
an antilinear involution $*$  fulfilling (\ref{leibniz})$_2$,
$(\M,\trc,*)$ is a left $H$-$*$-module. A
left  $H$-module ($*$-)algebra is a ($*$-)algebra $\A$ over $\b{C}$
equipped with a left $H$-($*$-)module structure,
such that (\ref{leibniz})$_3$ 
and $g\trc 1=\epsilon(g)1$ hold:
\be
(gg')\!\trc\! a=g
\!\trc\! (g'\! \!\trc\! a)\!,\qquad (g\!\trc\! a)^*\!=[S(g)]^*\!\trc\! a^*\!,
\qquad g\!\trc\! (ab)=\!
\left(g_{(1)}\!\trc\! a\right)\! \left(g_{(2)}\!\trc\! b\right).\qquad        \label{leibniz}
\ee
If $g\!\in\!\g$, formula (\ref{leibniz})$_3$ becomes the Leibniz rule.
An $\A$-bimodule $\M$ for a left $H$-module algebra $\A$ is said to be an $H$-equivariant
$\A$-bimodule if $\M$ is a left $H$-module such that
\begin{equation}
    g\trc(a\cdot s\cdot b)
    =(g_{(1)}\trc a)\cdot(g_{(2)}\trc s)\cdot(g_{(3)}\trc b)
\end{equation}
for all $g\in H$, $a,b\in\A$ and $s\in\M$, where we denoted the $\A$-module actions on $\M$ by
$\cdot$. If in addition, $\A$ is a left $H$-module $*$-algebra and there is a $*$-involution on $\M$,
we call $\M$ an $H$-equivariant $\A$-$*$-bimodule if $(a\cdot s\cdot b)^*=b^*\cdot s^*\cdot a^*$.
We remark that any $\A$-($*$-)subbimodule of an $H$-equivariant $\A$-($*$-)bimodule is an
$H$-equivariant $\A$-($*$-)bimodule if it is closed under the Hopf algebra action.
A map $\phi\colon\M\rightarrow\M'$ between left $H$-modules is said to be $H$-equivariant
if it commutes with the Hopf algebra actions, i.e. $g\trc\phi(s)=\phi(g\trc s)$ for all
$g\in H$ and $s\in\M$.
%
%
%
Extending the action $\trc$ $\b{C}[[\nu]]$-bilinearly one can make
any $H$-module $(\M,\trc)$ into an $\hH$-module $(\M[[\nu]],\trc)$.
If $\F$ is a real (resp. unitary) twist on $H$, the undeformed $*$-involution
(resp. the $*$-involution $*_\f$) structures $H^\f$ as a triangular
Hopf $*$-algebra.
If $\F$ is real (resp. unitary) and $(\M,\trc,*)$ is an $H$-$*$-module  then 
(see e.g. \cite{Fiore2010,Aschieri2006}) $(\M[[\nu]],\trc,*_\star)$ is an $H^\f$-$*$-module, where
\be
    a^{*_\star}:= a^*\qquad\qquad 
    \mbox{(resp. }  a^{*_\star}:=S(\beta)\trc a^* \mbox{)}.      \label{star'}
\ee

Given an $H$-module ($*$-)algebra $\A$ and choosing
$\M=\A$, the twist gives also a systematic way to make $\A[[\nu]]$
into an $H^\f$-module ($*$-)algebra $\A_\star$, by endowing it with
a new product, 
\be
a\star a':=
\left(\bF_1  \trc a\right) \left(\bF_2\trc a'\right),                                 \label{starprod}
\ee
the so called  $\star$-product. In fact, $\star$ is associative by (\ref{cocycle}),  fulfills \  $(a\!\star\!
a')^{*_\star}\!=\!a'{}^{*_\star}\!\!\star\! a^{*_\star}$ and
\be
\ba{l} g\trc (a\!\star\!a')=
\big(g_{\widehat{(1)}} \trc  a\big)  \star \big(g_{\widehat{(2)}} \trc a'\big). 
\ea 
\ee
If $aa'=\pm a'a$, 
i.e. $a,a'$ (anti)commute, then
\be
a'\star a \: = \: \pm\: (\R_2\trc a)\star(\R_1\trc a').                     \label{braid1}
\ee
Consequently, twists leading to the same $\R$ [e.g. the abelian twists
$\exp{(i\nu P)},\exp{(i\nu P')}$ of the previous section] lead to the same commutation relations in $\A_\star$.
More generally, for any $H$-equivariant $\A$-($*$-)bimodule $\M$ of a left $H$-module ($*$-)algebra
$\A$, the twisted module actions
\begin{equation}
    a\star s
    =(\overline{\F}_1\trc a)\cdot(\overline{\F}_2\trc s)
    \quad\text{ and }\quad
    s\star a
    =(\overline{\F}_1\trc s)\cdot(\overline{\F}_2\trc a),
\end{equation}
where $a\in\A$ and $s\in\M$, structure $\M$ as an $H^\f$-equivariant $\A_\star$-($*$-)bimodule
$\M_\star$ (with $*$-involution (\ref{star'}) on $\M_\star$). We refer to 
\cite{Aschieri2014,Fiore2010} for proofs of the previous statements.

\medskip
Given two $H$-modules $(\M,\trc)$, $(\N,\trc)$,
the tensor product $(\M \ot \N,\trc)$
is an $H$-module if we define $g\trc (a\ot b):= 
\left(g_{(1)}\trc a\right) \ot\left( g_{(2)}\trc b\right)$.
As above, this is extended to an $H^\f$-($*$-)module
$(\M\otimes\N[[\nu]],\trc)$.
Introducing the ``$\star$-tensor product''
\cite{Aschieri2006} 
\be 
\ba{l} (a\ot_\star
b):=\bF (\trc\ot\trc)(a\ot b)\equiv 
\bF_1\trc a \otimes \bF_2\trc b 
\ea \label{startensor}
\ee 
(an invertible
endomorphism, i.e. a change of basis, of $\M\!\ot\!\N[[\nu]]$), we find
\be
g\trc(a\ot_\star b)\:=\: \:
g_{\widehat{(1)}}\trc a \:\otimes_\star\:  g_{\widehat{(2)}}\trc b . \label{startransf}
\ee
Given two $H$-module
($*$-)algebras $\A,\B$, this applies in particular to $\M=\A$,
$\N=\B$. The tensor ($*$-)algebra $\A\ot\B$ [whose product is
defined by $(a\ot b)(a'\ot b')=(aa'\ot bb')$] is an $H$-module
($*$-)algebra under the action $\trc$. By introducing the
$\star$-product (\ref{starprod}) $\A\ot\B$  is deformed into an
$H^\f$-module ($*$-)algebra $(\A\ot\B)_\star$, with
 product and $*$-structure  related to those of
$\A_\star,\B_\star$ by 
\bea
&& (a\ot_\star b)\star (a'\ot_\star b')=
a\star (\R_2\trc a')\: \ot_\star\, (\R_1\trc b) \star b',\label{braid}\\[8pt]
&&\!\! \ba{ll}
(a\ot_\star b)^{*}=
\R_2\trc a^{*} \:\:\ot_\star \:  \R_1\trc b^{*} \qquad &\mbox{if }\F \mbox{ is real},\\[6pt]
(a\ot_\star b)^{*_\star}=
\R_2\trc a^{*_\star} \:\:\ot_\star \:  \R_1\trc b^{*_\star}\qquad &\mbox{if }\F \mbox{ is unitary},
\ea
\eea
where $\R_1\ot \R_2$ (again a summation is understood) is the
decomposition of $\R$ in $H^\f\ot H^\f$. 
 From (\ref{braid}) we recognize that
$(\A\ot\B)_{\star}$ is isomorphic to
the {\it braided  tensor product (algebra)}
\cite{Majid2000,Chari1995} of $\A_\star$ with $\B_\star$; here the
braiding is involutive and therefore spurious, as
$\bR=\R_{21}$. So $(\A\ot \B)_{\star}$ encodes both the
usual $\star$-product within  $\A,\B$ and the $\star$-tensor product
(or braided tensor product) between the two.  (On the contrary, setting
\ $ (a\otimes b):=
\F_1\trc a \otimes_\star \F_2\trc b $ \
`unbraids' the braided tensor product, cf. \cite{FioSteWes03}). 
By (\ref{cocycle}) the
$\star$-tensor product is associative, and the previous results hold
also for iterated $\star$-tensor products.

\medskip
The algebra $(H[[\nu]],\star)$ itself is an  $H^\f$-module algebra, and one can build a triangular Hopf algebra $H_\star=(H[[\nu]],\star,\eta,\Delta_\star,\epsilon,S_\star,\R_\star)$  isomorphic  to $H^\f=(H[[\nu]],\mu,\eta,\Delta_\f,\epsilon,S_\f,\R)$, with isomorphism $D:H_\star\to H^\f$ and inverse  given by 
\cite{Majid1994,Aschieri2006} (cf. also \cite{Fio98JMP,Fiore2010})
\be\label{eq20}
D(\xi):=(\bF_1\trc \xi)\bF_2=\F_1\,\xi\, S\!\left(\F_2\right)\,\beta^{-1},\qquad
D^{-1}(\phi)=\bF_1 \,\phi\,\beta\, S\!\left(\bF_2\right).
\ee
Namely, \
$D(\xi\star \xi')=D(\xi)D(\xi')$, \ and \
$\Delta_\star,S_\star,\R_\star$ \ are related to \ $\Delta_\f,S_\f,\R$ \ by the relations
\be \label{HF-HstarREL}
\Delta_\star=(D^{-1}\otimes D^{-1})\circ  \Delta_\f \circ D,
\qquad S_\star=  D^{-1}\circ S_\f \circ D,
\qquad \R_\star= (D^{-1}\otimes D^{-1})(\R).
\ee
One can think of $D$ also as a change of generators within $H[[\nu]]$.

If $\F$ is real  then $(H[[\nu]],\star,*)$ is a left $H^\f$-module $*$-algebra, and
$D:(H_\star,*)\to(H^\f,*_\f)$ is an isomorphism of  triangular
Hopf $*$-algebras (cf. \cite{Majid2000}~Proposition~2.3.7, \cite{Aschieri2006}).

If $\F$ is  unitary  then $(H[[\nu]],\star,*_\star)$ is a left $H^\f$-module $*$-algebra,
and \ $D:(H_\star,*_\star)\to(H^\f,*)$ \ is an isomorphism of  triangular
Hopf $*$-algebras, see Proposition \ref{IsomorHopf} 
in the Appendix.

\subsection{Twisted differential geometry}
\label{TwistedNCGonM}

Ref. \cite{Aschieri2006} applies the previous machinery to $H=U\Xi$, where $\Xi$ is the
Lie algebra of the Lie group of diffeomorphisms of $M$, and $\A$ is the algebra 
$\X=C^\infty(M)$ of smooth functions on 
$M$, or more generally an $\X$-bimodule of tensor fields on $M$.
Tensor fields of rank $(p,r)$ ($p,r\in\NN_0$)  on $M$ 
can be described as elements in the tensor product 
\be
\TT^{p,r}:=\underbrace{\Omega\otimes\ldots\otimes\Omega}_{\mbox{$p$ times}}\otimes\underbrace{\Xi\otimes\ldots\otimes\Xi}_{\mbox{$r$ times}}
\ee
of the $\X$-bimodules   $\Omega\equiv\Omega^1$, $\Xi$ of differential 1-forms and vector fields on $M$, respectively. 
Here and below $\otimes$  stands for $\otimes_{\X}$ (rather than $\otimes_{\CC}$), 
namely $T\otimes f T'= T f\otimes T'$ for all $f\in\X$. 
We set  $\TT^{0,0}:=\X$.
 The tensor product is  associative; to avoid the need of reorderings we multiply 
$T\in \TT^{p,r}$ by 1-form tensor factors
only from the left if $r>0$, by vector field tensor factors only from the right if $p>0$.
The tensor product between a function $f\in\X\equiv\TT^{0,0}$ and another tensor field is as usual not 
explicitly written. All $\TT^{p,r}$ are $\X$-bimodules, e.g. 
$f( T\otimes T')=(f T)\otimes T'$, $( T\otimes T')f= T\otimes( T'f)$.
$\TT:=\bigoplus_{p,r\in\NN_0}\TT^{p,r}$ (endowed with the product $\otimes$)
is a $U\Xi$-module algebra:  the action $\trc: U\Xi\otimes \TT\rightarrow \TT\,$  is uniquely determined by
 \ ${\bf 1}_H\trc T=T$ and  
\be
X\trc T=\L_X(T), \qquad X\in\Xi, \label{LieDer}
\ee
where $\L$ is the Lie derivative.
It fulfills the Leibniz rule $g\trc ( T\otimes T')=g_{(1)}\trc T\otimes g_{(2)}\trc T'$.

By setting $\A=\TT$  we can apply the results of section \ref{TwistAlgStruc}, in particular define a
deformed tensor algebra $\TT_\star$ with associative $\star$-tensor product defined by eq. (\ref{startensor}).
$\TT_\star$ is a  $U\Xi^\f$-module algebra. \ All $\TT^{h,r}_\star$  are $\X_\star$-bimodules.   \ In $\TT_\star$ we have in particular
\bea
 T\otimes_\star h\star T'= T\star h\otimes_\star T'~,\qquad
h\star( T\otimes_\star T')=(h\star T)\otimes_\star T'~.
\eea 
The first formula shows that $\otimes_\star$ is actually    $\otimes_{\X_\star}$, the tensor product
over $\X_\star$. While  the usual product of a tensor field $ T$ with  a function $h$ from 
the left and from the right coincide\footnote{If $ T\in\Xi$ then the product $ T\cdot h$
of $T$ with $h$ from the right is  the
vector field  that on a function $g$ gives $( T \!\cdot\! h)(g)= T(g)h$. In section
\ref{QQM} we shall denote it by $T\ltlc  h$, so as to distinguish it from
the   operator $ T h= T(h) +h T$.}, 
in general this no longer occurs with the $\star$-product.

In a chart $U$ with coordinates $x^\mu$
any vector field $X$ can be  expressed in the $\partial_\mu$ basis as
$X=X^\mu\partial_\mu$. It  can be also uniquely expressed as
$X=X_\star^\mu\star\partial_\mu$, \
where $X^\mu_\star$ are functions defined on $U$. The same
occurs if $\{\partial_\mu\}$ is replaced by a more general 
(not necessarily holonomic or $\nu$-independent) frame $\{e_a\}$:
$X=X_\star^a\star e_a$.
Similarly,
every 1-form $\omega$ can be uniquely written as
$\omega=\omega_\mu  dx^\mu=\omega^\star_\mu\star dx^\mu$, 
with $\omega_\mu,\omega^\star_\mu$ functions defined on $U$, and where $\{dx^\mu\}$ is the usual 
dual frame of the vector field frame $\{\partial_\mu\}$.
More generally, in $U$ every tensor field $ T^{p,q}\in\TT^{p,q}$ can be uniquely  
written using functions $ T_{\star~\mu_1\ldots\mu_p}^{~\lambda_1\ldots\lambda_q}$
 defined  on $U$ as 
\be 
 T^{p,q}= T_{\star~\mu_1\ldots\mu_p}^{~\lambda_1\ldots\lambda_q}
\star dx^{\mu_1}\otimes_\star\ldots \otimes_\star dx^{\mu_p}
\otimes_\star\partial_{\lambda_1}\otimes_\star
\ldots\otimes_\star \partial_{\lambda_q}.
\label{79}
\ee 


Let us twist the algebra  $\A=\Omega^\bullet=\oplus_p\Omega^p$ 
of differential forms. 
We denote by $\Omega^\bullet_\star:= (\Omega^\bullet,\wedge_\star)$ 
the $\CC[[\nu]]$-module of forms equipped with the $\star$-deformed wedge product  
\be \label{formsfromthm}
\omega\wedge_\star\omega':=(\bF_1\trc\omega )\wedge (\bF_2\trc\omega')= \omega\otimes_\star\omega'
-\R_2\trc \omega'\otimes_\star\R_1\trc\omega~.
\ee  
This  can be seen as the tensor subspace of totally 
$\star$-antisymmetric (contravariant) tensor fields. The degree of the top form 
stays undeformed. Below we  drop the symbols $\wedge,\wedge_\star$.

The usual exterior derivative 
$d:\Om^\bullet\rightarrow \Om^{\bullet+1}$ satisfies the graded 
Leibniz rule $d(\alpha_p\star \beta)=d\alpha_p\star \beta+(-1)^p\alpha_p\star
d\beta $ and is therefore also the $\star $-exterior derivative. This is 
so, because the exterior derivative commutes with the Lie derivative,
i.e. with the Hopf algebra action.

One can endow \cite{Aschieri2006} the module underlying the algebra $U\Xi^\f\simeq U\Xi[[\nu]]$ itself with the $\star$-product; the new algebra
$U\Xi{}_\star$ endowed by suitable coproduct, counit, antipode becomes a Hopf algebra isomorphic to $U\Xi^\f$, whereby it is manifest
that the above differential calculus is bicovariant in the sense of Woronowicz \cite{Woronowicz1989}.
$\Xi$ is closed under the $\star$-Lie bracket
\be
[X,Y]_\star:=X\star Y-(\R_2\trc Y)\star (\R_1\trc X)=\left[\bF_1\trc X,\bF_2\trc Y\right]=
\L_{\bbf_1\trc X}\left(\bF_2\trc Y\right).
\ee

The action $\L^\star_X$  of $U\Xi{}_\star$ on $T$ ($\star$-Lie derivative) is defined by
\be
\L^\star_X(T)=(\bF_1\trc X)\trc\left[\bF_2\trc T\right], \qquad X\in U\Xi. \label{starLieDer}
\ee

\subsubsection{$\star$-Pairing between 1-forms and vector fields, twisted Cartan calculus}\label{pairingformvect}

 Denoting $\la~,~\ra$  the commutative pairing between vector fields and 1-forms,
the $\star$-pairing  is defined as \ $\la~,~\ra_\star:=\la~,~\ra\circ \bF(\trc\otimes\trc): \,\Xi_\star\otimes_\CC  \Omega_\star  \mapsto  \X_\star$, \ namely
\bea\label{lerest}
(X,\omega)~&\mapsto &\la X,\omega\ra_\star
:=\left\la\bF_1\trc X,\bF_2\trc \omega\right\ra~.
\eea
The $\star$-pairing  is  actually a map $\la~,~\ra_\star : 
\Xi_\star\otimes_\star \Omega_\star \mapsto  \X_\star$, as it satisfies the  $\X_\star$-linearity properties  
\be\label{linst}
\ba{c}
\la X, h\star\omega \ra_\star= \la X  \star h,\omega\ra_\star,
\\[6pt]
\la h\star X,\omega\star k\ra_\star=h\star\la X,\omega\ra_\star\star k~.
\ea
\ee
with $h,k\in\X_\star$. From \ $\la X,dh\ra=X(h)$, \ $g\trc dh=d(g\trc h)$  \ and (\ref{lerest}) it follows that
\be\label{defXst}
\la X,dh\ra_\star
=\left(\bF_1\trc X\right)  \left(\bF_2\trc h\right)=: X_\star(h).
\ee
$X_\star$ is a twisted derivation, i.e. fulfills the deformed Leibniz rule
\be 
 X_\star(h\star h')=X_\star(h)\star h'+[\R_2\trc h]\star [(\R_1\trc X)(h')];          \label{twistLeibnizrule}
\ee
the quickest way to prove the latter is by  the Leibniz rule for $d$ and (\ref{defXst}), (\ref{linst}), (\ref{braid1}).
The compatibility $X\trc \la Y,\omega\ra =\left\la X_{(1)}\trc Y,X_{(2)}\trc\omega\right\ra$
of $\la~,~\ra$ with the Lie derivative (which expresses the diffeomorphism-invariance of the pairing) implies
\be\label{lieagainder}
X\trc \la Y,\omega\ra_\star=\big\la X_{\widehat{(1)}}\trc Y,
X_{\widehat{(2)}}\trc \omega\big\ra_\star~.
\ee
In the commutative case,  for any local moving frame 
(vielbein) $\{e_i\}$ we can build  a dual frame of 1-forms $\{\omega^i\}$, \
$\la e_i,\omega^j\ra=\delta^j_i$, \  and conversely;
in particular $\la \partial_\mu,dx^\lambda\ra=\delta_\mu^\lambda$.
The  exterior derivative decomposes as $d=\omega^i  e_i$.
It is the same in the noncommutative case. The $\star$-dual frame $\{ \theta^i\}$   of $\{ e_i\}$,
\be
\la  e_i,\theta^j\ra_\star=\delta^j_i~,                 \label{dualframes}
\ee 
can be obtained from $\{\omega^i\}$ via a $\X_\star$-linear transformation that is the identity
at zero order in $\nu$ \cite{Aschieri2006}, and the  exterior derivative decomposes
also as $d=\theta^i\star e_{i\star}$.
Using the $\star$-pairing  we can associate to any $1$-form
$\omega$ the left $\X_\star$-linear map $\la~\,,\omega\ra_\star:\Xis\rightarrow \X_\star$, and to any vector field
$X$ the right $\X_\star$-linear map $\la X,~\,\ra_\star: \Omega_\star \rightarrow  \Omega_\star $. 
The maps \  $\mathrm{i}_X:=\la X,~\,\ra_\star$, 
 $\mathrm{i}_\omega:=\la~\,,\omega\ra_\star$ are the simplest twisted insertions (interior products) of a vector field in a form and of a 1-form in 
a multivector field, respectively.
Using the exterior derivative and the twisted insertion, Lie bracket, and Lie derivative 
one can develop  \cite{TWeber2019} a twisted Cartan calculus in complete analogy with the  usual one (see also the thesis \cite{Thomas2018} for more details). As one can extend the commutative pairing to higher tensor powers setting
\be
\la X_p\otimes...\otimes X_1,\omega_1\otimes...\otimes\omega_p\otimes \tau\ra
:=\la X_1,\omega_1\ra \,...\, \la X_p,\omega_p\ra\: \tau \label{ExtPairing}
\ee
for all $X_i\in\Xi$, $\omega_i\in\Omega$, so can one extend \ $\la X,~\,\ra_\star$ \ to the corresponding twisted tensor powers
using the same formula (\ref{lerest}). Properties (\ref{linst}), (\ref{lieagainder}) are preserved.
There is a $\star$-pairing
$\langle~,~\rangle'_\star\colon\Omega_\star\ot_\star\Xi_\star\to\X_\star$
with forms on the left and vector fields on the right. It is related to the
previous $\star$-pairing via $\langle\omega, X\rangle'_\star
=\langle\bR_1\trc X,\bR_2\trc\omega\rangle_\star$ for all $\omega\in\Omega_\star$
and $X\in\Xi_\star$. It is left and right $\X_\star$-linear and satisfies
$\langle\omega\star h,X\rangle'_\star=\langle\omega,h\star X\rangle'_\star$ for
all $h\in\X_\star$. As in the case of $\langle~,~\rangle_\star$ there is an
extension of $\langle~,~\rangle'_\star$ to higher twisted tensor powers.

\subsubsection{Covariant derivative, torsion, curvature, metric}\label{covderivative}

In \cite{AschieriCastellani2009,Aschieri2006}
a  {\it twisted  covariant derivative} (or, synonymously, {\it twisted  connection}) $\nabla^{\f}$ is defined as a collection of maps \ $\nabla^{\f}\!:\!\Xi_\star\! \otimes_{\CC[[\nu]]}\!  \TT^{p,q}_\star \!\to\!\TT^{p,q}_\star$, $X\!\otimes\! T\mapsto \nabla^{\f}_X T$ ($p,q\!\in\!\NN_0$) fulfilling the properties
\bea
&&\nabla^{\f}_X h=\L^\star_X(h),\label{eq04}\\[.35cm]
&&\nabla_{h\star X+h'\star Y}^{\f} T=h\star\nabla_X^{\f} T
+h'\star\nabla_Y^{\f} T,\label{eq03}\\[.35cm]
&&\nabla_{X}^{\f} (T+T')=\nabla_X^{\f} T+\nabla_X^{\f} T'~, \label{eq00}
\eea
\begin{equation}\label{Leibddsg}
\begin{split}
    \nabla^{\f}_X( T\ots T')
    =&
    (\overline{\mathcal{R}}_1\trc\nabla^{\f}_{\overline{\r}'_2\trc X}
    (\overline{\mathcal{R}}''_2\trc T))
    \ots((\overline{\mathcal{R}}_2\overline{\mathcal{R}}'_1
    \overline{\mathcal{R}}''_1)\trc T')
+(\overline{\mathcal{R}}_1\trc T)\ots
    \nabla^{\f}_{\overline{\r}_2\trc X} T',
\end{split}
\end{equation}
\begin{equation}\label{compleredd}
\begin{split}
    \nabla^{\f}_X\la Y,\omega\ra_\star 
    =&\la\overline{\mathcal{R}}_1\trc(\nabla^{\f}_{\overline{\r}'_2\trc X}
    (\overline{\mathcal{R}}''_2\trc Y)),
    (\overline{\mathcal{R}}_2\overline{\mathcal{R}}'_1\overline{\mathcal{R}}''_1)
    \trc\omega\ra_\star 
+\la\overline{\mathcal{R}}_1\trc Y,
    \nabla^{\f}_{\overline{\r}_2\trc X} \omega\ra_\star
\end{split}
\end{equation}
for all \ $X,Y\in\Xi_\star\equiv\TT^{0,1}_\star$, \ $h,h'\in \X_\star\equiv\TT^{0,0}_\star$, \ $ T,T'\in\TT_\star$. \ 
On functions the twisted covariant and Lie derivatives along $X$ coincide, by (\ref{eq04}).
Eq. (\ref{compleredd}) amounts to the compatibility of the action of $\nabla^{\f}_X$ on 1-forms with the pairing of the latter with vector fields.
$\nabla^{\f}$ is left $\X_\star$-linear in the
first argument, by (\ref{eq03}); it is only $\CC[[\nu]]$-linear in the second
argument, by  (\ref{eq00}) and  (\ref{Leibddsg}) with $T=c\in\CC[[\nu]]\subset\X_\star$.
Relation (\ref{Leibddsg}) for  \ $T=h\in\X_\star$  \
 becomes  [by (\ref{eq04})]  the deformed Leibniz rule  
\be
\nabla_X^{\f}(h\star T')
\,=\,\mathcal{L}_X^{\star}(h)\star T'+
(\bR_1\trc h)\star\nabla^{\f}_{\br_2\trc X}T';           \label{ddsDuhv}
\ee
this holds in particular on vector fields $T'=Y$.
Actually, the knowledge of $\nabla^{\f}$ just for $(p,q)=(0,0)$ (i.e. on functions)
and  $(p,q)=(0,1)$ (i.e. on vector fields),
determines its unique extension to all the $(p,q)\in\NN_0^2$:
eq. (\ref{compleredd}) determines the action of $\nabla^{\f}_X$ on 1-forms, while
(\ref{Leibddsg}) allows to extend  $\nabla^{\f}_X$ recursively to all the $\TT^{p,q}_\star$'s,
which consist of combinations of tensor products of 1-forms and vector fields. 

The torsion $\tT^\f_\star $ and the curvature $\rR^\f_\star $ associated to
a connection $\nabla^{\f}$ are  left $\X_\star$-linear maps  \ \
$\tT^\f_\star :\Xi_\star\otimes_\star\Xi_\star\to\Xi_\star$,  \ \
$\rR^\f_\star : \Xi_\star\otimes_\star\Xi_\star\otimes_\star\Xi_\star\to\Xi_\star$ 
\ \ defined by
\bea
\tT^\f_\star(X,Y)&:=&\nabla_{X}^{\f}Y-\nabla_{\r_2\trc Y }^{\f}(\R_1\trc X)
-[X,Y]_{\star}~,\\[.2cm]
\rR^\f_\star(X,Y,Z)&:=&\nabla_{X}^{\f}\nabla_{Y}^{\f}Z-
\nabla_{\r_2\trc Y }^{\f}\nabla_{\r_1\trc X}^{\f}Z-\nabla^{\f}_{[X,Y]_\star} Z~,
\eea
for all $X,Y,Z\in\Xi_\star$.
They fulfill \ $\tT^\f_\star(X,h\star Y)= \tT^\f_\star(X\star h ,Y)$ \ (and similarly for the curvature),  and
the antisymmetry property
\bea\label{Tantysymm}
\tT^\f_\star(X,Y)&=&-\tT^\f_\star(\R_2\trc Y ,\R_1\trc X)~,\\
\rR^\f_\star(X,Y,Z)&=&-\rR^\f_\star(\R_2\trc Y,\R_1\trc X,Z)~.
\label{Rantysymm}
\eea
Thus, one can regard torsion and curvature as elements of the following $\star$-tensor spaces
\be
\tT^\f\in\Omega_\star\wedge_\star\Omega_\star\ots\Xi_\star,\qquad\rR^\f\in\Omega_\star\ots\Omega_\star\wedge_\star\Omega_\star\ots\Xi_\star,     \label{starTRspaces}
\ee
acting on vector fields through the twisted pairing (\ref{lerest}) applied to higher tensor powers, see (\ref{ExtPairing}). We omit the $\star$ in the subscript of
the elements (\ref{starTRspaces}) in order to distinguish them from the corresponding
maps. In other words, for all $X,Y,Z\in\Xi_\star=\Xi[[\nu]]$
\begin{equation}
    \tT^\f_\star(X,Y)=\langle X\ot_\star Y,\tT^\f\rangle_\star
   \quad \text{and}\quad
    \rR^\f_\star(X,Y,Z)=\langle X\ot_\star Y\ot_\star Z,\rR^\f\rangle_\star.
\end{equation}

A {\it metric} is defined as a non-degenerate element 
$\gm$ in the module $\Oms\ots\Oms=\Omega\otimes\Omega[[\nu]]$
that is symmetric, i.e. invariant under the flip $ \tau(\alpha\otimes\beta):=\beta\otimes\alpha$.
Clearly the two decompositions $\gm=\gm^\alpha\otimes \gm_\alpha=\gm^A\ots \gm_A$ 
(sum over repeated indices) 
are related by $\gm^\alpha\otimes \gm_\alpha=\bF_1\trc\gm^A\otimes\bF_2 \trc\gm_A$.
$\gm$ determines the map $\gm_\star:\Xi_\star\ots \Xi_\star\to\X_\star$
defined by
\bea
\gm_\star(X,Y):=\left\la X,\left\la Y,\gm^A\right\ra_\star \star \gm_A\right\ra_\star;    \label{twistedmetric1}
\eea
this fulfills \ $\gm_\star(h\star X,Y)=h\star\gm_\star( X,Y)$
(left $\X_\star$-linearity in $X$) and \ $\gm_\star(X\star  h,Y)=\gm_\star( X,h\star Y)$. \
The twisted {\it Levi-Civita} (LC) connection $\nabla^{\f}$ is a connection fulfilling  $\tT^\f=0$
and $\nabla^{\f}_X\gm=0$ for all $X\in\Xi_\star$, or equivalently, for all $Y,Z\in\Xi_\star$
\begin{equation}
    \mathcal{L}^\star_X \big[\gm_\star(Y,Z) \big]
    =\gm_\star\!\left(\overline{\mathcal{R}}_1\trc(
    \nabla^{\f}_{\overline{\r}'_2\trc X}
    (\overline{\mathcal{R}}''_2\trc Y)),
    (\overline{\mathcal{R}}_2\overline{\mathcal{R}}'_1\overline{\mathcal{R}}''_1)
    \trc Z\right)\\
    +\gm_\star\!\left(\overline{\mathcal{R}}_1\trc Y,
    \nabla^{\f}_{\overline{\r}_2\trc X}Z\right).
\label{twistedMetricCov}
\end{equation}
If a twisted LC connection exists, it is unique by 
\cite{AschieriCastellani2009}~Theorem~5. For equivariant metrics there is an
existence and uniqueness theorem (c.f. \cite{TWeber2019}~Lemma~3.12) of an
equivariant twisted LC connection.
If $\F=\1\ot\1$  (whereby $\star$ becomes the ordinary product, and $\R=\1\ot\1$), the above formulae  reduce to the notions and properties of ordinary connection, torsion, curvature, metric, etc. In particular we recover the characterization
of a LC connection: 
\bea
\ba{ll}
\mbox{torsion-free, i.e. }\qquad  &\tT:=\nabla_{\!\!X} Y- \nabla_{\!Y}  X-[X,Y]=0 
\qquad \forall X,Y\in\Xi,\\[6pt]
\mbox{metric-compatible, i.e.}\:\:   & {\cal L}_Z \big[\gm( X,\!Y)\big]\!-\gm( \nabla_{\!\!Z}  X,\!Y)-
\gm( X,\!\nabla_{\!\!Z} Y)=0\quad \forall X,\!Y,\!Z\in\Xi.  
\ea \quad  \label{LC}
\eea

\medskip
 In the commutative case the Ricci tensor is a contraction of the curvature
tensor, $\ric_{jk}={\rR_{ijk}}^i$.
The twisted  Ricci map is defined as the following contraction of the
curvature:
\be\label{defofric1}
\ric^{\f}_\star(X,Y):=\la \theta^i, \rR^{\f}_\star(e_i,X,Y)\ra_\star'~,
\ee
where sum over $i$ and (\ref{dualframes}) are understood. \
$\la~,~\ra_\star'$ is 
a contraction between forms on the {\it left} and
vector fields on the {\it right}, see Section~\ref{pairingformvect}.
Recall that it is defined by the  pairing 
\bea
\la \omega\,,\,X\ra_\star'&=&\left\la\bF_1\trc \omega\,,\,\bF_2\trc X\right\ra=\left\la\R_2\trc X\,,\,\R_1\trc \omega\right\ra_\star~
\eea
and has  the $\X_\star$-linearity properties
\be\label{linearitypaiop}
\la h\star \omega, X\star k\ra_\star'=h\star\la \omega, X\ra_\star'\star k~,\qquad
\la \omega, h\star X \ra_\star'=
\la \omega\star h, X\ra_\star'~.
\ee
The twisted  Ricci map is well defined because
(\ref{defofric1}) is independent of the choice of the frame $\{e_i\}$ 
(and of the dual frame $\{\theta^i\}$),
and because it is defined as a  $\X_\star$-linear map: 
\bea
\ric^{\f}_\star(h\star X,Y)=h\star \ric^{\f}_\star(X,Y)~,\qquad
\ric^{\f}_\star(X,h\star Y)= \ric^{\f}_\star(X\star h ,Y)~.\label{rictrlinea3}
\eea 
Evaluating the Ricci map on the dual metric
$\gm^{-1}=\gm^{-1\alpha}\otimes\gm^{-1}{}_{\alpha}=\gm^{-1A}\ots\gm^{-1}{}_{A}$ yields the Ricci scalar:
\bea
\mathfrak{R}^{\f}=\ric^{\f}_\star\left(\gm^{-1A},\gm^{-1}{}_{A}\right). \label{RicciScalar}
\eea

\bigskip
We now show how to construct nontrivial twisted deformations $\nabla^{\f}$ of $\nabla$. First, we need some preliminary result in ordinary differential geometry.
It is easy to check that 
\be
\mathfrak{e}:=\left\{g\in\Xi \quad |\quad \left[g,\nabla_{\!X}Y\right]=\nabla_{[g,X]}Y+\nabla_{\!X}[g,Y]\quad\forall X,Y\in\Xi\right\}        \label{equivLieAlg}
\ee
is a Lie subalgebra  of $\Xi$; we shall name it the {\it  equivariance Lie algebra of $\nabla$}. 
It follows
\be\label{eq01}
g\trc \left(\nabla_{\!X}Y\right)=\nabla_{\!g_{(1)}\trc X}\left(g_{(2)}\trc Y\right)
\qquad\forall g\in U\mathfrak{e}.
\ee
\begin{prop}
Given a connection $\nabla$ on $M$ and the associated equivariance Lie algebra 
$\mathfrak{e}$, 
setting
\be
\nabla^{\f}_X T:=\nabla_{\bbf_1\trc X}(\bF_2\trc  T)              \label{twistedNabla}
\ee
with \ $\mathcal{F}\in U\mathfrak{e} \otimes U\mathfrak{e}[[\nu]]$  \ defines a twisted connection along $X\in\Xi_\star$.
It is $U\mathfrak{e}^\f$-equivariant, i.e.
\begin{equation}
    g\trc\nabla^\f_XY
    =\nabla^\f_{g_{\widehat{(1)}}\trc X}(g_{\widehat{(2)}}\trc Y)
\end{equation}
and satisfies the additional deformed  Leibniz rule (with functions multiplying
from the right)
\begin{equation}
    \nabla^\f_X(T\star h)
    =(\nabla^\f_{X}T)\star h
    +(\overline{\R}_1\trc T)\star(\mathcal{L}^\star_{\overline{\r}_{2}\trc X}(h))
\label{rightLeibniz}
\end{equation}
for all $g\in U\mathfrak{e}^\f$, $h\in\X_\star$, $T\in\mathcal{T}_\star$ and $X,Y\in\Xi_\star$.
Furthermore, eqs. (\ref{Leibddsg}), (\ref{compleredd}) boil down to
\bea
&& \nabla^{\f}_X( T\ots T')= \nabla^{\f}_X  T \ots   T' + \bR_1\trc T\ots 
\nabla^{\f}_{\br_2\trc X} T',                    \label{Leibddsg'} \\[.35cm]
&&\nabla^{\f}_X\la Y,\omega\ra_\star =\la\nabla^{\f}_X(Y),\omega\ra_\star + 
\la\bR_1\trc Y,\nabla^{\f}_{\br_2 \trc X} \omega\ra_\star.      \label{compleredd'}
\eea
\label{TwistedNabla}
\end{prop}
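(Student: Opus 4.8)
The plan is to verify the five assertions in turn, exploiting throughout the cocycle identity (\ref{cocycle}), the equivariance identity (\ref{eq01}) for elements of $\mathfrak{e}$, and the fact that $\F\in U\mathfrak{e}\otimes U\mathfrak{e}[[\nu]]$ so that (\ref{eq01}) applies to $\bF_1,\bF_2$ and all the $\R$-matrix legs (which lie in $U\mathfrak{e}^\f$). First I would check that (\ref{twistedNabla}) indeed defines a twisted connection, i.e.\ that it satisfies (\ref{eq04})--(\ref{eq00}), (\ref{Leibddsg}), (\ref{compleredd}). Properties (\ref{eq00}) is linearity and immediate; for (\ref{eq04}) one computes $\nabla^\f_X h=\nabla_{\bbf_1\trc X}(\bF_2\trc h)=(\bF_1\trc X)(\bF_2\trc h)=\mathcal{L}^\star_X(h)$ using that $\nabla$ restricted to functions is the ordinary directional derivative and comparing with (\ref{starLieDer}); for (\ref{eq03}) one uses left $\X$-linearity of $\nabla$ in its first slot together with the definition (\ref{starprod}) of the $\star$-product and the cocycle condition to absorb the extra twist factors.

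The substantive point is (\ref{Leibddsg}). Here I would start from $\nabla^\f_X(T\ots T')=\nabla_{\bbf_1\trc X}\big(\bF_2\trc(T\ots_{\!} T')\big)$, expand the inner $\star$-tensor product via (\ref{startensor}) so that $\bF_2$ acts through the coproduct, apply the undeformed Leibniz rule for $\nabla$ on ordinary tensor products, and then re-express the result in terms of $\ots$ and $\nabla^\f$. The bookkeeping of twist legs is exactly the standard computation that produces the three $\overline{\mathcal R},\overline{\mathcal R}',\overline{\mathcal R}''$ in (\ref{Leibddsg}); but because $\F$ is based on $\mathfrak{e}$, the equivariance property (\ref{eq01}) lets one commute $\bF_2$'s coproduct legs past $\nabla$ \emph{and simplify}: the terms where a twist leg would have produced a nontrivial action on the connection collapse, leaving only one braiding in each term. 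This is precisely the reduction to (\ref{Leibddsg'}); (\ref{compleredd'}) then follows from (\ref{Leibddsg'}) together with the compatibility (\ref{lieagainder}) of the $\star$-pairing with the Hopf action and the definition of $\nabla^\f$ on $1$-forms via (\ref{compleredd}), or more directly by the same expand--simplify argument applied to $\la Y,\omega\ra_\star$. I expect this bookkeeping — keeping track of which $\F$ or $\bF$ copy is which and invoking (\ref{eq01}) at the right moments — to be the main obstacle, though it is routine given the hypotheses.

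For the equivariance statement $g\trc\nabla^\f_XY=\nabla^\f_{g_{\widehat{(1)}}\trc X}(g_{\widehat{(2)}}\trc Y)$, I would act with $g\in U\mathfrak{e}^\f\simeq U\mathfrak{e}[[\nu]]$ on $\nabla_{\bbf_1\trc X}(\bF_2\trc Y)$, use the ordinary equivariance (\ref{eq01}) to distribute $g$ (via its \emph{undeformed} coproduct) over $\bbf_1\trc X$ and $\bF_2\trc Y$, and then convert the undeformed coproduct of $g$ into the twisted one $\Delta_\f$ using the intertwining relation $\Delta_\f(g)=\F\Delta(g)\bF$ from (\ref{inter-2}); the factors of $\F,\bF$ recombine with the existing $\bF$ by the cocycle condition to reproduce $\nabla^\f$ with $g_{\widehat{(1)}},g_{\widehat{(2)}}$ in place. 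Finally, the right-handed Leibniz rule (\ref{rightLeibniz}) is obtained either by applying (\ref{Leibddsg'}) with $T'=h\in\X_\star$ and using (\ref{eq04}) on the second summand, or directly: $\nabla^\f_X(T\star h)=\nabla_{\bbf_1\trc X}\big(\bF_2\trc(T\star h)\big)$, expand $\bF_2\trc(T\star h)$ through the coproduct and the definition of $\star$, use the ordinary right-Leibniz rule $\nabla_Z(T\cdot h)=(\nabla_Z T)h+T\,Z(h)$, and reassemble using the cocycle identity and (\ref{braid1}) to introduce the single $\overline{\mathcal R}$. All four remaining parts are thus variations of the same manipulation, so once the computation for (\ref{Leibddsg}) is done carefully the rest follows with minor changes.
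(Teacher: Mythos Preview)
Your proposal is correct and follows essentially the same route as the paper: both unwrap the definition (\ref{twistedNabla}), apply the undeformed Leibniz rule / linearity / equivariance of $\nabla$, and then repackage the twist legs via the cocycle identity (\ref{bla}) and (\ref{eq01}); the paper in fact proves (\ref{Leibddsg'}) first and then inflates it to (\ref{Leibddsg}) using the already-established $U\mathfrak{e}^\f$-equivariance, just as you outline. One small difference worth noting: for (\ref{rightLeibniz}) the paper does a separate, rather lengthy direct computation (passing through $T\star h=(\bR_1\trc h)\star(\bR_2\trc T)$, applying the left Leibniz rule (\ref{ddsDuhv}), and braiding back), whereas your observation that it drops out of (\ref{Leibddsg'}) by setting $T'=h$ and invoking (\ref{eq04}) is shorter and just as rigorous.
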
 
Of course, nontrivial deformations of this kind are possible only if 
$\mathfrak{e}\neq\{0\}$.

\noindent
We recall that   $Z\!\in\!\Xi$ is a {\it Killing vector field} of a (pseudo)Riemannian manifold $(M,\gm)$ if 
\be
{\cal L}_Z \big[\gm( X,Y)\big]-\gm\big( [Z, X],Y\big)-\gm\big( X,[Z, Y]\big)=0\qquad \forall X,Y\in\Xi,     \label{Killing0}
\ee
or equivalently  if \ $\gm\big( \nabla_{\!\!X} Z,Y\big)+\gm\big( X,\nabla_{\!Y} Z\big)=0$\footnote{The latter condition is obtained taking the difference of (\ref{LC})$_2$,  (\ref{Killing0}),  using
the bilinearity of $\gm$ and (\ref{LC})$_1$}. \
The Killing vector fields close a Lie subalgebra $\k\subset\Xi$; this is the Lie algebra of the group of isometries of $(M,\gm)$ if $M$ is complete.

\begin{prop}\label{prop02} 
The Killing vector fields $\k\subset\Xi$ form a Lie subalgebra of the
equivariance Lie algebra $\mathfrak{e}$ of
the Levi-Civita connection $\nabla$ on a (pseudo)Riemannian manifold $(M,\gm)$. 
For all twists  $\mathcal{F}\in U\k\otimes U\k[[\nu]]$ the map $\gm_\star$ is also right 
$\X_\star$-linear  in the second argument and related to the
undeformed one $\gm:\Xi\otimes\Xi[[\nu]]\to \X[[\nu]]$, $\gm(X,Y):=\left\la X,\left\la Y,\gm^\alpha\right\ra  \gm_\alpha\right\ra$, by
\be
\gm_\star(X,Y)=\gm\left(\bF_1\trc X,\bF_2\trc Y\right),    \label{twistedmetric2}
\ee
and $\nabla^{\f}_X $ is the unique twisted Levi-Civita connection
corresponding to $\gm_\star$.
Torsion and curvature of the twisted Levi-Civita connection remain undeformed as elements of the  tensor spaces 
\bea
0=\tT^\f=\tT\in \Omega \wedge \Omega\otimes\Xi[[\nu]],\qquad\quad
\rR^\f=\rR\in\Omega\otimes\Omega \wedge \Omega\otimes\Xi[[\nu]]
\label{TRspaces}
\eea
and the associated maps $\tT^\f_\star,\rR^\f_\star$ are also right $\X_\star$-linear 
 in the last argument.
Eq. (\ref{twistedMetricCov}) boils down to
\be
\mathcal{L}^\star_X \big[\gm_\star(Y,Z) \big] =\gm_\star(\nabla^{\f}_XY,Z)
    +\gm_\star(\bR_1\trc Y,\nabla^{\f}_{\br_2\trc X}Z).
\label{twistedMetricCov'}
\ee
\label{TwistedLC} 
\end{prop}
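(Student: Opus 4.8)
The plan is to verify the claims one at a time, using the already-established Proposition~\ref{TwistedNabla} (with $\mathfrak{e}$ the equivariance Lie algebra of $\nabla$) as the main engine, and to exploit the fact that a twist based on $\k$ automatically gives equivariance of \emph{both} the connection and the metric. First I would show $\k\subseteq\mathfrak{e}$: given a Killing field $Z$, I want $[Z,\nabla_XY]=\nabla_{[Z,X]}Y+\nabla_X[Z,Y]$. The right-hand side defines a connection $\nabla'$ (one checks $\CC$-linearity in $X$, $\X$-linearity in $X$, and the Leibniz rule in $Y$ using that $\mathcal{L}_Z$ is a derivation), and $\nabla'$ is torsion-free because $\mathcal{L}_Z$ commutes with the Lie bracket and $\tT=0$; moreover $\nabla'$ is metric-compatible precisely because of the Killing equation (\ref{Killing0}) together with the fact that $\mathcal{L}_Z$ is a derivation of $\gm(\cdot,\cdot)$. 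By uniqueness of the Levi-Civita connection, $\nabla'=\nabla$, i.e.\ $Z\in\mathfrak{e}$. That $\k$ is a Lie subalgebra is standard (the Killing equation is preserved under the bracket, or simply: $\k$ is an intersection of $\mathfrak{e}$ with the Lie algebra of metric-preserving fields).

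Next, for a twist $\F\in U\k\ot U\k[[\nu]]$, I would derive the explicit formula (\ref{twistedmetric2}) for $\gm_\star$. The key input is $\k$-equivariance of $\gm$: since $Z\in\k$ satisfies (\ref{Killing0}), one has $g\trc\gm=\epsilon(g)\gm$ for $g\in U\k$, i.e.\ $\gm$ is $U\k$-invariant. Because $\gm$ is an invariant element of $\Omega\ot\Omega[[\nu]]$, the passage from $\gm^\alpha\ot\gm_\alpha$ to $\gm^A\ots\gm_A$ using the relation $\gm^\alpha\ot\gm_\alpha=\bF_1\trc\gm^A\ot\bF_2\trc\gm_A$ (stated in the excerpt) simplifies dramatically; inserting this into the defining formula (\ref{twistedmetric1}) for $\gm_\star$, expanding the two nested $\star$-pairings via (\ref{lerest}), and repeatedly using the cocycle identity together with invariance of $\gm$ collapses the expression to $\gm(\bF_1\trc X,\bF_2\trc Y)$. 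From (\ref{twistedmetric2}) the right $\X_\star$-linearity in the second argument, $\gm_\star(X,h\star Y)=\gm_\star(X\star h,Y)$ being the already-known identity and now also $\gm_\star(X,Y\star h)=\gm_\star(X,Y)\star h$, follows by a short computation using the braiding relation (\ref{braid1}) and $\X_\star$-linearity of the undeformed $\gm$.

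Then I would check that $\nabla^\f$ defined by (\ref{twistedNabla}) is \emph{the} twisted Levi-Civita connection for $\gm_\star$. Existence and the simplified Leibniz/pairing identities (\ref{Leibddsg'}), (\ref{compleredd'}) come for free from Proposition~\ref{TwistedNabla} since $\k\subseteq\mathfrak{e}$. Torsion-freeness $\tT^\f=0$: expand $\tT^\f_\star(X,Y)$ using (\ref{twistedNabla}), the definition of $[\,,\,]_\star$, and the $\R$-matrix reorderings; everything reduces to $\tT=0$ for $\nabla$, so $\tT^\f$ vanishes as a map, hence as an element of the tensor space (\ref{TRspaces}); the same manipulation shows $\tT^\f=\tT$ and $\rR^\f=\rR$ as \emph{elements}, with the twist entering only through how these elements act on $\star$-tensor products of vector fields --- here one uses the $U\k$-invariance (equivalently $U\mathfrak{e}$-invariance suffices for $\rR$) of $\tT$ and $\rR$ as tensor fields to pull the twist out. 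Metric-compatibility (\ref{twistedMetricCov'}): substitute (\ref{twistedmetric2}) and (\ref{twistedNabla}) on both sides and use the undeformed compatibility (\ref{LC})$_2$ together with $\mathcal{L}^\star_X$ vs.\ $\mathcal{L}_X$ and the cocycle condition; the general (\ref{twistedMetricCov}) collapses to (\ref{twistedMetricCov'}) exactly because the extra $\overline{\mathcal{R}}',\overline{\mathcal{R}}''$ factors act trivially on invariant data. Uniqueness is then inherited from \cite{AschieriCastellani2009}~Theorem~5. Right $\X_\star$-linearity of $\tT^\f_\star,\rR^\f_\star$ in the last argument follows from the additional right Leibniz rule (\ref{rightLeibniz}) of Proposition~\ref{TwistedNabla}.

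The main obstacle I anticipate is the bookkeeping in the two cocycle-heavy computations: showing that (\ref{twistedmetric1}) reduces to (\ref{twistedmetric2}), and showing that the general metric-compatibility identity (\ref{twistedMetricCov}) with its three independent copies of $\overline{\mathcal{R}}$ collapses to (\ref{twistedMetricCov'}). Both hinge on the same mechanism --- invariance of $\gm$ (and of $\tT$, $\rR$) under $U\k$ lets one absorb or cancel the extra $\R$-factors via the cocycle and counit axioms --- so the conceptual content is light, but getting the Sweedler indices to match requires care. Everything else is a routine transcription of the undeformed Levi-Civita characterization (\ref{LC}) through the definitions (\ref{twistedNabla}), (\ref{twistedmetric2}).
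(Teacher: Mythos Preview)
Your proposal is essentially on the same track as the paper, with one point of divergence worth flagging. For the inclusion $\k\subseteq\mathfrak{e}$ the paper does not use uniqueness of the Levi-Civita connection; it applies the Koszul formula directly, expanding $2\gm(\nabla_{\xi_{(1)}\trc X}(\xi_{(2)}\trc Y),Z)$ for $\xi\in\k$, using the $U\k$-equivariance of $\gm$ and of the Lie bracket together with cocommutativity and the antipode identities, and concluding by non-degeneracy of $\gm$. Your alternative via uniqueness is a legitimate and more conceptual route, but as written it contains an error: the map $(X,Y)\mapsto\nabla_{[Z,X]}Y+\nabla_X[Z,Y]$ is \emph{not} a connection --- it fails $\X$-linearity in $X$, since $[Z,fX]=Z(f)X+f[Z,X]$ produces an extra term $Z(f)\,\nabla_XY$. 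What makes the argument work is that $A(X,Y):=[Z,\nabla_XY]-\nabla_{[Z,X]}Y-\nabla_X[Z,Y]=(\mathcal{L}_Z\nabla)(X,Y)$ is $\X$-bilinear (a tensor), so $\nabla+A$ is a connection; then your torsion-free and metric-compatibility checks go through and uniqueness gives $A=0$. Equivalently, the flow of $Z$ is a local isometry, hence preserves $\nabla$, and one differentiates.

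The remaining steps --- deriving (\ref{twistedmetric2}) from $U\k$-invariance of $\gm$ and (\ref{twistcond}), right $\X_\star$-linearity of $\gm_\star$, torsion-freeness and metric-compatibility of $\nabla^\f$ by unwinding to the undeformed identities via the cocycle condition, and the appeal to \cite{AschieriCastellani2009} for uniqueness and for $\tT^\f=\tT$, $\rR^\f=\rR$ as tensor elements --- match the paper. One caveat: your claim that right $\X_\star$-linearity of $\tT^\f_\star$ and $\rR^\f_\star$ ``follows from (\ref{rightLeibniz})'' is correct in spirit but understates the work; the paper carries out both computations explicitly, and especially for $\rR^\f_\star(X,Y,Z\star h)$ the required cancellations among the several $\overline{\mathcal{R}}$-terms are not automatic.
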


The proofs of these propositions are in the appendix.
The existence and uniqueness of the twisted Levi-Civita connection
of Proposition~\ref{prop02} was proven in 
\cite{AschieriCastellani2009}~Theorem~6 and Theorem~7.
In NCG right function-linearity
of the curvature  in the last argument is in general not true, see e.g. \cite{DubMadMasMou96,FioMad00EPJ}.
In section \ref{Examples} we will find nontrivial $\k$ for suitably symmetric 
quadrics in $\RR^3$.

\section{Twisted smooth submanifolds of $\RR^n$}
\label{TwistDiffGeomSubman}

\subsection{Differential geometry of manifolds embedded in $\RR^n$}
\label{DiffGeomSubman}


We develop some theoretical tools for  the $(n\!-\!k)$-dimensional submanifolds 
$M_c\subseteq{\cal D}_f\subseteq\RR^n$ 
defined by equations (\ref{DefIdeal}).
Recall definitions (\ref{quotient'}),  (\ref{defeqXis}). We can identify $\XiM\subset\Xi$ with the
Lie subalgebra of smooth vector fields tangent to $M$ at all points, and
$\Xi_t\subset\Xi$ with the Lie subalgebra of smooth vector fields tangent to {\it all} 
$M_c$ ($c\in f({\cal D}_f)$) at all points, 
because $X(f^a)=0$ implies  $X(f_c^a)=0$. Decomposing $X=X^i\partial_i$
and abbreviating $f_i^a:=\partial_i (f^a)$,  $X\in\Xi_t$ amounts to $X^if^a_i=0$ for all $a=1,...,k$;
as the Jacobian matrix $J=(f^a_i)$ has by assumption rank $k$, dim$(\Xi_t)=n-k=:m$.
Henceforth  $\Omega$ will stand for the $\X$-bimodule of differential 1-forms on 
${\cal D}_f$, i.e. the dual one of $\Xi$.
We also define a $\X$-subbimodule $\Omega_{\scriptscriptstyle{\perp}}\subset \Omega$ of 1-forms by
\be
 \Omega_{\scriptscriptstyle{\perp}}:=\{\omega\in\Omega \:|\: \langle \Xi_t,\omega\rangle=0\}.
 \label{defOmegadiamond}
\ee 
Let $\Xi_t^\bullet=\bigwedge\!\!^\bullet\Xi_t$,  $\XiM^\bullet=\bigwedge\!\!^\bullet\XiM$,  
 $\Omega^\bullet_{\scriptscriptstyle{\perp}}=\bigwedge\!\!^\bullet\Omega_{\scriptscriptstyle{\perp}}$ be the corresponding exterior algebras. 
\begin{prop}
The $\Xi_t,\Xi_\C,\XiM$  defined in (\ref{quotient'}),  (\ref{defeqXis})
are Lie subalgebras of $\Xi$. \ $\Xi_{\C\C}$ is an ideal in $\Xi_\C$. \
$\Xi_t^\bullet$,  $\XiM^\bullet$, $\Omega^\bullet_{\scriptscriptstyle{\perp}}$ are
$U\Xi_t$-equivariant $\X$-bimodules. \ $\Omega_{\scriptscriptstyle{\perp}}$ can be explicitly decomposed as
\vskip-.3cm
\be
\Omega_{\scriptscriptstyle{\perp}}=\bigoplus_{a=1}^k\X\mathrm{d}f^a=\bigoplus_{a=1}^k\mathrm{d}f^a\X.  \label{decoOmegadiamond}
\ee
\label{Propdiamond}
\end{prop}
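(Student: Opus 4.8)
The plan is to prove each assertion in turn, treating the decomposition \eqref{decoOmegadiamond} as the main piece of work and deducing the equivariance statements from it (or directly from the definitions). First I would check that $\Xi_t,\Xi_\C,\XiM$ are Lie subalgebras. For $\Xi_\C$: if $X,Y\in\Xi_\C$ then $X(f^a),Y(f^a)\in\C$, and since $\C$ is an ideal and $X,Y$ preserve $\C$ (this is the equivalent characterization of $\Xi_\C$ noted in the introduction), $[X,Y](f^a)=X(Y(f^a))-Y(X(f^a))\in\C$; the same Jacobi-type bookkeeping handles $\Xi_t$ (replace $\C$ by $\{0\}$) and shows $\XiM=\Xi_\C/\Xi_{\C\C}$ inherits a bracket, once one checks $\Xi_{\C\C}$ is an ideal in $\Xi_\C$: for $X\in\Xi_\C$, $Z\in\Xi_{\C\C}$ and any $h\in\X$, $[X,Z](h)=X(Z(h))-Z(X(h))$; $Z(h)\in\C$ so $X(Z(h))\in\C$ (as $X$ preserves $\C$), and $Z(X(h))\in\C$ trivially, so $[X,Z]\in\Xi_{\C\C}$. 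Using $\Xi_{\C\C}=\bigoplus_a f^a\Xi$ one can also verify this ideal property concretely.

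Next, the decomposition \eqref{decoOmegadiamond}. The inclusion $\supseteq$ is immediate: for $X\in\Xi_t$ one has $\langle X,\mathrm{d}f^a\rangle=X(f^a)=0$, so each $\mathrm{d}f^a\in\Omega_\perp$, and $\Omega_\perp$ is an $\X$-bimodule, giving $\bigoplus_a\X\,\mathrm{d}f^a\subseteq\Omega_\perp$ (the sum is direct because the $\mathrm{d}f^a$ are pointwise linearly independent, $J$ having rank $k$). For $\subseteq$: fix $\omega=\omega_i\,\mathrm{d}x^i\in\Omega_\perp$. Working pointwise, at each $x\in\mathcal{D}_f$ the condition $\langle\Xi_t,\omega\rangle=0$ together with the fact (established in the text, $\dim\Xi_t=n-k$, via the rank of $J$) that the tangent spaces $T_x M_c=\ker J(x)$ are exactly the fibres of $\Xi_t$, forces $\omega_x$ to annihilate $\ker J(x)$, hence $\omega_x$ lies in the row span of $J(x)$, i.e. $\omega_x=\sum_a h^a(x)\,\mathrm{d}f^a_x$ for unique coefficients $h^a(x)$. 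The point requiring care is smoothness of the $h^a$: since $J$ has constant rank $k$, one can locally choose $k$ columns of $J$ forming an invertible $k\times k$ block and solve for $(h^a)$ by Cramer's rule, exhibiting the $h^a$ as smooth functions; uniqueness then patches these local solutions into global $h^a\in\X$. The right-module equality $\bigoplus_a\mathrm{d}f^a\X$ is the same statement since $\X$ is commutative. This pointwise-linear-algebra-plus-constant-rank argument is the main obstacle — everything else is formal.

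Finally, the equivariance statements. $\Xi_t$ is a $U\Xi_t$-module under the adjoint action (bracket), which is well defined since $\Xi_t$ is a Lie subalgebra; it is an $\X$-bimodule (this is recalled in the text) and the Lie derivative satisfies the Leibniz rule $\mathcal{L}_g(fX)=g(f)X+f\mathcal{L}_g X$, which is exactly the $U\Xi_t$-equivariance of the bimodule structure. For the exterior powers $\Xi_t^\bullet=\bigwedge^\bullet\Xi_t$ and $\XiM^\bullet$ one extends $\mathcal{L}_g$ as a derivation of the wedge product, and invokes the remark in the excerpt that an $\X$-subbimodule of a $U\Xi_t$-equivariant $\X$-bimodule, closed under the Hopf action, is again $U\Xi_t$-equivariant; $\Xi_t^\bullet\subset\Xi^\bullet$ and $\Omega_\perp^\bullet\subset\Omega^\bullet$ are such subbimodules, so it suffices to check closure under $\mathcal{L}_g$, $g\in\Xi_t$. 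For $\Omega_\perp$: if $\omega\in\Omega_\perp$ and $g\in\Xi_t$, then for any $X\in\Xi_t$, $\langle X,\mathcal{L}_g\omega\rangle=g\langle X,\omega\rangle-\langle[g,X],\omega\rangle=0-0=0$ since $[g,X]\in\Xi_t$; hence $\mathcal{L}_g\omega\in\Omega_\perp$, and closure under wedge powers follows from the derivation property. This establishes all the equivariance claims, completing the proof.
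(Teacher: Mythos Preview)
Your proof is correct and follows essentially the same approach as the paper. The only notable difference is in the $U\Xi_t$-equivariance of $\Omega_\perp$: you argue via the pairing identity $\langle X,\mathcal{L}_g\omega\rangle=g\langle X,\omega\rangle-\langle[g,X],\omega\rangle$, while the paper instead uses the decomposition \eqref{decoOmegadiamond} already established and computes $g\trc\omega=g(\omega_a)\,\mathrm{d}f^a+\omega_a\,\mathrm{d}(g(f^a))=g(\omega_a)\,\mathrm{d}f^a\in\Omega_\perp$ directly; both are equally valid, and you are somewhat more explicit than the paper about the smoothness of the coefficients $h^a$ in the decomposition.
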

\vskip-.5cm

\bp{} $\omega=\omega_adf^a$ implies
$\langle X,\omega\rangle=\langle X,df^a\rangle\omega_a=X(f^a)\omega_a=0$.
Conversely, in any basis $\{X_\alpha=X_\alpha^i\partial_i\}_{\alpha=1}^m$ of $\Xi_t$
the $n\times m$ matrix $(X_\alpha^i)$ has rank $m$ and  fulfills $f^a_iX_\alpha^i=0$;
decomposing $\omega=\omega_idx^i$,
$\langle \Xi_t,\omega\rangle=0$ amounts to $\langle X_\alpha,\omega\rangle=X_\alpha^i\omega_i=0$
for all $\alpha$, and this linear system of $m$ independent equations admits
only  solutions $\omega_i =f^a_i  \omega_a$, $\omega_a\in\X$, whence $\omega=\omega_adf^a$. 
This proves (\ref{decoOmegadiamond}).
For all  $X,W\in\Xi_t$,  $Y\in\Xi_\C$, $Z\in\Xi_{\C\C}$,  $a=1,...,k$, $h\in\X$,
$\omega\in\Omega_{\scriptscriptstyle{\perp}}$ we find  
\bea
\ba{lll}
(X\trc W)(f^a)=[X,W](f^a)=X\big(W(f^a)\big)-W\big(X(f^a)\big)=0
&\quad\Rightarrow\quad & X\trc W\in\Xi_t,\\[4pt]
(X\trc Y)(f^a)=[X,Y](f^a)=X\big(Y(f^a)\big)\!-\!Y\big(X(f^a)\big)=X\big(Y(f^a)\big)\!\in\!\C
&\quad\Rightarrow\quad &  X\trc Y\in\Xi_\C,\\[4pt]
(X\trc Z)(h)=[X,Z](h)=X\big(Z(h)\big)-Z\big(X(h)\big)=X\big(Z(h)\big)\in\C 
&\quad\Rightarrow\quad &  X\trc Z\in\Xi_{\C\C},
\ea\nn[1pt]
X\trc\omega= X(\omega_a)\, df^a+\omega_a \,X\trc\big(d f^a\big)
= X(\omega_a)\, df^a+\omega_a \,d[X(f^a)]= X(\omega_a)\, df^a\in\Omega_{\scriptscriptstyle{\perp}}. 
\qquad\qquad\quad\nonumber
\eea
This implies in turn  that   $g\trc W\in\Xi_t$,  $g\trc Y\in\Xi_\C$,
 $g\trc Z\in\Xi_{\C\C}$, $g\trc\omega\in\Omega_{\scriptscriptstyle{\perp}}$
  for all $g\in U\Xi_t$, so that $\Xi_t,\Xi_\C,\Xi_{\C\C},\XiM,\Omega_{\scriptscriptstyle{\perp}}$ 
are $U\Xi_t$-equivariant $\X$-bimodules, and also $\Xi_t^\bullet,\XiM^\bullet,\Omega^\bullet_{\scriptscriptstyle{\perp}}$ are. \ep

\subsubsection{Metric, Levi-Civita connection, intrinsic and extrinsic curvatures}
\label{Metric}

We now discuss  $\Xi_t,\Omega_{\scriptscriptstyle{\perp}}$ as addends
in the decomposition of $\Xi,\Omega$ with respect to a metric.
Consider a (non-degenerate) metric \ $\gm\equiv\gm^\alpha\ot\gm_\alpha\in\Omega\otimes\Omega$ on ${\cal D}_f$ (actually, the following discussion is valid on any smooth manifold)
and its dual \ $\gm^{-1}=\gm^{-1\alpha}\ot\gm^{-1}_\alpha\in\Xi\otimes\Xi$. \ We recall that
\begin{equation}
{\cal G}:\Xi\to\Omega,    \qquad X\mapsto \omega_X=\langle X,\gm^\alpha\rangle\gm_\alpha 
\end{equation}
is an isomorphism of $\X$-bimodules with inverse given by
$\omega\mapsto X_\omega=\langle\gm^{-1\alpha},\omega\rangle\gm^{-1}_\alpha $. In fact
$X_{\omega_X}=\langle\gm^{-1\alpha},\omega_X\rangle\gm^{-1}_\alpha
=\langle\gm^{-1\alpha},\langle X,\gm^\beta\rangle\gm_\beta\rangle\gm^{-1}_\alpha
=X$ for all $X\in\Xi$ and
$\omega_{X_\omega}=\langle X_\omega,\gm^\alpha\rangle\gm_\alpha
=\langle\langle\gm^{-1\beta},\omega\rangle\gm^{-1}_\beta,\gm^\alpha\rangle\gm_\alpha
=\omega$ for all $\omega\in\Omega$. It follows that for all  $Y\in \Xi$, $\alpha\in\Omega$,
\bea
\gm(Y,X)=\la Y,\omega\ra,\qquad\qquad \gm^{-1}(\omega,\alpha)=\la X,\alpha \ra,
\label{g&pairing}
\eea
whenever $\omega ={\cal G}(X)$, or equivalently $X={\cal G}^{-1}(\omega)$.
Let us now introduce the $\X$-subbimodules 
\bea
\Xip:=\{X\in\Xi \:\:\:|\:\:\: \gm(X,\Xi_t)=0\}, \qquad
\Omega_t:=\{\omega\in\Omega \:\:\: |\:\:\:  \gm^{-1}(\omega,\Omp)=0  \},
\eea
and let ${\cal D}_f'\subseteq{\cal D}_f$ be the open subset where the restriction
\begin{equation}\label{D_f'}
\gmp^{-1}:=\gm^{-1}|_{\Omp\otimes\Omp}\colon\Omp\otimes\Omp\to\X
\end{equation}
is non-degenerate. 
If $\gm$ is Riemannian, then \ ${\cal D}_f'={\cal D}_f$. \ For simplicity, henceforth
we shall denote the restrictions of \ \ $\Xi,\,\Xi_t,\,\Xip\,\Omega,\,\Omp,\,\Omega_t$ to \ ${\cal D}_f'$ \ by the same symbols, and  
 by $\k\subset\Xi_t$ the Lie subalgebra of Killing vector fields of $\gm$ that are also tangent to the submanifolds $M_c\subset {\cal D}_f'$.

\begin{prop} 
The Lie algebra $\Xi$ of 
smooth vector fields and the $\X$-bimodule $\Omega$ of 1-forms
on ${\cal D}_f'$ split into the direct sums of  $\X$-subbimodules 
\bea
\Xi=\Xi_t\oplus\Xip,\qquad\quad\Omega=\Omega_t\oplus\Omp,
\label{decoXi'}
\eea
orthogonal with respect to the metric $\gm$ and $\gm^{-1}$ respectively.
 $\Xi_t$ is a Lie subalgebra of $\Xi$. 
$\Omega_t$  is  orthogonal to  $\Xip$
with respect to the pairing: \
$ \Omega_t=\{\omega\in\Omega \:\:\: |\:\:\:  \la \Xip,\omega\ra=0  \} $. \ 
Also the  restrictions  of $\gm^{-1}$ to the {\rm tangent} forms and of $\gm$ to 
the {\rm tangent} and {\rm normal} vector fields
\be
\gm^{-1}_t:=\gm^{-1}|_{\Omega_t\ot\Omega_t}\colon\Omega_t\ot\Omega_t\to\X,  \qquad \gmp:=\gm|_{\Xip\otimes\Xip}:\Xip\otimes\Xip\to\X,
\ee
\be
    \gm_t:=\gm|_{\Xi_t\otimes\Xi_t}\colon\Xi_t\otimes\Xi_t\to\X \label{eq21}
\ee
are non-degenerate.
The orthogonal projections
\ $\Pp: \Xi\to \Xip$, \ $\Pt:\Xi\to \Xi_t$,  \ $\Pp:\Omega\to \Omp$, \ $\Pt:\Omega\to \Omega_t$ are
uniquely extended as projections to the bimodules of multivector fields and higher rank forms through the rules
$\Pp(\omega\omega')=\Pp(\omega) \Pp(\omega')$, $\Pt(\omega\omega')=\Pt(\omega) \Pt(\omega')$,...:
\bea
\Pp: \Omega^p\to \Omega^p_\perp, \quad \Pt:  \Omega^p\to \Omega^p_t, \quad
\Pp: \bigwedge\nolimits^p\Xi\to \bigwedge\nolimits^p\Xip, \quad \Pt: \bigwedge\nolimits^p\Xi\to \bigwedge\nolimits^p\Xi_t.
\eea
\ $\Xi_t,\Xip,\Omega_t,\Omp$, their exterior powers and the projections $\Pp,\Pt$  are $U\mathfrak{k}$-equivariant. 
\label{XiOmegaDeco}
\end{prop}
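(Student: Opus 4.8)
The plan is to establish the direct-sum decompositions first, then deduce the non-degeneracy of all the listed restrictions, then handle the pairing-orthogonality characterization of $\Omega_t$, then the extension of the projections to exterior powers, and finally the $U\k$-equivariance. For the splitting $\Xi=\Xi_t\oplus\Xip$, I would argue pointwise on ${\cal D}_f'$. At each point $p$, $\Xi_t$ spans an $m$-dimensional subspace $T_p$ of the $n$-dimensional tangent space (this dimension count is already in the text, from the rank-$k$ assumption on $J$), and $\Xip$ is by definition the $\gm$-orthogonal complement relative to $\Xi_t$. By definition of ${\cal D}_f'$, the restriction $\gm|_{T_p\times T_p}$ — equivalently $\gmp^{-1}$ via the isomorphism ${\cal G}$ — is non-degenerate there, hence $T_p$ and its orthogonal complement $T_p^\perp$ span the whole tangent space and intersect trivially; so $\Xi=\Xi_t\oplus\Xip$ as $\X$-modules, and since $\Xi_t,\Xip$ are $\X$-subbimodules (for $\Xi_t$ this is (\ref{defeqXis}), for $\Xip$ it follows from $\X$-bilinearity of $\gm$), the sum is a direct sum of $\X$-subbimodules. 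Applying ${\cal G}$ and using (\ref{g&pairing}) transports this to $\Omega=\Omega_t\oplus\Omp$, with $\Omega_t={\cal G}(\Xi_t)$, $\Omp={\cal G}(\Xip)$; here one checks $\Omp$ as just defined coincides with the $\Omega_{\scriptscriptstyle{\perp}}$ of (\ref{defOmegadiamond}) by using (\ref{g&pairing}) to translate $\gm^{-1}(\omega,\Omp)=0$ into $\la\Xi_t,\omega\ra=0$, i.e. the two descriptions of the normal forms agree.

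Non-degeneracy of $\gm_t$, $\gmp$, $\gm^{-1}_t$ is then immediate: $\gmp$ is non-degenerate on ${\cal D}_f'$ essentially by the definition (\ref{D_f'}) of ${\cal D}_f'$ together with the isomorphism ${\cal G}|_{\Xip}\colon\Xip\to\Omp$; $\gm_t$ is non-degenerate because $\Xi_t$ is the $\gm$-orthogonal complement of $\Xip$ and the whole $\gm$ is non-degenerate, so a vector in $\Xi_t$ that pairs to zero with all of $\Xi_t$ also pairs to zero with all of $\Xip$ by definition of $\Xip$, hence with all of $\Xi$, hence vanishes; $\gm^{-1}_t$ is the image of $\gm_t$ under ${\cal G}$ and inherits non-degeneracy. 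The pairing-orthogonality claim $\Omega_t=\{\omega\mid\la\Xip,\omega\ra=0\}$ follows the same way: $\omega={\cal G}(X)$ with $X\in\Xi_t$ satisfies $\la Y,\omega\ra=\gm(Y,X)=0$ for $Y\in\Xip$ by definition of $\Xip$; conversely if $\la\Xip,\omega\ra=0$ then $\gm^{-1}({\cal G}^{-1})$-translating shows the dual vector lies in the $\gm$-orthogonal complement of $\Xip$, which is $\Xi_t$, so $\omega\in\Omega_t$.

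For the projections, $\Pt,\Pp$ on $\Xi$ and $\Omega$ are simply the two projectors attached to the direct sums just established, hence $\X$-bilinear idempotents with complementary images; their extension to $\bigwedge^p$ and $\Omega^p$ is defined by declaring them algebra maps on the exterior algebras ($\Pp(\omega\omega')=\Pp(\omega)\Pp(\omega')$ etc.), and one checks this is consistent and idempotent — concretely, choose in each chart a homogeneous basis adapted to the splitting (a $\k$-free choice is enough here; refined $\k$-invariant frames are the subject of later sections) and verify that wedge monomials decompose accordingly. Finally, $U\k$-equivariance: $\k\subset\Xi_t$ consists of Killing vector fields, so for $g\in\k$ the Lie derivative $\L_g$ preserves $\gm$ (this is exactly (\ref{Killing0})), hence preserves $\gm$-orthogonality; combined with Proposition~\ref{Propdiamond} (which already gives $U\Xi_t$-equivariance, a fortiori $U\k$-equivariance, of $\Xi_t$, $\XiM$, $\Omega^\bullet_{\scriptscriptstyle{\perp}}$ and their exterior powers) this shows $\L_g$ maps $\Xip$ to $\Xip$ and $\Omega_t$ to $\Omega_t$, so all four bimodules and their exterior powers are $U\k$-equivariant; equivariance of $\Pt,\Pp$ then follows because a $U\k$-module endomorphism that is a projection onto a $U\k$-submodule along a $U\k$-submodule commutes with the action. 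The main obstacle is the clean handling of the locus ${\cal D}_f'$: one must be careful that all non-degeneracy statements are being asserted only there and that restricting to ${\cal D}_f'$ (denoted by the same symbols, per the paper's convention) does not disturb the bimodule or equivariance structure — it does not, since ${\cal D}_f'$ is open and $\k$, being Killing, preserves the degeneracy locus of $\gmp^{-1}$, hence preserves ${\cal D}_f'$.
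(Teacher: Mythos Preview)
Your proposal is correct and closely tracks the paper's proof: the use of ${\cal G}$ and (\ref{g&pairing}) to relate $\Xi$ and $\Omega$, the appeal to Proposition~\ref{Propdiamond} for $U\k$-equivariance of $\Xi_t$ and $\Omp$, and the deduction of equivariance of $\Xip,\Omega_t$ from the Killing property are all as in the paper (which writes the last step in Hopf notation, $\gm(\xi\trc X,Y)=\xi_{(1)}\trc\gm(X,S(\xi_{(2)})\trc Y)=0$). The one substantive difference is how the direct sum $\Xi=\Xi_t\oplus\Xip$ is obtained: you argue abstractly via pointwise linear algebra, whereas the paper defers to Section~\ref{Bases}, where the projections are constructed explicitly from the matrix $K=\E^{-1}$ and the normal frame $\Np^a=K^{ab}f^{bi}\partial_i$. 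Your route is more self-contained at this stage; the paper's route has the advantage of producing the explicit $\k$-invariant bases that are reused throughout the later computations. Two minor imprecisions to fix: (i) ${\cal G}$ identifies $\gmp^{-1}$ with $\gm|_{\Xip\times\Xip}$, not with $\gm|_{T_p\times T_p}$ as you write; the equivalence of their non-degeneracies is the separate linear-algebra fact that a subspace of a non-degenerate quadratic space is non-degenerate iff its orthogonal complement is; (ii) in your check that ${\cal G}(\Xip)$ coincides with $\Omp$ of (\ref{defOmegadiamond}), the condition $\gm^{-1}(\omega,\Omp)=0$ you cite defines $\Omega_t$, not $\Omp$---what you actually need is that $\omega={\cal G}(X)$ satisfies $\la\Xi_t,\omega\ra=\gm(\Xi_t,X)=0$ iff $X\in\Xip$.
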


\bp{}
On ${\cal D}_f'$ one can build unique projections $\Pp:\omega\in\Omega\to \omega_{\scriptscriptstyle{\perp}}\in\Omp$, 
$\Pt=\id-\Pp:\omega\in\Omega\to \omega_t\in\Omega_t$, $\Pp:X\in\Xi\to X_{\scriptscriptstyle{\perp}}\in\Xip$, 
$\Pt=\id-\Pp:X\in\Xi\to X_t\in\Xi_t$ such that the decompositions (\ref{decoXi'}) hold, 
see section \ref{Bases}.
By Proposition \ref{Propdiamond},
$\Xi_t,\Omega_{\scriptscriptstyle{\perp}}$ are in particular   
$U\mathfrak{k}$-equivariant $\X$-subbimodules. 
Also $\Xip,\Omega_t$ are $U\mathfrak{k}$-equivariant $\X$-subbimodules,
by the   $U\mathfrak{k}$-equivariance and  $\X$-linearity in both arguments of 
$\gm(\cdot,\cdot)$ and  of $\la\cdot,\cdot\ra$:  
if $X\in\Xi_{\scriptscriptstyle{\perp}}$ then
$\gm(\xi\trc X,Y)=\xi_{(1)}\trc\gm(X,S(\xi_{(2)})\trc Y)=0$ for all $\xi\in U\mathfrak{k}$ and $Y\in\Xi_t$ by the $U\mathfrak{k}$-equivariance of
$\Xi_t$; hence 
$\xi\trc X\in\Xi_{\scriptscriptstyle{\perp}}$ and $\Xip$ is $U\mathfrak{k}$-equivariant. Similarly one shows that $\Omega_t$ 
is $U\mathfrak{k}$-equivariant. Consequently, also
$\Pt,\Pp$ acting on $\Xi,\Omega$,  as well as  their extensions  to 
$\bigwedge\nolimits^\bullet\Xi,\Omega^\bullet$, are
$\X$-linear in all arguments and  $U\mathfrak{k}$-equivariant; for instance, the $U\mathfrak{k}$-equivariance on $\Omega$ follows from
 $\mathrm{pr}_t(\xi\trc \omega)
=\mathrm{pr}_t(\xi\trc \omega_t+\xi\trc \omega_{\scriptscriptstyle{\perp}})
=\xi\trc \omega_t
=\xi\trc\mathrm{pr}_t(\omega)$ for all $\xi\in U\mathfrak{k}$ and 
$\omega=\omega_t+\omega_{\scriptscriptstyle{\perp}}\in\Omega$.
Now, note that by (\ref{g&pairing})$_1$ ${\cal G},{\cal G}^{-1}$ map 
$\Xip,\Omp$ into each other and  $\Xi_t,\Omega_t$ into each other. In fact,
\begin{itemize} 

\item $X\in \Xip$ implies $\la \Xi_t,\omega\ra=\gm(\Xi_t,X)=0$, whence
 $\omega\in\Omp$; and vice versa.

\item $X\in \Xi_t$ implies $\la \Xip,\omega\ra=\gm(\Xip,X)=0$, whence
 $\omega\in\Omega_t$; and vice versa.

\end{itemize}
Then, by (\ref{g&pairing})$_2$,
if $\alpha\in\Omega_t$, then  for all $X\in\Xip$  it is
$\la X,\alpha\ra=\gm^{-1}(\omega_X,\alpha)=0$, because
$\omega_X\in\Omp$;
conversely,  if $\la \Xip,\alpha\ra=0$, then 
for all $\omega\in\Omp$ it is
$\gm^{-1}(\omega,\alpha)=\la X_\omega,\alpha \ra=0$, because $X_\omega\in\Xip$.
So we have proved that  \
$ \Omega_t=\{\omega\in\Omega \:\:\: |\:\:\:  \la \Xip,\omega\ra=0 \}$. \
Next, let $\omega\in\Omega_t$; then $X_\omega\in\Xi_t$.
By (\ref{g&pairing})$_2$, $\gm_t^{-1}(\omega,\Omega_t)=0$ implies  $\la X_\omega,\Omega_t\ra=0$ and therefore also $\la X_\omega,\Omega \ra=0$, whence by  
the non-degeneracy of the pairing, $X_\omega=0$, and in turn $\omega=0$,
namely $\gm_t^{-1}$ is non-degenerate. 
Since  $\gm$ is non-degenerate, for all $X\in\Xi_t$ there is $Y\in\Xi$,
and hence also $Y_t\in\Xi_t$, such that $0\!\neq\!\gm(X,Y)\!=\!\gm(X,Y_t)$: \
$\gm_t$ is non-degenerate. Similarly one proves that also $\gmp$ is.
\ep

\noindent
{\bf Remarks:} \ i) The non-degeneracy of $\gmp^{-1}$ (or, equivalently, of  
$\gm_t^{-1}$) is not only sufficient, but also necessary
to ensure that $\Omp\cap\Omega_t=\{0\}$. In fact,
if $\gmp^{-1}$ is degenerate  there is a nonzero $\omega\in\Omp$
such that $0=\gmp^{-1}(\omega,\Omp)=\gm^{-1}(\omega,\Omp)$,
hence $\omega$ belongs to $\Omega_t$ as well. \
ii) Similarly, the non-degeneracy of $\gmp$ (or, equivalently, of  
$\gm_t$) is necessary for $\Xip\cap\Xi_t=\{0\}$. \
iii) While $\Xi_t$ is a Lie subalgebra of $\Xi$, in general $\Xi_{\scriptscriptstyle{\perp}}$ is not. \
iv) In general $\Xi_{\scriptscriptstyle{\perp}},\Omega_t$, and therefore also the orthogonal projections $\Pp,\Pt$, are not $U\Xi_t$-equivariant; for this reason in section \ref{TwistedMetric} we are able to deform (pseudo)Riemannian geometry only via twists based on $\k\subset\Xi_t$. \ v) We refer to elements of 
$\Xi_{\scriptscriptstyle{\perp}},\Omp $  and $\Omega_t$ as \textit{normal}  vector fields, \textit{normal} 1-forms and  \textit{tangent} 1-forms.

\medskip
As said, we  identify 
$\Xi_t\subset\Xi$ with the Lie subalgebra of smooth vector fields tangent to {\it all} $M_c$ 
($c\in f({\cal D}_f)$) at all points, because $X(f^a)=0$ implies  $X(f_c^a)=0$;
and $\XiM\subset\Xi$ defined in (\ref{quotient'}) with  the Lie subalgebra of smooth vector fields tangent to $M$ at all points. 
Similarly, we can identify 
$\Omega_t$ with  the subbimodule of $\Omega$ tangent to {\it all} $M_c$ 
($c\in f({\cal D}_f)$) at all points.  \
We find \ $\Omega_t\subset\Omega_\C:=\left\{\omega\in\Omega \:\:\: |\:\:\:  \la \Xip,\omega\ra\subset\C  
\right\}. 
$ \
Let \ $\Omega_{\C\C}:=\bigoplus_{a=1}^k f^a\Omega=\bigoplus_{a=1}^k \Omega f^a\subset\Omega_\C$. \
It fulfills \ $ \la \Xi,\Omega_{\C\C}\ra\subset\C$. \
We can identify the $\XM$-bimodule of 1-forms $\Omega_{\scriptscriptstyle M}$ on $M$ with the quotient
\be
\Omega_{\scriptscriptstyle M}=\Omega_\C/\Omega_{\C\C}=\left\{[\omega]=\omega+\Omega_{\C\C}\:\: |\:\: \omega\in\Omega_\C \right\}.
\ee

\begin{prop}\label{prop05}
For all $X\in\Xi_\C$, $\omega\in\Omega_\C$, the tangent projections $X_t\in\Xi_t$,
$\omega_t\in\Omega_t$  belong to $[X]\in\XiM$ and $[\omega]\in\Omega_{\scriptscriptstyle M}$ respectively; similarly for multivector fields and higher rank forms.
 \label{XCrepr}
\end{prop}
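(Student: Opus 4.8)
The plan is to establish the two base cases, namely vector fields and $1$-forms, by reducing each to a single module inclusion; the higher-rank assertions will then follow by the same mechanism. Writing $X_\perp:=\Pp(X)=X-X_t\in\Xip$ and $\omega_\perp:=\Pp(\omega)=\omega-\omega_t\in\Omp$ for the normal components furnished by the orthogonal decompositions (\ref{decoXi'}), I would first observe that $X_t\in\Xi_t\subset\Xi_\C$ (since $X_t(f^a)=0$) and, dually, that $\omega_t\in\Omega_t$ lies in the kernel of $\langle\Xip,\cdot\rangle$ by Proposition~\ref{XiOmegaDeco}. Hence it suffices to prove $X_\perp\in\Xi_{\C\C}$ and $\omega_\perp\in\Omega_{\C\C}$: then $X_t=X-X_\perp\in X+\Xi_{\C\C}=[X]$ and $\omega_t=\omega-\omega_\perp\in\omega+\Omega_{\C\C}=[\omega]$. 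Equivalently, the task is to show $\Xip\cap\Xi_\C\subseteq\Xi_{\C\C}$ and $\Omp\cap\Omega_\C\subseteq\Omega_{\C\C}$.

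For the first inclusion I would use that, by Proposition~\ref{XiOmegaDeco}, the $\X$-bimodule isomorphism $\mathcal{G}^{-1}$ carries $\Omp$ onto $\Xip$, and that $\Omp=\bigoplus_{a}\X\,\mathrm{d}f^a$ by Proposition~\ref{Propdiamond}. Therefore every $X_\perp\in\Xip$ can be written $X_\perp=\sum_a\lambda_a N^a$ with $\lambda_a\in\X$ and $N^a:=\mathcal{G}^{-1}(\mathrm{d}f^a)$, using the $\X$-linearity of $\mathcal{G}^{-1}$. By (\ref{g&pairing}) one has $N^a(f^b)=\langle N^a,\mathrm{d}f^b\rangle=\gm^{-1}(\mathrm{d}f^a,\mathrm{d}f^b)=:G^{ab}$, so on ${\cal D}_f'$ one obtains $X_\perp(f^b)=\sum_a\lambda_a G^{ab}$. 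The $k\times k$ matrix $(G^{ab})$ is the Gram matrix of $\gmp^{-1}$ in the basis $\{\mathrm{d}f^a\}$ of $\Omp$, hence invertible over $\X$ on ${\cal D}_f'$ by the very definition (\ref{D_f'}) of that set. If moreover $X_\perp\in\Xi_\C$, then $X_\perp(f^b)\in\C$ for all $b$; applying $(G^{ab})^{-1}$, whose entries lie in $\X$ and hence preserve the ideal $\C$, forces $\lambda_a\in\C$ for each $a$. Theorem~\ref{DecoC} then gives $\lambda_a=\sum_b h^{ab}f^b$ with $h^{ab}\in\X$, whence $X_\perp=\sum_b f^b\big(\sum_a h^{ab}N^a\big)\in\bigoplus_b f^b\Xi=\Xi_{\C\C}$, as needed.

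The $1$-form inclusion runs in parallel: writing $\omega_\perp=\sum_a\mu_a\,\mathrm{d}f^a\in\Omp$ with $\mu_a\in\X$, one computes $\langle N^b,\omega_\perp\rangle=\langle N^b,\omega\rangle$ (using $\langle\Xip,\omega_t\rangle=0$), which lies in $\C$ precisely because $\omega\in\Omega_\C$, and which equals $\sum_a\mu_a G^{ab}$; inverting $(G^{ab})$ and applying Theorem~\ref{DecoC} again yields $\omega_\perp=\sum_b f^b\big(\sum_a h^{ab}\,\mathrm{d}f^a\big)\in\bigoplus_b f^b\Omega=\Omega_{\C\C}$. For multivector fields and higher-rank forms I would repeat the argument in a local frame adapted to the splitting (\ref{decoXi'}), where $\Pt$ simply deletes every wedge monomial containing a factor $N^a$ (resp.\ $\mathrm{d}f^a$): the hypothesis that the tensor pairs into $\C$ against the relevant normal tensors forces each such coefficient into $\C$ by the invertibility of $(G^{ab})$, and Theorem~\ref{DecoC} then lands the whole normal part in the ideal generated by the $f^a$; alternatively one reduces to the rank-one case on decomposable tensors via the multiplicativity $\Pt(\omega\wedge\omega')=\Pt(\omega)\wedge\Pt(\omega')$ of Proposition~\ref{XiOmegaDeco}.

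The step I expect to be the crux is the invertibility of the Gram matrix $(G^{ab})=(\gm^{-1}(\mathrm{d}f^a,\mathrm{d}f^b))$ over $\X$; this is exactly what forces the restriction to ${\cal D}_f'$, and it is the place where the differential-geometric non-degeneracy of the normal metric is traded for the algebraic decomposition $\C=\bigoplus_a\X f^a$ of Theorem~\ref{DecoC}. Everything else is routine bookkeeping. For the higher-rank statements the only additional care is to pin down, once and for all, the precise meaning of $\Xi_\C,\Xi_{\C\C},\Omega_\C,\Omega_{\C\C}$ in rank $p$ before running the term-by-term check; no genuinely new idea should be needed there.
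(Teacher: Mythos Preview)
Your proof is correct and follows the same strategy as the paper: show $X_\perp\in\Xi_{\C\C}$ and $\omega_\perp\in\Omega_{\C\C}$ by extracting their coefficients in a normal frame and invoking Theorem~\ref{DecoC}. The paper's only cosmetic simplification is to work directly in the basis $\{\Np^a\}$ of (\ref{DefNpa}), which is already dual to $\{df^a\}$ (so $\Np^a(f^b)=\delta^{ab}$ and no explicit Gram-matrix inversion is written out, the inverse $K=\E^{-1}$ being absorbed into the definition of $\Np^a$); for the $1$-form part the paper passes to a local adapted frame of $\Omega_t$ rather than arguing globally as you do, but the mechanism is the same.
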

Consequently, we can represent every element of $\Xi_{\scriptscriptstyle M},\Omega_{\scriptscriptstyle M}$, or more generally $\Xi_{\scriptscriptstyle M_c},\Omega_{\scriptscriptstyle M_c}$, resp. by an element of $\Xi_t,\Omega_t$; etc. 
In the appendix we prove Proposition \ref{XCrepr}, as well as the relations
\bea
\ba{l}
\Omp=\left\{\omega\in\Omega  \:|\:   \la \Xi_t,\omega\ra=0 \right\},\qquad \Omp\subset\Omega_{\scriptscriptstyle \Box},\qquad\mbox{where}\\[6pt]
 \Omega_{\scriptscriptstyle \Box}\: :=\: \left\{\omega\in\Omega \:\: |\:\:   \la \Xi_\C,\omega\ra\in\C  
\right\}\: =\: \left\{\omega\in\Omega \:\:  |\:\:  \la \Xi_t,\omega\ra\in\C  \right\}.
\ea \label{extra}
\eea
We call the restriction $\gm_t$ in (\ref{eq21}) of the metric map $\gm$ 
\textit{first fundamental form} for the family of manifolds 
$M_c\subset{\cal D}_f'$, \ $c\!\in\!f\big({\cal D}_f'\big)$. It is
$\X$-linear in both arguments and further satisfies $\gm_t(X\cdot h,Y)=\gm_t(X,h\cdot Y)$ for all $X,Y\in\Xi_t$ and $h\in\X$ ({\it middle-linearity}). Since
$\gm_t$ is uniquely determined (via the pairing) by   the tangent projection
\ $\tilde\gm_t=(\mathrm{pr}_t\ot\mathrm{pr}_t)(\gm)\in\Omega_t\ot\Omega_t$ \ 
%
of the metric $\gm\in\Omega\otimes\Omega$,  
when there is no risk of confusion we will drop the tilde and with a slight abuse of notation denote $\tilde\gm_t$ by $\gm_t$. It is a symmetric element, i.e. $\tau(\gm_t)=\gm_t$.
The \textit{first fundamental form} (induced metric) {\it on $M$} is obtained by
the further projection $\X\to\XM$, which amounts to choosing the $c=0$
manifold $M$ out of the family. The same prescription will hold for
the the Levi-Civita connection, curvature, etc., on $M$.
Applying the decomposition of $\Xi$ in tangent and normal vector fields to the
restriction of the Levi-Civita connection
\begin{equation}
    \nabla|_{\Xi_t\otimes\Xi_t}=\nabla\!_t+II\colon\Xi_t\ot\Xi_t\rightarrow\Xi
\end{equation}
we obtain the \textit{projected Levi-Civita connection} 
and the \textit{second fundamental form} for  the family of manifolds 
$M_c$:
\begin{equation}
    \nabla\!_t:=\mathrm{pr}_t\circ\nabla|_{\Xi_t\otimes\Xi_t}
    \colon\Xi_t\otimes\Xi_t\to\Xi_t,\qquad 
    II:=\mathrm{pr}_{\scriptscriptstyle{\perp}}\circ\nabla|_{\Xi_t\otimes\Xi_t}
    \colon\Xi_t\otimes\Xi_t\to\Xi_{\scriptscriptstyle{\perp}}.  \label{eq07}
\end{equation}
\begin{prop}\label{lem01}
The first  fundamental form $\gm_t$, the second fundamental form $ II$
and  the projected Levi-Civita 
covariant derivative $\nabla\!_t$
are $U\k$-equivariant maps.
\end{prop}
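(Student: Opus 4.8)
The plan is to derive all three statements from two ingredients already in place: (a) the $U\mathfrak{e}$-equivariance relation (\ref{eq01}) of the Levi-Civita connection $\nabla$ together with the inclusion $\k\subseteq\mathfrak{e}$ from Proposition~\ref{prop02}, and (b) the $U\k$-equivariance of $\Xi_t,\Xip$ and of the projections $\Pt,\Pp$ established in Proposition~\ref{XiOmegaDeco}. A recurring elementary remark is that, $\k$ being a Lie subalgebra, $\Delta(U\k)\subseteq U\k\ot U\k$, so that for $g\in U\k$ and $X,Y\in\Xi_t$ the Sweedler components $g_{(1)}\trc X$ and $g_{(2)}\trc Y$ again lie in $\Xi_t$; this ensures that all the restricted and projected objects below are well defined with the domains and codomains appearing in (\ref{eq07}).

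First I would dispose of $\gm_t$. For $Z\in\k$ the Killing identity (\ref{Killing0}), rewritten using $\Delta(Z)=Z\ot\1+\1\ot Z$ and $Z\trc h=\mathcal{L}_Zh$, is exactly $Z\trc\big[\gm(X,Y)\big]=\gm\big(Z_{(1)}\trc X,\,Z_{(2)}\trc Y\big)$ for all $X,Y\in\Xi$, i.e. the map $\gm\colon\Xi\ot\Xi\to\X$ is $\k$-equivariant. Since $\X$ and $\Xi\ot\Xi$ are $U\k$-modules and $\Delta$ is an algebra homomorphism, this equivariance on the generators $\k\cup\{\1\}$ propagates to all of $U\k$ by the usual induction on word length. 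Restricting $\gm$ to the $U\k$-submodule $\Xi_t\ot\Xi_t$ (Proposition~\ref{XiOmegaDeco}) then shows that $\gm_t=\gm|_{\Xi_t\ot\Xi_t}$ is $U\k$-equivariant; equivalently, one may note that $\gm$ is $\k$-invariant ($\mathcal{L}_Z\gm=0$ for $Z\in\k$), hence $U\k$-invariant, so that its image $\tilde\gm_t=(\Pt\ot\Pt)(\gm)$ under the $U\k$-equivariant map $\Pt\ot\Pt$ is $U\k$-invariant.

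For $\nabla\!_t$ and $II$ I would run the same argument, now driven by (\ref{eq01}). Let $g\in U\k\subseteq U\mathfrak{e}$ and $X,Y\in\Xi_t$. By (\ref{eq01}), $g\trc(\nabla_{\!X}Y)=\nabla_{g_{(1)}\trc X}(g_{(2)}\trc Y)$, whose right-hand side is $\nabla|_{\Xi_t\ot\Xi_t}$ evaluated on $g_{(1)}\trc X\ot g_{(2)}\trc Y\in\Xi_t\ot\Xi_t$. Applying $\Pt$ and using its $U\k$-equivariance on $\Xi$,
\bea
g\trc\nabla\!_t(X,Y)&=&g\trc\Pt(\nabla_{\!X}Y)=\Pt\big(g\trc\nabla_{\!X}Y\big)\nn
&=&\Pt\big(\nabla_{g_{(1)}\trc X}(g_{(2)}\trc Y)\big)=\nabla\!_t\big(g_{(1)}\trc X,\,g_{(2)}\trc Y\big),
\eea
which is the asserted $U\k$-equivariance of $\nabla\!_t$. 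Replacing $\Pt$ by $\Pp$ throughout, and using that $\Pp\colon\Xi\to\Xip$ is likewise $U\k$-equivariant (Proposition~\ref{XiOmegaDeco}), yields word for word $g\trc II(X,Y)=II\big(g_{(1)}\trc X,g_{(2)}\trc Y\big)$, i.e. the $U\k$-equivariance of $II$.

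I do not anticipate a real obstacle: once Propositions~\ref{prop02} and~\ref{XiOmegaDeco} and relation (\ref{eq01}) are granted, the proof is essentially bookkeeping. The only point needing mild care is the one used for $\gm_t$ — passing from the ``infinitesimal'' Killing identity (\ref{Killing0}) for generators $Z\in\k$ to genuine equivariance of $\gm$ under the full Hopf algebra $U\k$ — and this is precisely where one invokes that $\k$ closes as a Lie algebra (so that $U\k$ acts and $\Delta(U\k)\subseteq U\k\ot U\k$) together with the multiplicativity of $\Delta$; beyond that one need only keep track, using Proposition~\ref{XiOmegaDeco}, that the projected objects really land in $\Xi_t$ respectively $\Xip$.
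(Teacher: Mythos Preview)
Your proof is correct and follows essentially the same approach as the paper's own one-line argument (``As compositions of $U\mathfrak{k}$-equivariant maps, $\gm_t,\nabla\!_t$ and $II$ are $U\mathfrak{k}$-equivariant''), only with the ingredients---$U\k$-equivariance of $\gm$ via the Killing identity, of $\nabla$ via $\k\subseteq\mathfrak{e}$ and (\ref{eq01}), and of $\Pt,\Pp$ via Proposition~\ref{XiOmegaDeco}---made explicit.
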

\begin{proof}{}
As compositions of $U\mathfrak{k}$-equivariant maps, $\gm_t,\nabla\!_t$ and $II$ are
$U\mathfrak{k}$-equivariant.
\end{proof}

\medskip
By the Leibniz rule for $\nabla$ and the $\X$-linearity of $\Pt$, $\Pt(hZ)=h\,\Pt(Z)$ for all $h\in\X$, $Z\in\Xi$, \
 $\nabla\!_t$ is $\X$-linear in the first argument, \ 
$\nabla\!_{t,hX}Y=h\nabla\!_{t,X}Y$, \ and fulfills
the Leibniz rule
\bea \label{Nabla_tLeibniz}
 \nabla\!_{t,X}(hY)
    =\mathrm{pr}_t\big[X(h)Y+h\nabla_XY\big]
    =X(h)\mathrm{pr}_t(Y)+h\mathrm{pr}_t(\nabla_XY)=X(h)Y+h\nabla\!_{t,X}Y
\eea
in the second argument, for all $h\in\X$ and $X,Y\in\Xi_t$. Similarly,
we find that $II$ is $\X$-linear in both arguments.
By applying the further projection $\X\to\XM$, which amounts to choosing the $c=0$ manifold $M$ out of the $M_c$ family, one finally obtains  the expected $\XM$-linearity of the
first and second  fundamental form on $M$, as well as the expected
 $\XM$-linearity in the first argument and Leibniz rule in the second
for  the Levi-Civita connection on $M$ (see e.g. \cite{Kobayashi1996}~Chapter~3).
Clearly, if $\gm$ is Riemannian also the first fundamental form on $M$ is.

Of course, one can do the same for any other $M_c$ by a different choice of $c$.

\noindent
The second fundamental form yields the extrinsic curvature of the $M_c$'s. The intrinsic curvature $\rR_t$ is related to the curvature $\rR$ of $\nabla$ on $\RR^n$ by
 the Gauss equation (valid for all \ $X,Y,Z,W\in\Xi_t$)
\bea\label{GaussClassic}
\gm\big(\rR(X,\!Y)\!Z,W\big)=\gm\!\left(\rR_t(X,\!Y)\!Z,W\right)+\gm\big(II(X,\!Z),II(Y,\!W)\big)
-\gm\big(II(Y,\!Z),II(X,\!W)\big).
\eea

\subsubsection{Decompositions in bases of $\Omega,\Xi$;  Euclidean, Minkowski metrics}
\label{Bases}

In this section we explicitly determine the geometry (in particular, the decompositions  (\ref{decoXi'}) and the associated projections $\Pt,\Pp$) 
in terms of bases of $\Omega,\Omega_{\scriptscriptstyle{\perp}},\Omega_t$
and  $\Xi,\Xip,\Xi_t$ for a generic metric $\gm$,  specializing to the Euclidean and Minkowski metrics at the end. 

Let $(x^1,...,x^n)$ be  a $n$-ple of Cartesian coordinates; 
we lower and raise indices $i,j,...$ using the metric components
$g_{ij}:=\gm(\partial_i,\partial_j)$ and  the dual ones $g^{ij}=g^{-1}{}_{ij}=\gm^{-1}(dx^i,dx^j)$
respectively: $dx_i=g_{ij}dx^j$, $Y_i=g_{ij}Y^j$, $\partial^i=g^{ij}\partial_j$, etc.
Thus we can write the metric and its dual  in the form
\be
\gm=dx^i\otimes dx_i
, \qquad\qquad \gm^{-1}=\partial^i\otimes\partial_i
,\label{EuclMetric1}
\ee
implying, for all vector fields \ $X=X^i\partial_i$, \ $Y=Y^i\partial_i$ and 1-forms
\  $\alpha=\alpha_idx^i$, \ $\omega=\omega_i dx^i$,
\be
\gm( X,Y)=X^iY_i, \qquad \qquad \gm^{-1}(\alpha,\omega)=\alpha_i\omega^i. \label{EuclMetric2}
\ee

\noindent
On ${\cal D}_f'\subseteq{\cal D}_f$ the  $k\times k$  matrix  defined by \ 
$\E^{ab}=\gmp^{-1}(df^a,df^b)$ \
 ($\E^{ab}=f^{ai}f^b_i=\big(Jg^{-1}J^{\scriptscriptstyle T}\big)^{ab}$,
in terms of Cartesian coordinates)  is symmetric 
and invertible, by (\ref{decoOmegadiamond}), (\ref{D_f'});  
we denote  its inverse by \ $K:=\E^{-1}$. \  
If the metric  $\gm$ is {\it Riemannian}, then $\E$ is also 
positive-definite on ${\cal D}_f'\!={\cal D}_f$. 
Let
\be
\Np^a:=K^{ab}\:\,\gm^{-1}(df^b ,dx^i)\:\partial_i=K^{ab}f^{bi}\partial_i       \label{DefNpa}
\ee
and, for all \ $\omega\in\Omega$,  \ $X\in\Xi$,
\bea
\omega_{\scriptscriptstyle \perp}:= df^a \, K^{ab} \:\,\gm^{-1}(df^b ,\omega),
\qquad\qquad X_{\scriptscriptstyle \perp}:=\gm(X ,\Np^a)\: \E^{ab}\Np^b,   \label{XiOmegaPerp}
\eea 
or, explicitly in terms of the decompositions  \ $\omega=\omega_i dx^i$, \ $X=X^i\partial_i$,
\bea
\omega_{\scriptscriptstyle \perp}= df^a \, K^{ab} f^{bh} \omega_h,
\qquad\qquad X_{\scriptscriptstyle \perp}=X^if_i^a\, \Np^a       \label{XiOmegaPerp'}
\eea 
(sum over repeated indices: $h,i,j,...$ run over $1,...,n$, while $a,b,c,d,...$ run over $1,...,k$). 

\begin{prop} \ 
$\N_{\scriptscriptstyle \perp}\!:=\!\{\Np^a\}_{a=1}^k$,  \
 $\B_{\scriptscriptstyle{\perp}}\!:=\!\{df^a\}_{a=1}^k$ \ are
bases resp. of   $\Xip,\Omp$   dual to each other, in the sense
\be
 \la \Np^a,df^b\ra=\Np^a(f^b)=\delta^{ab}, \qquad a,b\in\{1,...,k\}.          \label{dualperp}
\ee
$\gm^{-1}(df^a,df^b)=\E^{ab}$, $\gm(\Np^a,\Np^b)=K^{ab}$. \
The $df^a,\Np^a$ as well the $\E^{ab},K^{ab}$ are $\mathfrak{k}$-invariant. \ 
On  $X\in\Xi$,  $\omega\in\Omega$  \ the action of
the projections $\Pp,\Pt$ explicitly reads \ $\Pp(X)=X_{\scriptscriptstyle \perp}$, \ $\Pt(X)= X_t:=X-X_{\scriptscriptstyle \perp}$,  \ $\Pp(\omega)=\omega_{\scriptscriptstyle \perp}$, \ $\Pt(\omega)= \omega_t:=\omega-\omega_{\scriptscriptstyle \perp}$.
\label{XiOmegaDeco'}
\end{prop}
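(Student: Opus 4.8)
The plan is to verify each claim by direct computation in Cartesian coordinates, leaning on Proposition \ref{Propdiamond} (which already gives $\Omp=\bigoplus_a\X\,df^a$) and Proposition \ref{XiOmegaDeco} (which gives the abstract orthogonal decompositions and the $U\k$-equivariance of $\Xi_t,\Xip,\Omega_t,\Omp$ and the projections). First I would establish the duality relation (\ref{dualperp}): from the definition (\ref{DefNpa}), $\Np^a(f^b)=\langle\Np^a,df^b\rangle=K^{ac}\,\gm^{-1}(df^c,df^b)=K^{ac}\E^{cb}=\delta^{ab}$, since $K=\E^{-1}$. Because the $df^b$ form a basis of $\Omp$ (Proposition \ref{Propdiamond}) and the $\Np^a$ span a $k$-dimensional subspace pairing nondegenerately with $\Omp$, the $\Np^a$ are linearly independent; and since $\Np^a=K^{ab}f^{bi}\partial_i$ is a linear combination of the gradient vector fields, each $\Np^a$ lies in $\Xip$ (indeed $\gm(\Np^a,X)=K^{ab}f^{bi}X_i=0$ for $X\in\Xi_t$, as $X^if^b_i=0$). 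Counting dimensions ($\dim\Xip=n-m=k$), $\{\Np^a\}$ is a basis of $\Xip$. The metric components $\gm^{-1}(df^a,df^b)=\E^{ab}$ hold by definition of $\E$, and $\gm(\Np^a,\Np^b)=K^{ac}f^{ci}\,K^{bd}f^{dj}g_{ij}=K^{ac}K^{bd}\E^{cd}=K^{ab}$.

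Next I would check $\k$-invariance. By Proposition \ref{Propdiamond}, for $Z\in\k\subset\Xi_t$ one has $Z\trc df^a=d(Z(f^a))=0$ since $Z(f^a)=0$; hence the $df^a$ are $\k$-invariant. For the $\Np^a$: since $\k$ consists of Killing fields, $Z\trc\gm=0$ and $Z\trc\gm^{-1}=0$, so $Z$ acts as a derivation annihilating $\gm^{-1}(df^b,df^c)=\E^{bc}$; thus the $\E^{ab}$ and therefore their inverse $K^{ab}$ are $\k$-invariant (as functions on $\D_f'$). Then $Z\trc\Np^a=Z\trc\big(K^{ab}\,\langle\gm^{-1},df^b\otimes dx^i\rangle\partial_i\big)$; applying the Leibniz rule and using $Z\trc K^{ab}=0$, $Z\trc df^b=0$, $Z\trc\gm^{-1}=0$, every term vanishes, so $\Np^a$ is $\k$-invariant. (Equivalently, one can argue abstractly: $\Np^a$ is the unique element of $\Xip$ with $\langle\Np^a,df^b\rangle=\delta^{ab}$, and $\k$-invariance of $\Xip$ and of the $df^b$ forces $\k$-invariance of $\Np^a$.)

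Finally I would identify the projections. Define $P(\omega):=df^a K^{ab}\gm^{-1}(df^b,\omega)$ and check: (i) $P(\omega)\in\Omp$ trivially, being a combination of the $df^a$; (ii) $P$ restricted to $\Omp$ is the identity — for $\omega=df^c$, $P(df^c)=df^aK^{ab}\E^{bc}=df^c$; (iii) $P$ annihilates $\Omega_t$ — if $\langle\Xip,\omega\rangle=0$ then in particular $\langle\Np^b,\omega\rangle=0$, and $\gm^{-1}(df^b,\omega)=\langle X_{df^b},\omega\rangle$ with $X_{df^b}=\E^{bc}\Np^c\in\Xip$ (using $\G^{-1}(df^b)=\E^{bc}\Np^c$, which follows from $\gm(\E^{bc}\Np^c,\Np^d)=\E^{bc}K^{cd}=\delta^{bd}=\langle\Np^d,df^b\rangle$ and the isomorphism $\G$), so $\gm^{-1}(df^b,\omega)=0$. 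By the uniqueness of the projection onto $\Omp$ along $\Omega_t$ established in Proposition \ref{XiOmegaDeco}, $P=\Pp$, whence $\Pp(\omega)=\omega_\perp$ and $\Pt(\omega)=\omega-\omega_\perp$. The analogous statement for vector fields follows by the same scheme, using $X_\perp=\gm(X,\Np^a)\E^{ab}\Np^b=X^if_i^a\Np^a$: it lands in $\Xip$, restricts to the identity on $\Xip$ (for $X=\Np^c$: $\gm(\Np^c,\Np^a)\E^{ab}\Np^b=K^{ca}\E^{ab}\Np^b=\Np^c$), and kills $\Xi_t$ (if $X^if_i^a$, wait — if $X\in\Xi_t$ then $X^if_i^a=0$, so $X_\perp=0$). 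The explicit coordinate forms (\ref{XiOmegaPerp'}) are obtained by substituting $\gm^{-1}(df^b,\omega)=f^{bh}\omega_h$ and $\gm(X,\Np^a)=X^if_i^a$.

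The main obstacle is bookkeeping rather than conceptual: one must be careful that $\E$ (equivalently $\gmp^{-1}$, equivalently $\gm_t^{-1}$) is genuinely invertible, which is exactly the defining property of $\D_f'$ in (\ref{D_f'}) and is guaranteed on all of $\D_f$ when $\gm$ is Riemannian; and one must keep the two isomorphisms $\G,\G^{-1}$ straight when translating between the pairing-orthogonality characterization of $\Omega_t,\Xi_t$ and the metric-orthogonality characterization of $\Omp,\Xip$. Once the identity $\G^{-1}(df^a)=\E^{ab}\Np^b$ is in hand, every remaining verification is a one-line index contraction.
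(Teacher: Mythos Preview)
Your proof is correct and follows essentially the same route as the paper: direct computation of the duality $\langle\Np^a,df^b\rangle=\delta^{ab}$, verification that $\Np^a\in\Xip$ via $\gm(\Np^a,\Xi_t)=0$, $\k$-invariance via the abstract ``unique dual basis in a $\k$-stable submodule'' argument (which is exactly what the paper does, citing the $U\k$-equivariance of $\Xip$ from Proposition~\ref{XiOmegaDeco}), and identification of the formulas (\ref{XiOmegaPerp}) with $\Pp$. The only cosmetic difference is that the paper checks the projection property by computing $(X_\perp)_\perp=X_\perp$, $(\omega_\perp)_\perp=\omega_\perp$ directly, whereas you verify image in $\Omp$/$\Xip$, identity on $\Omp$/$\Xip$, and vanishing on $\Omega_t$/$\Xi_t$ separately---equivalent checks of the same fact.
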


\bp{}
We have already proved in Proposition \ref{Propdiamond} that $\B_{\scriptscriptstyle{\perp}}$ is a basis
of $\Omp$. As a consequence, $\omega_{\scriptscriptstyle \perp}\in\Omp$.
From the definition we find $\gm(X,\Np^a)=K^{ab}X^if^b_i=K^{ab}X(f^b)=0$ for all $X\in\Xi_t$ and
$a=1,...,k$, whence  $\Np^a\in\Xip$; moreover, 
$\Np^a(f^b)=K^{ac}f^{ci}\partial_i(f^a)=K^{ac}\E^{cb}=\delta^{ab}$, and $\N_{\scriptscriptstyle \perp}$ is the  basis  of $\Xip$ dual to $\B_{\scriptscriptstyle{\perp}}$. As a consequence, $X_{\scriptscriptstyle \perp}\in\Xip$. $g\trc df^a=0$ for all $g\in\Xi_t$ holds in particular for  $g\in\mathfrak{k}$.
By Proposition \ref{lem01} $ g\trc\Np^a\in\Xip$ for all $g\in\mathfrak{k}$, and therefore 
$ g\trc\Np^a=C^a_c(g)\Np^c$ with some coefficients $C^a_c(g)$.
Applying $ g\trc$ to both sides of (\ref{dualperp}) 
and  using the $\Xi$-equivariance of the pairing we thus find  the $\mathfrak{k}$-invariance also of the $\Np^a$:
$$
 \la g\trc\Np^a,df^b\ra=0\quad\Rightarrow\quad
0=C^a_c(g)\la \Np^c,df^b\ra=C^a_b(g)\:\:\:\forall a,b\quad\Rightarrow\quad g\trc\Np^a=0.
$$
Checking $\gm^{-1}(df^a,df^b)=\E^{ab}$, 
$\gm(\Np^a,\Np^b)=K^{ab}$ is  a straightforward computation; their 
$\mathfrak{k}$-invariance follows from that of $df^a$ and the $U\mathfrak{k}$-equivariance of $\gm$;
in fact, \ $\forall g\!\in\! U\mathfrak{k}$
$$
g\trc\E^{ab}=g\trc \gm^{-1}(df^a,df^b)=
\gm^{-1}\!\left(\!g_{(1)}\trc df^a,g_{(2)}\trc df^b\!\right)=\varepsilon(g)\,\gm^{-1}(df^a,df^b)
=\varepsilon(g)\E^{ab}.
$$
The linear maps $X\mapsto X_{\scriptscriptstyle \perp}\in\Xip$, 
 $\omega\mapsto \omega_{\scriptscriptstyle \perp}\in\Omp$  indeed realize the projection $\Pp$, because
\bea
(X_{\scriptscriptstyle \perp})_{\scriptscriptstyle \perp}=\gm(X_{\scriptscriptstyle \perp} ,\Np^a)\, \E^{ab}\Np^b=\gm(X ,\Np^c)\, \E^{cd}\gm(\Np^d ,\Np^a)\, \E^{ab}\Np^b=
\gm(X ,\Np^a)\, \E^{ab}\Np^b=X_{\scriptscriptstyle \perp},\qquad\nn[6pt]
(\omega_{\scriptscriptstyle \perp})_{\scriptscriptstyle \perp}=
df^a  K^{ab} \gm^{-1}(df^b \!,\omega_{\scriptscriptstyle \perp})
=df^a  K^{ab} \gm^{-1}(df^b \!,df^c)K^{cd} \gm^{-1}(df^d\!,\omega)
=df^a  K^{ab} \gm^{-1}(df^b \!,\omega)
=\omega_{\scriptscriptstyle \perp};
\nonumber
\eea
hence also the linear maps \ $X\mapsto X_t\!:=\!X\!-\!X_{\scriptscriptstyle \perp}$, \
 $\omega\mapsto \omega_t\!:=\!\omega\!-\!\omega_{\scriptscriptstyle \perp}$ \ realize the projection  $\Pt$.
 \ep


\medskip
\noindent
{\bf Remark.} \ If  $\gm$ is {\it Riemannian},
setting $\mathcal{H}:=\E^{-1/2}$, $\theta^a:=\mathcal{H}^{ab}df^b$, 
$\Up^a:=\mathcal{H}^{ab}f^{bi}\partial_i$, one finds that
$\{\Up^a\}_{a=1}^k$, $\{\theta^a\}_{a=1}^k$ are othonormal bases of 
 \   $\Xip$, $\Omp$, respectively and are dual to each other, i.e.
\be
\gm(\Up^a,\Up^b)=\delta^{ab}, \qquad \gm^{-1}(\theta^a,\theta^b)=\delta^{ab},
\qquad \la \Up^a,\theta^b\ra=\delta^{ab}.          \label{perpdual}
\ee
The  $\mathfrak{k}$-invariance of $\theta^a,\Up^a$ follows from that
of $df^a,\Np^a$ and of  $\E$. 
In terms of the bases $\{\Up^a\}_{a=1}^k$, $\{\theta^a\}_{a=1}^k$ the normal components
of $X\in\Xi$, $\omega\in\Omega$ read
\bea
\omega_{\scriptscriptstyle \perp}= \theta^a \:   \gm^{-1}(\theta^a,\omega),
\qquad\qquad X_{\scriptscriptstyle \perp}=\gm(X ,\Up^a)\:\Up^a.\label{XiOmegaPerp''}
\eea 
Even if  $\gm$ is {\it not Riemannian} one can find in ${\cal D}_f'$
a $k\times k$ symmetric matrix $\mathcal{H}$, such that $\theta^a:=\mathcal{H}^{ab}df^b$  
$\Up^a:=\mathcal{H}^{ab}f^{bi}\partial_i$ are $\mathfrak{k}$-invariant, make up  bases 
$\{\Up^a\}_{a=1}^k$, $\{\theta^a\}_{a=1}^k$  of
   $\Xip$, $\Omp$ respectively  that are othonormal up to 
suitable signs $\epsilon_a=\pm 1$ and   dual to each other, in the sense
\be
\gm(\Up^a,\Up^b)=\zeta^{ab}, \qquad 
\gm^{-1}(\theta^a,\theta^b)=\zeta^{ab},
\qquad \la \Up^a,\theta^b\ra=\delta^{ab},         \label{perpdual'}
\ee
where $\zeta^{ab}=\zeta_{ab}:=\epsilon_a\delta^{ab}$ (no sum over $a$). 
The normal components of $X\in\Xi$, $\omega\in\Omega$ read
\bea
\omega_{\scriptscriptstyle \perp}= \theta^a \zeta_{ab}\:   \gm^{-1}(\theta^b,\omega),
\qquad\qquad X_{\scriptscriptstyle \perp}=\gm(X ,\Up^a)\:\zeta_{ab}\Up^b.\label{XiOmegaPerp'''}
\eea

\medskip
If $\gm$ is the  {\it Euclidean metric} ($g_{ij}=\delta_{ij}$)  
the associated Levi-Civita connection on $\RR^n$ is
\be
 \nabla=dx^i\otimes\mathcal{L}_{\partial_i},  \qquad \mbox{e.g.}\qquad\nabla_{\!\!X} Y=X^i\partial_i(Y^j)\partial_j. \label{LeviCivita}
\ee
We endow $M\subset \RR^n$ with the induced metric $\gm_t$. 
Using $X,Y,Z\in\Xi_t$ as representatives of elements of $\XiM$,  the Levi-Civita connection on $(M,\gm_t)$  is $\nabla\!_{t,X} Y:=(\nabla_X Y)_t$:
(\ref{LC}), (\ref{Killing0}) hold with $\gm,\nabla,\tT,\rR$ replaced by $\gm_t,\nabla\!_t,\tT_t,\rR_t$. 
Deriving the identities $Y(f_c)=Y^jf_j^a=0$ we find that $\partial_i(Y^j)f_j^a=-Y^jf_{ij}^a$,
where we have abbreviated $f_{ij}^a:=\partial_i \big(\partial_j (f^a)\big)$; 
thus, the second fundamental form  \ $II(X,Y):=\left( \nabla_X Y\right)_{\scriptscriptstyle \perp}$ \  
takes the explicit form
\bea
II(X,Y)&=& X^i\partial_i(Y^j)f_j^a  \Np^a=
-X^iY^jf_{ij}^a  \Np^a.
\eea
Replacing this result and $\rR=0$ in the Gauss equation (\ref{GaussClassic}),
 we find for the intrinsic curvature 
$$
\left[\rR_t(X,\!Y)\!Z)\right]^m W^m=f^a_{ij}K^{ab}f^b_{lm}(Y^iX^l-X^iY^l)Z^jW^m 
$$
on all \ $X,Y,Z,W\in\Xi_t$. \ Finally, $Z\in\Xi_t$  is a Killing vector field on $(M,\gm_t)$ if\footnote{In fact, 
$lhs\!=\! Z^h\partial_h\left(X^iY_i\right)\!-\! \left[Z^h\partial_h(X^i)\!-\! X^h\partial_h(Z^i)\right]Y_i \!-\!  \left[Z^h\partial_h(Y^i)\!-\! Y^h\partial_h(Z^i)\right]X_i\!=\!  [X^hY_i\!+\! X_iY^h]\,\partial_h(Z^i)\!=\! rhs$.}
\bea
Z\bigg(\gm( X,Y)\bigg)-\gm\big( [Z, X],Y\big)-\gm\big(X,[Z, Y]\big)=X^hY^i(\partial_hZ_i+\partial_iZ_h)=0
\qquad \forall X,Y\in\Xi_t.       \label{Killing}
\eea
In fact, this condition guarantees that $Z$ is Killing on $(M_c,\gm_t)$ for all $c$.
The Killing vector fields close the Lie algebra $\k=\h\cap \Xi_t$  of the group of isometries 
$\mathfrak{K}$  of the $M_c$'s; $\mathfrak{K}$ is a subgroup   of the   group  $\mathfrak{H}$
of isometries of $\RR^n$, i.e. of the Euclidean group 
(every element of  $\mathfrak{H}$ is a composition of a rotation, a translation and possibly an inversion of axis).

\medskip
If  $\gm$ is the   {\it  Minkowski  metric}  
[$g_{ij}=g^{ij}= \eta_{ij}=\mbox{diag}(1,...,1,-1)$],
the associated Levi-Civita connection on $\RR^n$ is again as in (\ref{LeviCivita}).
Endowing $M_c\subset {\cal D}_f'$ with the induced metric $\gm_t$ and 
using $X,Y,Z\in\Xi_t$ as representatives of elements of $\Xi_{\Mcs}$,  the Levi-Civita connection on $(M_c,\gm_t)$  is again $\nabla\!_{t,X} Y:=(\nabla_X Y)_t$:
(\ref{LC}), (\ref{Killing0}) hold with $\gm,\nabla,\tT,\rR$ replaced by $\gm_t,\nabla\!_t,\tT_t,\rR_t$. In terms of components 
the condition for $Z\in\Xi_t$  to be a Killing vector field on  $(M_c,\gm_t)$  remains (\ref{Killing}).

\bigskip\noindent
{\bf  Bases and complete sets of  $\Xi_t$, $\Omega_t$}
\smallskip

\noindent
As seen,  $\B_{\scriptscriptstyle{\perp}}:=\left\{df^a\right\}^k_{a=1}$,  $\N_{\scriptscriptstyle\perp}:=\left\{\Np^a\right\}^k_{a=1}$ are globally defined  bases
of the $\X$-bimodules  $\Omega_{\scriptscriptstyle{\perp}}$,  $\Xip$  respectively.
Also the $\Vp^a:=f^{ai}\partial_i=\E^{ab}\Np^b$ make a basis of  $\Xip$.
The globally  defined sets
\bea
\Theta_t:=\left\{\vartheta^j\right\}^n_{j=1}, \quad S_W:=\left\{W_j\right\}^n_{j=1},
\qquad \quad \mbox{with }\quad \vartheta^j\!:=\Pt(\xi^j), 
 \:\: W_j\!:=\Pt(\partial_j),
\eea
are respectively complete in $\Omega_t$, $\Xi_t$, but are not bases, because of the linear dependence relations
\be
\vartheta^jf^a_j=0, \qquad\qquad f^{aj}W_j=0, \qquad\qquad a=1,...,k.   
\ee
The above definition of $\B_{\scriptscriptstyle{\perp}}$ does not involve any specific metric, 
as the  definition (\ref{defOmegadiamond}) 
of  $\Omega_{\scriptscriptstyle{\perp}}$ itself. Similarly, as  the  definition (\ref{defeqXis}) 
of  $\Xi_t$ does not involve any metric, there should be
some alternative complete set in $\Xi_t$ with the same feature. To determine it we
start with the case $k=1$, i.e. with a $(n\!-\!1)$-dimensional (hyper)surface $M\subset{\cal D}_f$ determined by  a single equation
\be
f(x)=0.                                                          \label{defeq}
\ee
Rescaling $S_W$ by the factor $f^if_i$ we obtain 
another complete set:  \ $S_V:=\left\{V_j\right\}^n_{j=1}$, with $V_j:
=(f^if_i)\partial_j- f_j\Vp$. \
A third complete set (of globally  defined vector fields) in $\Xi_t$ is
\bea
S_L:=\left\{L_{ij}\right\}_{i,j=1}^n, \qquad L_{ij}:=f_i\partial_j-f_j\partial_i.
\eea
In fact,  $L_{ij}$ annihilate $f$; $S_L$  is complete because 
$V_j=f^iL_{ij}$.  This is the searched set, because its definition does not involve the metric.
Clearly $L_{ij}=-L_{ji}$, so at most  $n(n\!-\!1)/2$ of the $L_{ij}$ (e.g. those
with  $i<j$) are linearly independent over $\RR$. Obviously, both $S_V,S_L$
are of rank $n\!-\!1$ over $\X$;  they are respectively characterized by
the dependence relations  
\be
f^iV_i=0, \qquad\qquad f_{[i}L_{jk]}=0                          \label{DepRel}
\ee
(here and below square brackets enclosing indices mean a complete  antisymmetrization of the latter).
As known, if $M$ is not parallelizable there is no basis (i.e. complete
set of just $(n\!-\!1)$ elements) of $\Xi_t$ consisting
of globally defined vector fields: redundancy is unavoidable.
In the case of spheres $f\equiv (x^ix^i-R^2)/2=0$ the  $n(n\!-\!1)/2$ 
$L_{ij}:=x^i\partial_j-x^j\partial_i$ ($i<j$) are the usual generators of rotations (angular momentum components), i.e. span $so(n)$. The $L_{ij}$ are antihermitean under the $*$-structure (\ref{natural*}), namely $L_{ij}^*=-L_{ij}$.

By an explicit computation we find that their Lie brackets are
\bea
[L_{ij},L_{hk}]
& = & f_{jh}L_{ik}-f_{ih}L_{jk}-
f_{jk}L_{ih}+f_{ik}L_{jh}.       \label{comm}
\eea

\medskip
Now we consider the  general $k$ case. The globally defined vector fields 
\be 
L_{i_1i_2...i_{k+1}}:=  f^1_{[i_1}f^2_{i_2}...
f^k_{i_k}\,\partial_{i_{k+1}]}            \label{genL}
\ee
are antihermitean,  fulfill  \ $L_{i_1i_2...i_{k+1}}f^a=0$
for all $a\!=\!1,...,k$, are completely antisymmetric with respect to $(i_1,i_2,...,i_{k+1})$, 
and make up a  set $S_L$ complete (over $\X$) in $\Xi_t$, independently of the metric.
The $L_{i_1i_2...i_{k+1}}$ with $i_1<i_2<...<i_{k+1}$,
or a subset thereof, is linearly  independent over $\CC$. 
Even the latter may be linearly dependent over $\X$, 
because 
$f^a{}_{[j} L_{i_1i_2...i_{k+1}]}=0$ for all $a$. \
We do not compute their Lie brackets here.

\subsubsection{Differential calculus algebras $\Q^\bullet,\QM^\bullet$ on $\RR^n$, $M_c$}
\label{QQM} 

Henceforth we abbreviate $\xi^i:=dx^i$. \   
Let  $S=\{e_\alpha\}_{\alpha=1}^{A}$   be a  set 
 of vector fields, globally defined on ${\cal D}_f$ that 
is  complete in $\Xi$. The $e_\alpha,\xi^i$ fulfill  relations of the type
\be
\ba{l}
\sum_{\alpha=1}^A t_l^\alpha\, e_\alpha=0, \qquad l=1,...,A-n,\\[4pt]
e_\alpha e_\beta- e_\beta e_\alpha-  C^\gamma_{\alpha\beta}\,e_\gamma=0,\\[4pt]  
e_\alpha\xi^i-\xi^ie_\alpha=0, \qquad 
\xi^i\xi^j+\xi^j\xi^i=0                             
\ea \label{DCrel2}
\ee
(with suitable $t_l^a,C^\gamma_{\alpha\beta}\in\X$). 
The first line  contains possible linear dependence relations among the $e_\alpha$,
like (\ref{DepRel}). If  
we choose  $S=\{\partial_1,...,\partial_n\}$ this is empty, while 
in the second line $C^\gamma_{\alpha\beta}\equiv 0$. \
Clearly the coefficients in the decomposition $X=X^\alpha e_\alpha\in \Xi$
are defined up to shifts $X^\alpha \mapsto X^\alpha + \sum_l h^l t_l^\alpha$,
with $h^l\in\X$. 
Consider  the unital 
algebra   $\Q^\bullet$  over $\CC$ consisting of  polynomials
in $\xi^i,e_\alpha$ with (left or right) coefficients in $\X$, modulo 
relations (\ref{DCrel2}) and the ones
\bea 
&&\ba{l}
h\xi^i-\xi^i h=0, \qquad
e_\alpha h -h\, e_\alpha- e_\alpha(h)=0
\ea\qquad\quad \forall h\in\X;         \label{DCrel1}  
\eea
 $\Q^\bullet$  is a $U\Xi$-equivariant $\X$-bimodule. It is easy to check that  a different choice of $S$ changes
 (\ref{DCrel2}-\ref{DCrel1}), but leads to an equivalent definition of  $\Q^\bullet$ (one could
choose also a different basis of 1-forms, but we will not consider this here).
We shall name $\Q^\bullet$  {\it differential calculus algebra  on ${\cal D}_f$}. 
The elements of 
$\Q^\bullet$ can be considered as differential-operator-valued 
inhomogenous forms. Relations  (\ref{DCrel2}-\ref{DCrel1}) encode all the information about the differential calculus 
and allow to order  the $\xi^i,e_\alpha$ in any prescribed way, with the
coefficient functions at the left, center, or right - as one wishes. $\Q^\bullet$ 
admits  $\X,\Omega^\bullet,{\cal H}$ as  subalgebras; the {\it enlarged Heisenberg algebra} 
${\cal H}$ is  the  component of form degree zero. While $\Q^\bullet,\Omega^\bullet$ are graded by the form degree, $\Q^\bullet,{\cal H}$ are filtered by the degree $r$ in the $e_\alpha$; $r$ gives the order of an element of ${\cal H}$ seen as a differential operator on $\X$. 
Note that within $\Q^\bullet$  also the action of a generic vector field $X=X^\alpha e_\alpha$
on a function $h$ can be expressed as a commutator:
\ $[X,h]=[X^\alpha e_\alpha,h]=X^\alpha [e_\alpha,h]=X(h)$.
In the $\Q^\bullet$ framework $Xh=hX+X(h)$ is the inhomogeneous first order differential operator sum of a first order part (the vector field  $hX$) and a zero order part  (the multiplication operator by $X(h)$); it must not be confused with the  product of  $X$ by $h$ from the right, which  is equal to  $hX$ and in the previous sections has been denoted in the same way.  In the $\Q^\bullet$ framework we denote the latter by $X \ltlc  h$ (of course
$(X \ltlc  h)(h')=X(h') h=hX(h')$,  $X \ltlc  (hh')= hh'X$ remain valid).
We endow $\Q^\bullet$  with the  natural $*$-struture  defined  by
\be
f^*(x)=\overline{f(x)},\qquad \partial_i^*=-\partial_i,\qquad \xi^i{}^*=\xi^i.
\label{natural*}
\ee 

If one chooses $S$ so that a subset $S_t:=\{e_\alpha\}_{\alpha=1}^B$
($B\!:=\!A\!-\!k$)  
 is  complete in $\Xi_t$ (e.g. it consists of the $L_{i_1i_2...i_{k+1}}$), 
while $e_{B+a}:= s^b_a\Np^b$, with some  matrix $s_a^b(x)$  ($a,b\in\{1,...,k\}$)
invertible everywhere, then if
$\alpha,\beta\leq B$  the sum in (\ref{DCrel2})$_2$ is extended over 
$\gamma\leq B$.
The differential calculus algebra  $\QM^\bullet$   on $M_c$
is  the ${\cal X}^{\scriptscriptstyle M_c}$-bimodule generated by the $\xi^1,...,\xi^n,e_1,...,e_B$
modulo the relations (\ref{DCrel2}-\ref{DCrel1}) (with $\alpha,\beta\leq B$) and the ones
\bea 
&&\ba{l}
f_c^a\equiv f^a\!-\!c^a\1=0,\qquad
df^a\equiv\xi^hf_h^a=0,     
\ea     \qquad\quad a=1,...,k.              \label{DCMcrel}
\eea

\subsection{Twisted differential geometry of manifolds embedded in $\RR^n$}\label{SecTwistingVF}

Using a twist \ $\F\in \big(U\Xi_t\otimes U\Xi_t\big) [[\nu]]$ \ and following
the general twisting approach we deform the differential geometry
on ${\cal D}_f$ in a way compatible with the embeddings, i.e. so that it projects
to the twist deformation of the differential geometry on the submanifolds $M_c$, $c\in\RR^n$.
Equivalently, we deform the differential calculus algebra
$\Q^\bullet$ on ${\cal D}_f$ into an associated $\Q^\bullet_\star$  
in a way compatible with the embeddings, i.e. encoding
through projections all deformations $\QM^\bullet\leadsto\QMst^\bullet$. 
Unless explicitly stated,
we still denote by $X\star h=(\bR_1\trc h)\star(\bR_2\trc X)$ the vector field
that is $\star$-product of the one $X$ by the function $h$ from the right, as done so far.

To state the twisted analog of Proposition \ref{Propdiamond}
we first define a $\X_\star$-subbimodule 
$\Omega_{\scriptscriptstyle{\perp}\star}\subset \Omega_\star$:
\begin{equation}
\Omega_{\scriptscriptstyle{\perp}\star}:=
\{\omega\in\Omega_\star \:|\: \langle\Xi_{t\star},\omega\rangle_{\star}=0\} .
\end{equation}
%


\begin{prop}\label{prop07}
Equipped with the $\star$-Lie bracket $[\:,\:]_\star$ $\Xi_{t\star},\Xi_{\C\star}$ are  $\star$-Lie subalgebras of $\Xi_\star$, and $\Xi_{\C\C\star}$ is an ideal of $\Xi_{\C\star}$. Another $\star$-Lie algebra is thus
\be
\XiM{}_{\star}:=\Xi_{\C\star}/\Xi_{\C\C\star}\equiv\big\{\, [X]:=X+\Xi_{\C\C\star} \:\: |\:\: X\in\Xi_{\C\star}\big\}.                                \label{quotient'star}
\ee 
Moreover,  $\Xi_{t\star},\Xi_{\C\star}, \Xi_{\C\C\star},\XiM{}_{\star},
\Omega_{\scriptscriptstyle{\perp}\star}$
resp. coincide  with  $\Xi_t[[\nu]],\Xi_\C[[\nu]], \Xi_{\C\C}[[\nu]],\XiM[[\nu]],\Omega_{\scriptscriptstyle{\perp}}[[\nu]]$ as  $\CC[[\nu]]$-modules. 
$\Xi_{t\star}$,  $\XiM{}_{\star}$, $\Omega_{\scriptscriptstyle{\perp}\star}$ and the corresponding exterior algebras $\Xi_{t\star}^\bullet=\bigwedge_\star\!\!\!\!^\bullet\,\Xi_{t\star}$,  $\XiM^\bullet{}_{\star}=\bigwedge_\star\!\!\!\!^\bullet\,\XiM$,  
 $\Omega^\bullet_{\scriptscriptstyle{\perp}\star}=\bigwedge_\star\!\!\!\!^\bullet\,\Omega_{\scriptscriptstyle{\perp}\star}$  are  $U\Xi_t^\f$-equivariant $\X_\star$-bimodules. 
 $\Omega_{\scriptscriptstyle{\perp}\star}$ can be explicitly decomposed as
\be
\Omega_{\scriptscriptstyle{\perp}\star}=\bigoplus_a\X_\star\star\mathrm{d}f^a=\bigoplus_a\mathrm{d}f^a \star\X_\star.  \label{decoOmegadiamondstar}
\ee
\label{Propdiamondstar}
\end{prop}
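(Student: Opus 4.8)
Proof proposal.

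\textbf{Overall strategy.} The plan is to reduce everything to the undeformed statements of Proposition~\ref{Propdiamond} plus the general twist machinery of Section~\ref{TwistAlgStruc}, exploiting that the twist $\F$ is built from $U\Xi_t$, so that $\Xi_t$, $\Xi_\C$, $\Xi_{\C\C}$, $\XiM$ and $\Omega_\perp$ are all $U\Xi_t$-equivariant $\X$-bimodules (this is exactly Proposition~\ref{Propdiamond}). The key structural point is that applying the twist does not change underlying $\CC[[\nu]]$-modules — it only deforms products, brackets and actions — and that a $U\Xi_t$-equivariant $\X$-subbimodule of a $U\Xi_t$-equivariant $\X$-bimodule automatically becomes a $U\Xi_t^\f$-equivariant $\X_\star$-subbimodule of the deformed ambient object, by the remark in Section~\ref{TwistMod} that any $\A$-subbimodule closed under the Hopf action is $H^\f$-equivariant.

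\textbf{Step 1: module identifications.} First I would show $\Xi_{t\star}=\Xi_t[[\nu]]$, and likewise for $\Xi_{\C\star}$, $\Xi_{\C\C\star}$, $\Omega_{\perp\star}$, as $\CC[[\nu]]$-modules. For $\Xi_{t\star}$ this is the computation that $X\in\Xi_\star$ satisfies $X_\star(f^a)=0$ iff $X(f^a)=0$: by (\ref{defXst}), $X_\star(f^a)=(\bF_1\trc X)(\bF_2\trc f^a)$, and since $\F\in U\Xi_t\otimes U\Xi_t[[\nu]]$ we have $\bF_2\trc f^a=\varepsilon(\bF_2)f^a=f^a$ (vector fields in $\Xi_t$ annihilate $f^a$, hence so does any element of $U\Xi_t$ with zero counit part, and the counit part acts trivially); similarly $\bF_1\trc X$ ranges, order by order, over the same $\CC[[\nu]]$-span as $X$. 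Concretely $X_\star(f^a)=X(f^a)$ because $\bF=\1\ot\1+O(\nu)$ and the higher terms of $\bF_1$ act on $X$ while $\bF_2$ acts trivially on $f^a$; more precisely one argues inductively in the $\nu$-order. The analogous statements for $\Xi_\C$, $\Xi_{\C\C}$ follow the same way using that $X\mapsto X_\star(h)$ and $h\mapsto X_\star(h)$ preserve $\C[[\nu]]$ because $\C$ is a $U\Xi_t$-equivariant $\X$-subbimodule (Proposition~\ref{Propdiamond}) and the $\star$-product respects $\C[[\nu]]$ by (\ref{cond1}). For $\Omega_{\perp\star}$ one checks $\langle\Xi_{t\star},\omega\rangle_\star=0\Leftrightarrow\langle\Xi_t,\omega\rangle=0$ using (\ref{lerest}) and the analogous triviality of the $\F$-legs on $\Xi_t$-annihilated quantities.

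\textbf{Step 2: Lie/ideal structure and the quotient.} With the module identifications in hand, $\Xi_{t\star}$ is a $\star$-Lie subalgebra because $[X,Y]_\star=[\bF_1\trc X,\bF_2\trc Y]$ by the displayed formula for $[\,,\,]_\star$, and since $\bF_1\trc X,\bF_2\trc Y\in\Xi_t[[\nu]]$ (as $\Xi_t$ is $U\Xi_t$-equivariant) their ordinary bracket lies in $\Xi_t[[\nu]]=\Xi_{t\star}$; the same argument, using $U\Xi_t$-equivariance of $\Xi_\C$ and $\Xi_{\C\C}$ and that the latter is an ideal in $\Xi_\C$ (Proposition~\ref{Propdiamond}), gives that $\Xi_{\C\star}$ is a $\star$-Lie subalgebra and $\Xi_{\C\C\star}$ a $\star$-Lie ideal in it; hence the quotient $\XiM{}_\star$ in (\ref{quotient'star}) is a $\star$-Lie algebra, and as a $\CC[[\nu]]$-module it is $\Xi_\C[[\nu]]/\Xi_{\C\C}[[\nu]]=\XiM[[\nu]]$. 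Then $U\Xi_t^\f$-equivariance of each of $\Xi_{t\star}$, $\XiM{}_\star$, $\Omega_{\perp\star}$ (and of the $\star$-exterior algebras, which are $\X_\star$-subbimodules of the $\star$-tensor algebra closed under the action) follows from Proposition~\ref{Propdiamond} together with the Section~\ref{TwistMod} remark on subbimodules. The $\X_\star$-bimodule structure is the twisted one from (\ref{starprod})-type formulas, well-defined on each piece precisely because the pieces are $U\Xi_t$-stable.

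\textbf{Step 3: the explicit decomposition (\ref{decoOmegadiamondstar}).} For this I would twist the undeformed decomposition $\Omega_\perp=\bigoplus_a\X\,\mathrm d f^a$ of Proposition~\ref{Propdiamond}. The inclusion $\bigoplus_a\X_\star\star\mathrm d f^a\subseteq\Omega_{\perp\star}$ is clear since each $\mathrm d f^a\in\Omega_\perp[[\nu]]=\Omega_{\perp\star}$ and $\Omega_{\perp\star}$ is a left $\X_\star$-submodule. For the reverse inclusion and the directness, note that by (\ref{cond1}) the $f^a$ are central and $h\star\mathrm d f^a$ can be re-expanded: using the Leibniz-type rearrangement $h\star\mathrm d f^a=h\,\mathrm d f^a+O(\nu)$ with the $O(\nu)$ corrections themselves of the form $h'\,\mathrm d f^a$ (because $\bF_2\trc\mathrm d f^a=\mathrm d(\bF_2\trc f^a)=\mathrm d f^a$ since the higher $\F$-legs annihilate $f^a$), one shows order by order that $\bigoplus_a\X_\star\star\mathrm d f^a=\bigoplus_a\X\,\mathrm d f^a[[\nu]]=\Omega_\perp[[\nu]]=\Omega_{\perp\star}$, and the direct-sum property survives because the leading ($\nu^0$) term already gives a direct sum and the correction matrix is unipotent. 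The right-module version $\bigoplus_a\mathrm d f^a\star\X_\star$ is obtained identically, using again centrality of $f^a$.

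\textbf{Expected main obstacle.} The routine-but-delicate point is Step~1, making fully rigorous the claim that the twist legs act trivially on everything annihilated by $\Xi_t$ and hence that the defining conditions are insensitive to the deformation — one must be careful that $U\Xi_t$ elements with nonzero counit part still act as the counit does on $f^a$ only because $\Xi_t$-vector-field generators annihilate $f^a$, and then push an induction on the $\nu$-order through the bidifferential operators $B_l$. Everything else is a formal consequence of Proposition~\ref{Propdiamond} and the general equivariance remarks of Section~\ref{TwistMod}, so the write-up should be short.
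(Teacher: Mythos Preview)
Your proposal is correct and follows essentially the same route as the paper: reduce every assertion to Proposition~\ref{Propdiamond} using that $\F\in(U\Xi_t\otimes U\Xi_t)[[\nu]]$ and that $\Xi_t,\Xi_\C,\Xi_{\C\C},\Omega_\perp$ are $U\Xi_t$-equivariant $\X$-bimodules. The only notable difference is that you frame several steps as ``order by order in $\nu$'' inductions, whereas the paper uses the exact identities $g\trc df^a=\varepsilon(g)df^a$ and $h\star df^a=h\,df^a=df^a\star h$ (from (\ref{twistcond}) and $\Xi_t$-invariance of $f^a$) to dispense with any induction; this makes Step~3 in particular a one-line statement rather than a unipotent-matrix argument.
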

\vskip-.5cm

{\bf Proof} \ \
These are direct consequences of the
following properties.
By Proposition \ref{Propdiamond}, for all \ $h\in\X[[\nu]]$, \ $g\in U\Xi_t[[\nu]]$, \ $X,X'\in\Xi_t[[\nu]]$, \ $Y,Y'\in\Xi_\C[[\nu]]$, $Z\in\Xi_{\C\C}[[\nu]]$, \ $\omega=\omega_adf^a\in\Omega_{\scriptscriptstyle{\perp}}[[\nu]]$:  

\medskip
$\bullet$   $h\star X,\:\: X\star h,\:\: g\trc X$ \ and \ 
$[X,X']_\star$ \ belong to \  $\Xi_t[[\nu]]$.

\smallskip
$\bullet$   $h\star Y,\:\:Y\star h,\:\: g\trc Y$  \ and  \ $[Y,Y']_\star$ \ belong to \  $\Xi_\C[[\nu]]$.

\smallskip
$\bullet$   $h\star Z,\:\:Z\star h,\:\:g\trc Z$ \ and $[Y,Z]_\star$ \ belong to \  $\Xi_{\C\C}[[\nu]]$ \ (because $\Xi_{\C\C}$ is an ideal in $\Xi_\C$).

\smallskip
$\bullet$   $h\star [Y],\:\:[Y]\star h,\:\:g\trc  [Y]$ \ and $\big[ [Y], [Y']\big]_\star$ \ belong to \  $\XiM[[\nu]]$.

\smallskip
$\bullet$   $h\star df^a=h df^a=(df^a)\star h$ \ and \ 
$g\trc \omega=(g\trc \omega_a)df^a $ \ belong to \  $\Omega_{\scriptscriptstyle{\perp}}[[\nu]]$, \ by 
(\ref{twistcond}) and the relation $g\trc df^a=\varepsilon(g)df^a$.

\smallskip
$\bullet$   $\langle X,\omega\rangle_\star=\left\langle \bF_1\trc X,\bF_2\trc\omega\right\rangle=0$, \
because \ $\bF_1\trc X\in\Xi_t[[\nu]]$ \ and \ $\bF_2\trc\omega\in\Omega_{\scriptscriptstyle{\perp}}[[\nu]]$. \hfill $\Box$

\bigskip
\noindent
This means in particular that taking the quotient commutes with twisting. 
To build explicit examples of twist-deformed submanifolds we recall that several known types of Drinfel'd twists (as the ones  mentioned in section \ref{TwistSym}) are based on finite-dimensional Lie algebras. When does the infinite-dimensional 
$\Xi_t$ admit a finite-dimensional Lie subalgebra $\g$ over $\RR$,
so that we can choose $\F\in (U\g\otimes U\g)[[\nu]]$?
Given a set $S$ of vector fields that is complete in $\Xi_t$,  the question is which combinations   (with coefficients in $\X$)   of them, if any,
close a finite-dimensional Lie algebra $\g$. An easy answer is available
for the quadrics in $\RR^n$, see section \ref{Examples}.
If $\RR^n$ endowed with a metric admits a family $M_c$ of (pseudo)Riemannian submanifolds 
manifestly symmetric under a  
Lie group $\mathfrak{K}$ (its group of isometries)\footnote{For instance, the sphere $S^{n-1}$ is 
$SO(n)$ invariant; a cylinder in $\RR^3$ is invariant under $SO(2)\times \RR$; 
the hyperellipsoid of equation $(x^1)^2\!+\!(x^2)^2\!+\!2[(x^3)^2\!+\!(x^4)^2] =1$  is invariant under $SO(2)\times SO(2)$; etc.} ,  then a nontrivial $\g$ exists and 
contains the  (Killing) Lie algebra $\mathfrak{k}$ of  $\mathfrak{K}$ (if $M_c$ is maximally 
 symmetric  then $\mathfrak{k}$ is even complete - over $\X$ - in $\Xi_t$).
In the next subsections we consider such a case and stick to deformations induced 
by a twist $\mathcal{F}$ based on $\mathfrak{k}\subset \Xi_t$; under these assumptions 
the deformation is compatible with the geometry. $\Xi_{t\star},\Omega_{\scriptscriptstyle{\perp}\star}$
appear as addends in  the decomposition of $\Xi_\star$ in tangent and orthogonal vector fields.
In Section~\ref{Basesstar}  we first give explicit results for a generic metric, then specialize the discussion to the Euclidean  and  Minkowski metric.

\subsubsection{Twisted metric, Levi-Civita connection, intrinsic and extrinsic curvatures}
\label{TwistedMetric}

As seen in section \ref{Metric}, 
endowing ${\cal D}_f\subseteq\mathbb{R}^n$ with a (non-degenerate) metric $\gm$  makes all the $M_c\subset {\cal D}_f'$ into (pseudo)Riemannian submanifolds;  ${\cal D}_f'\subseteq{\cal D}_f$ is where the restriction 
$\gmp^{-1}$ is  non-degenerate.
For a generic twist $\F\in (U\Xi_t\otimes U\Xi_t)[[\nu]]$ we introduce the $\X$-subbimodules 
\bea
\Xips:=\{X\in\Xi_\star \:\:\:|\:\:\: \gm_\star(X,\Xi_{t\star})=0\}, \qquad
\Omega_{t\star}:=\{\omega\in\Omega_\star \:\:\: |\:\:\:  \gm^{-1}_\star(\omega,\Omps)=0  \},
\eea
Again, let $\k\subset\Xi_t$ the Lie subalgebra of Killing vector fields of $\gm$ that are also tangent to the submanifolds $M_c\subset {\cal D}_f'$. 
The twisted version of Proposition \ref{XiOmegaDeco} reads

\begin{prop}\label{prop08}
If $\F\in (U\mathfrak{k}\otimes U\mathfrak{k})[[\nu]]$
the $\star$-Lie algebra $\Xi_\star$ of 
smooth vector fields and the $\X_\star$-bimodule $\Omega_\star$ of 1-forms
on ${\cal D}_f'$ split into the direct sums of  $\X_\star$-subbimodules 
\bea
\Xi_\star=\Xi_{t\star}\oplus\Xi_{{\scriptscriptstyle{\perp}}\star},\qquad 
\quad \Omega_\star=\Omega_{t\star}\oplus\Omega_{\scriptscriptstyle{\perp}\star} \label{decoXi'star}
\eea
orthogonal with respect to the twisted metrics $\gm_\star$ and $\gm^{-1}_\star$ respectively. $\Xi_{t\star}$ is a $\star$-Lie subalgebra of $\Xi_\star$. 
$\Omega_{t\star},\Xips$ are orthogonal with respect to the $\star$-pairing, 
$\Omega_{t\star}\!=\!\{\omega\!\in\!\Omega_\star \, |\, \la \Xips,\omega\ra_\star\!=\!0\}$. \ 
%
%
Also the  restrictions  of $\gm^{-1}_\star$ (resp. $\gm$) to the {\rm tangent} and {\rm normal} 1-forms (resp. vector fields)
\be
\gmps^{-1}:=\gm^{-1}_\star|_{\Omps\ots\Omps}\colon\Omps\ots\Omps\to\X_\star, \qquad
\gm^{-1}_{t\star}:=\gm^{-1}_\star|_{\Omega_{t\star}\ots\Omega_{t\star}}\colon\Omega_{t\star}\ots\Omega_{t\star}\to\X_\star, 
\ee
\be
\gmps:=\gm_\star|_{\Xips\otimes_\star\Xips}:\Xips\otimes_\star\Xips\to\X_\star,
  \qquad 
   \gm_{t\star}:=\gm_\star|_{\Xi_{t\star}\otimes_\star\Xi_{t\star}}\colon\Xi_{t\star}\otimes_\star\Xi_{t\star}\to\X_\star \label{gmtstar}
\ee
are non-degenerate.
$\Xi_{t\star},\Omega_{{\scriptscriptstyle{\perp}}\star},\Xi_{{\scriptscriptstyle{\perp}}\star},\Omega_{t\star}$
resp. coincide  with  $\Xi_t[[\nu]],\Omp[[\nu]],\Xip[[\nu]],\Omega_t[[\nu]]$ as  $\CC[[\nu]]$-modules. Similarly for $\star$-tensor powers of the former.
The orthogonal projections
\ $\Pps: \Xi_\star\to \Xips$, \ $\Pts:\Xi_\star\to \Xi_{t\star}$,  \ $\Pps:\Omega_\star\to \Omega_{{\scriptscriptstyle{\perp}}\star}$, \ $\Pts:\Omega_\star\to \Omega_{t\star}$
are uniquely extended as projections to the bimodules of multivector fields and higher rank forms through the rules
$\Pps(\omega\star\omega')=\Pps(\omega) \star\Pp(\omega')$, $\Pts(\omega\star\omega')=\Pts(\omega) \star\Pts(\omega')$,...:
\bea
\Pps: \Omega^p_\star\to \Omega^p_{{\scriptscriptstyle{\perp}}\star}, \quad \Pts:  \Omega^p_\star\to \Omega^p_{t\star}, \quad
\Pps: \bigwedge_\star\nolimits^p\Xi_\star\to \bigwedge_\star\nolimits^p\Xi_{{\scriptscriptstyle{\perp}}\star}, \quad \Pts: \bigwedge_\star\nolimits^p\Xi_\star\to \bigwedge_\star\nolimits^p\Xi_{t\star}.
\eea
$\Pps,\Pts$ are the $\CC[[\nu]]$-linear extensions of $\Pp,\Pt$. \
 $\Xi_{t\star},\Xi_{{\scriptscriptstyle{\perp}}\star},\Omega_{t\star},\Omega_{{\scriptscriptstyle{\perp}}\star}$, their $\star$-exterior powers and the projections $\Pps,\Pts$  are $U\mathfrak{k}^\f$-equivariant. 
\label{XiOmegaDecostar}
\end{prop}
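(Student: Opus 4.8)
The plan is to deduce every assertion from the undeformed Proposition~\ref{XiOmegaDeco} together with Proposition~\ref{Propdiamondstar}, the point being that when $\F\in(U\k\ot U\k)[[\nu]]$ the twist acts only through $U\k$, and by Proposition~\ref{XiOmegaDeco} the modules $\Xi_t,\Xip,\Omega_t,\Omp$, their exterior powers and the projections $\Pp,\Pt$ are all $U\k$-equivariant. First I would record that, by Proposition~\ref{prop02} applied to the present twist (and by the same argument for the dual metric, which is $U\k$-invariant since $\gm$ is), one has $\gm_\star(X,Y)=\gm(\bF_1\trc X,\bF_2\trc Y)$ and $\gm^{-1}_\star(\omega,\eta)=\gm^{-1}(\bF_1\trc\omega,\bF_2\trc\eta)$, while $\la X,\omega\ra_\star=\la\bF_1\trc X,\bF_2\trc\omega\ra$ by (\ref{lerest}). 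Expanding each of the two legs of $\bF=\bF_1\ot\bF_2$ as a sum (finite at each order in $\nu$), every leg lies in $U\k[[\nu]]$ and hence maps each of $\Xi_t,\Xip,\Omega_t,\Omp$ into its own $\CC[[\nu]]$-extension; I will call this the \emph{componentwise expansion}.

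Next I would establish $\Xips=\Xip[[\nu]]$ and $\Omega_{t\star}=\Omega_t[[\nu]]$ (the identifications $\Xi_{t\star}=\Xi_t[[\nu]]$, $\Omega_{\perp\star}=\Omp[[\nu]]$, the $\star$-Lie subalgebra property of $\Xi_{t\star}$, and the relevant $U\k^\f$-equivariance being already contained in Proposition~\ref{Propdiamondstar}). For the inclusion ``$\supseteq$'': if $X\in\Xip[[\nu]]$ and $Y\in\Xi_{t\star}=\Xi_t[[\nu]]$, then by the componentwise expansion $\gm_\star(X,Y)$ is a sum of terms $\gm(\bF_1\trc X,\bF_2\trc Y)$ with $\bF_1\trc X\in\Xip[[\nu]]$ and $\bF_2\trc Y\in\Xi_t[[\nu]]$, each vanishing by the undeformed orthogonality $\gm(\Xip,\Xi_t)=0$; hence $X\in\Xips$, and likewise $\Omega_t[[\nu]]\subseteq\Omega_{t\star}$ using $\gm^{-1}(\Omega_t,\Omp)=0$. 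For ``$\subseteq$'': for $X\in\Xips$ split $X=\Pp(X)+\Pt(X)$ via the undeformed (\ref{decoXi'}); the inclusion just shown gives $\gm_\star(\Pp(X),\Xi_{t\star})=0$, so $\gm_\star(\Pt(X),\Xi_{t\star})=0$ with $\Pt(X)\in\Xi_{t\star}=\Xi_t[[\nu]]$, which forces $\Pt(X)=0$ by non-degeneracy of $\gm_{t\star}$; similarly, for $\omega\in\Omega_{t\star}$ one gets $\gm^{-1}_\star(\Pp(\omega),\Omps)=0$ with $\Pp(\omega)\in\Omega_{\perp\star}$, hence $\Pp(\omega)=0$ by non-degeneracy of $\gmps^{-1}$.

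The needed non-degeneracies all follow from one observation: by the componentwise expansion each restricted twisted (co)metric $\gm_{t\star},\gmps,\gm^{-1}_{t\star},\gmps^{-1}$ — and the $\star$-pairing $\Xips\ots\Omega_{\perp\star}\to\X_\star$ — is a $\CC[[\nu]]$-linear deformation of the corresponding undeformed restriction, with order-zero term the undeformed one; since a formal deformation of a non-degenerate pairing (resp. of an isomorphism) whose order-zero term is non-degenerate is itself non-degenerate (Neumann series), and the undeformed restrictions are non-degenerate (Proposition~\ref{XiOmegaDeco}) while the undeformed pairing $\Xip\times\Omp$ is perfect (dual bases $\N_\perp,\B_\perp$ of Proposition~\ref{XiOmegaDeco'}), the twisted versions inherit these properties. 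This gives, splitting $\omega=\Pt(\omega)+\Pp(\omega)$ and using perfectness of $\Xips\times\Omega_{\perp\star}$, the characterization $\Omega_{t\star}=\{\omega\in\Omega_\star\mid\la\Xips,\omega\ra_\star=0\}$ and the mutual orthogonality of $\Omega_{t\star}$ and $\Xips$ under the $\star$-pairing. The direct sums (\ref{decoXi'star}) are then (\ref{decoXi'}) tensored with $\CC[[\nu]]$; each summand is an $\X_\star$-subbimodule, being an $\X$-subbimodule closed under $U\k$ (e.g. $h\star X=(\bF_1\trc h)(\bF_2\trc X)$ stays in the module because $\bF_2\trc X$ does and $\X$ acts on it, and likewise for $X\star h$). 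I would then define $\Pps,\Pts$ as the $\CC[[\nu]]$-linear extensions of $\Pp,\Pt$; since $\Pp(\Xi[[\nu]])\subseteq\Xip[[\nu]]=\Xips$, $\Pps^2=\Pps$, $\Pps|_{\Xips}=\id$ and $\ker\Pps=\Xi_{t\star}$, these are exactly the orthogonal projections for (\ref{decoXi'star}).

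Finally, because $\ots$ and $\wedge_\star$ are invertible re-orderings of $\ot,\wedge$ equal to the identity at order $\nu^0$, the $\star$-tensor and $\star$-exterior powers of $\Xi_{t\star},\Xips,\Omega_{t\star},\Omega_{\perp\star}$ coincide with the $\CC[[\nu]]$-extensions of the undeformed ones, and $\Pps(\omega\star\omega')=\Pps(\omega)\star\Pps(\omega')$ (and likewise for $\Pts$ and on multivector fields) follows from $\Pp(\alpha\wedge\beta)=\Pp(\alpha)\wedge\Pp(\beta)$, the $U\k$-equivariance of $\Pp$, and the componentwise expansion of $\bF$. The $U\k^\f$-equivariance of all these modules and of $\Pps,\Pts$ is immediate from the $U\k$-equivariance of their undeformed counterparts (Proposition~\ref{XiOmegaDeco}), since the $U\k^\f$-action on a module is the $\CC[[\nu]]$-bilinear extension of the $U\k$-action. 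I expect the only real obstacle to be the two identifications $\Xips=\Xip[[\nu]]$, $\Omega_{t\star}=\Omega_t[[\nu]]$: these genuinely require $\F$ to be built on $\k$ rather than merely on $\Xi_t$ — for $\F\in U\Xi_t\ot U\Xi_t$ the spaces $\Xip,\Omega_t$ and the projections $\Pp,\Pt$ are not equivariant (cf. the remark after Proposition~\ref{XiOmegaDeco}) and the identification fails — and they rely on the persistence of non-degeneracy under formal deformation; everything else is a mechanical transcription of Proposition~\ref{XiOmegaDeco} through the componentwise expansion.
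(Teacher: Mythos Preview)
Your proposal is correct and follows essentially the same approach as the paper: reduce everything to the undeformed Proposition~\ref{XiOmegaDeco} via $\gm_\star(X,Y)=\gm(\bF_1\trc X,\bF_2\trc Y)$ and the $U\k$-stability of $\Xi_t,\Xip,\Omega_t,\Omp$. The only organizational difference is that the paper proves $\Xips=\Xip[[\nu]]$ by a direct order-by-order expansion of $X=\sum_n\nu^n X_n$ (showing each $X_n\in\Xip$), whereas you first package the order-by-order argument as ``non-degeneracy persists under formal deformation'' for $\gm_{t\star}$ and then use the undeformed splitting $X=\Pp(X)+\Pt(X)$ to kill $\Pt(X)$; both routes unwind to the same inductive step and neither buys anything the other does not.
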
 
Again we stress that, while $\Xi_{t\star}$ is a $\star$-Lie subalgebra of $\Xi_\star$, in general $\Xi_{{\scriptscriptstyle{\perp}}\star}$ is not.
Furthermore, as $\Xi_{{\scriptscriptstyle{\perp}}\star},\Omega_{t\star}$ are not $U\Xi_t^\f$-equivariant, also the orthogonal projections $\Pps,\Pts$ are not. 

\bp{}
By Proposition \ref{Propdiamondstar} \
 $\Xi_{t\star}$ is a $\star$-Lie subalgebra of $\Xi_\star$ and a $U\Xi_t^\f$-equivariant 
$\X_\star$-subbimodule;
in particular it is  $U\mathfrak{k}^\f$-equivariant.
Moreover, according to Proposition~\ref{prop02},
\begin{equation}
    \gm_\star(X,Y)
    =\gm(\overline{\F}_1\trc X,\overline{\F}_2\trc Y)
    =\gm(X,Y)+\mathcal{O}(\nu)
    \text{ for all }X,Y\in\Xi.    \nonumber
\end{equation}
If $X\in\Xi_{\scriptscriptstyle{\perp}}$ (i.e. $\gm(X,Y)=0$ for all $Y\in\Xi_t$) it
follows that
\begin{equation}
    \gm_\star(X,Y)
    =\gm(\underbrace{\overline{\F}_1\trc X}_{\in\Xi_{\scriptscriptstyle{\perp}}[[\nu]]}
    ,\underbrace{\overline{\F}_2\trc Y}_{\in\Xi_t[[\nu]]})
    =0           \nonumber
\end{equation}
for all $Y\in\Xi_t$, i.e. $\Xi_{{\scriptscriptstyle{\perp}}}[[\nu]]\subseteq
\Xi_{{\scriptscriptstyle{\perp}}\star}$. On the other hand, for every
$X=\sum_{n=0}^\infty\nu^nX_n\in\Xi_{{\scriptscriptstyle{\perp}}\star}$ with
$X_n\in\Xi$ it follows that $0=\gm_\star(X,Y)=\gm(X_0,Y)+\mathcal{O}(\nu)$
for all $Y\in\Xi_t$, i.e. $\gm(X_0,Y)=0$ for all $Y\in\Xi_t$. In other words
$X_0\in\Xi_{\scriptscriptstyle{\perp}}$. Also $X_1\in\Xi_{\scriptscriptstyle{\perp}}$,
since
\begin{equation}
    0=\gm_\star(X,Y)
    =\underbrace{\gm(X_0,Y)}_{=0}
    +\nu\bigg(
    \gm(X_1,Y)
    +\underbrace{\gm(\overbrace{\overline{F}_1^1\trc X_0}^{\in\Xi_{\scriptscriptstyle{\perp}}},
    \overbrace{\overline{F}_2^1\trc Y}^{\in\Xi_t})}_{=0}
    \bigg)
    +\mathcal{O}(\nu^2) \nonumber
\end{equation}
for all $Y\in\Xi_t$, where $\overline{\F}=\sum_{n=0}^\infty\nu^n
\overline{F}^n_1\otimes\overline{F}^n_2$ and $\overline{F}^n_1\otimes\overline{F}^n_2\in
U\k\otimes U\k$.
Inductively $X_n\in\Xi_{\scriptscriptstyle{\perp}}$ for all $n\geq 0$, implying
$\Xi_{{\scriptscriptstyle{\perp}}}[[\nu]]=\Xi_{{\scriptscriptstyle{\perp}}\star}$, as claimed.
This also implies the equality 
\begin{equation}
    \mathrm{pr}_{{\scriptscriptstyle{\perp}}\star}
    =\mathrm{pr}_{{\scriptscriptstyle{\perp}}}
    \colon\Xi[[\nu]]\rightarrow\Xi_{{\scriptscriptstyle{\perp}}}[[\nu]].
\end{equation}
Note that $\gm^{-1}_\star(\omega,\alpha)=
\gm^{-1}(\overline{\mathcal{F}}_1\trc \omega,\overline{\mathcal{F}}_2\trc \alpha)$
for all $\omega,\alpha\in\Omega$, since $\mathcal{F}$ is based on Killing vector fields. 
Now assume that $X\in\Xips$ (resp. $X\in\Xi_{t\star}$) fulfills $\gmps(X,\Xips)=0$
 (resp. $\gm_{t\star}(X,\Xi_{t\star})=0$). Expanding $X$ and $\gmps$ 
 (resp. $\gm_{t\star}$)
in $\nu$-powers and arguing as above, we find   $X=0$, whence
the non-degeneracy of $\gmps$ (resp. $\gm_{t\star}$).
By employing (\ref{lerest}), Proposition~\ref{XiOmegaDeco} and the equivariance of the $\star$-pairing and $\gm^{-1}_\star$, one similarly proves that $\Omega_{t\star}=\Omega_t[[\nu]]$,
 $\Omega_{\scriptscriptstyle{\perp}\star}=\Omega_{\scriptscriptstyle{\perp}}[[\nu]]$
and the non-degeneracy of  $\gmps^{-1},\gm_{t\star}^{-1}$ on $\Omega_\star$.
Let $X\in\Xi_{{\scriptscriptstyle{\perp}}\star}$,  $\omega\in\Omega_{t\star}$,
$\xi\in U\mathfrak{k}^\f $. Then
\bea
 &&    \gm_\star(\xi\trc X,Y)
    =\xi_{\widehat{(1)}}\trc\gm_\star(X,S_\f(\xi_{\widehat{(2)}})\trc Y)    =0\quad \forall Y\in\Xi_{t\star}
\qquad\Rightarrow \qquad \xi\trc X\in\Xi_{{\scriptscriptstyle{\perp}}\star} \nn[6pt]
 &&   \la X,\xi\trc \omega\ra_\star
    =\xi_{\widehat{(1)}}\trc\la S_\f(\xi_{\widehat{(2)}})\trc X, \omega \ra_\star
    =0\quad  \forall X\in\Xi_{{\scriptscriptstyle{\perp}}\star}\qquad\Rightarrow \qquad \xi\trc \omega\in\Omega_{t\star} \nonumber
\eea
since $\gm_\star$ and the $\star$-pairing are equivariant under 
the action of $U\mathfrak{k}^\f $
and $\Xi_{t\star}$ is a $U\mathfrak{k}^\f $-equivariant $\X_\star$-bimodule.
This proves  that also
$\Xi_{{\scriptscriptstyle{\perp}}\star},\Omega_{t\star}$ are $U\k$-equivariant $\X$-bimodules. 
To verify that $\Pps$ is $U\mathfrak{k}^\f$-equivariant let 
$X=X_{t\star}+X_{{\scriptscriptstyle{\perp}}\star}\in\Xi_\star$ be the decomposition
(\ref{decoXi'star}) with $X_{t\star}\in\Xi_{t\star}$ and $X_{{\scriptscriptstyle{\perp}}\star}\in\Xi_{{\scriptscriptstyle{\perp}}\star}$. Then
$\xi\trc X=\xi\trc X_{t\star}+\xi\trc X_{{\scriptscriptstyle{\perp}}\star}$ and according to the $U\k^\f$-invariance of $X_{t\star},X_{{\scriptscriptstyle{\perp}}\star}$
\begin{align*}
    \mathrm{pr}_{{\scriptscriptstyle{\perp}}\star}(\xi\trc X)
    =\mathrm{pr}_{{\scriptscriptstyle{\perp}}\star}(\xi\trc X_{t\star}+\xi\trc X_{{\scriptscriptstyle{\perp}}\star})
    =\xi\trc X_{{\scriptscriptstyle{\perp}}\star}
    =\xi\trc\mathrm{pr}_{{\scriptscriptstyle{\perp}}\star}(X)
\end{align*}
for all $\xi\in U\mathfrak{k}^\f$.
Similarly one argues  with $\Pts$ on $\Omega_\star$ and on the $\star$-exterior
powers of $\Xi_{\star},\Omega_{\star}$. 
Finally, 
$\Omega_{t\star}\!=\!\{\omega\!\in\!\Omega_\star \, |\, \la \Xips,\omega\ra_\star\!=\!0\}$  follows from its undeformed counterpart and the previous results.
\ep

As in the undeformed case, we  identify 
$\Xi_{t\star}\subset\Xi_\star$ with the $\star$-Lie subalgebra of smooth vector fields tangent to {\it all} $M_c$ ($c\in f({\cal D}_f')$) at all points, because $X(f^a)=0$ implies  $X(f_c^a)=0$;
and $\XiM{}_\star\subset\Xi_\star$ defined in (\ref{quotient'star}) with  the twisted Lie subalgebra of smooth vector fields tangent to $M$ at all points. Similarly, we   identify 
$\Omega_{t\star}$ with  the subbimodule of $\Omega_\star$ tangent to {\it all} $M_c$ 
($c\in f({\cal D}_f')$) at all points.  \
We find \ $\Omega_{t\star}\subset\Omega_{\C\star}:=\left\{\omega\in\Omega_\star \:\:\: |\:\:\:  \la \Xi_{{\scriptscriptstyle{\perp}}\star},\omega\ra_\star\subset\C[[\nu]] 
\right\}. 
$ \
Let \ $\Omega_{\C\C\star}:=\bigoplus_{a=1}^k f^a\star\Omega_\star=\bigoplus_{a=1}^k \Omega_\star \star f^a\subset\Omega_{\C\star}$. \
It fulfills \ $ \la \Xi_\star,\Omega_{\C\C\star}\ra_\star\subset\C[[\nu]]$. \
We can identify the $\XM_\star$-bimodule of 1-forms $\Omega_{{\scriptscriptstyle M}\star}$ on $M$ with the quotient
\be
\Omega_{{\scriptscriptstyle M}\star}=\Omega_{\C\star}/\Omega_{\C\C\star}=\left\{[\omega]=\omega+\Omega_{\C\C\star}\:\: |\:\: \omega\in\Omega_{\C\star} \right\}.
\ee

\begin{prop} 
For all \ $X\in\Xi_{\mathcal{C}\star}$, \ $\omega\in\Omega_{\mathcal{C}\star}$ \ the tangent projections \ $X_{t\star}:=\mathrm{pr}_{t\star}(X)\in\Xi_{t\star}$, \
$\omega_{t\star}:=\mathrm{pr}_{t\star}(\omega)\in\Omega_{t\star}$ \
respectively belong to \ $[X]\in\Xi_{\Ms\star}$ \ and \ $[\omega]\in\Omega_{\Ms\star}$
 \label{XCreprstar}
\end{prop}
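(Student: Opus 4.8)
The plan is to reduce the statement, order by order in $\nu$, to the undeformed Proposition~\ref{prop05}, exploiting that for a twist $\F\in(U\k\otimes U\k)[[\nu]]$ every module and projection involved is the $\CC[[\nu]]$-linear extension of its undeformed analogue. First I would set up the reduction. By Proposition~\ref{prop08} the $\star$-orthogonal splittings (\ref{decoXi'star}) hold and $\Pts,\Pps$ are the $\CC[[\nu]]$-linear extensions of $\Pt,\Pp$, so $X=X_{t\star}+\Pps(X)$ and $\omega=\omega_{t\star}+\Pps(\omega)$ with $X_{t\star}=\Pts(X)\in\Xi_{t\star}$ and $\omega_{t\star}=\Pts(\omega)\in\Omega_{t\star}$. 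Since $\Xi_{t\star}=\Xi_t[[\nu]]\subseteq\Xi_\C[[\nu]]=\Xi_{\C\star}$ (using $\Xi_t\subseteq\Xi_\C$, cf. (\ref{defeqXis}), and Proposition~\ref{prop07}) and $\Omega_{t\star}\subseteq\Omega_{\C\star}$ (noted just before the statement), the classes $[X_{t\star}]\in\Xi_{\Ms\star}$ and $[\omega_{t\star}]\in\Omega_{\Ms\star}$ are well defined, and the claim becomes: $\Pps(X)=X-X_{t\star}\in\Xi_{\C\C\star}$ and $\Pps(\omega)=\omega-\omega_{t\star}\in\Omega_{\C\C\star}$.

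Next I would identify the four ideals with formal power series over their undeformed counterparts. For vector fields this is already given by Proposition~\ref{prop07}: $\Xi_{\C\star}=\Xi_\C[[\nu]]$ and $\Xi_{\C\C\star}=\Xi_{\C\C}[[\nu]]$. For $1$-forms I would argue as follows. Each $f^a$ and $df^a$ is $U\Xi_t$-invariant ($g\trc df^a=\varepsilon(g)df^a$, Proposition~\ref{Propdiamond}) and each normal frame field $\Np^a$ is $\k$-invariant (Proposition~\ref{XiOmegaDeco'}); since $\F\in(U\k\otimes U\k)[[\nu]]$, the counit condition (\ref{twistcond}) then makes $\star$-multiplication by functions on these generators collapse to ordinary multiplication, $h\star df^a=h\,df^a=(df^a)\star h$ and likewise for $\Np^a$. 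Hence $\Omega_{\C\C\star}=\bigoplus_a f^a\star\Omega_\star=\bigoplus_a f^a\Omega[[\nu]]=\Omega_{\C\C}[[\nu]]$, and $\{\Np^a\}_{a=1}^k$, $\{df^a\}_{a=1}^k$ are bases of $\Xips$, $\Omps$ over $\X[[\nu]]$ and equivalently over $\X_\star$. Moreover, by the $\k$-invariance of $\Np^a$ and (\ref{twistcond}), (\ref{lerest}) gives $\la\Np^a,\omega\ra_\star=\la\Np^a,\omega\ra$ for all $\omega\in\Omega_\star$; combining this with the $\X_\star$-linearity (\ref{linst}) of the $\star$-pairing, the basis property just recorded, and the fact that $\C[[\nu]]$ is a two-sided $\star$-ideal of $\X_\star$ (a consequence of (\ref{cond1})), one obtains $\Omega_{\C\star}=\{\omega\in\Omega_\star\mid\la\Np^a,\omega\ra\in\C[[\nu]]\ \text{for all}\ a\}=\Omega_\C[[\nu]]$.

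With these identifications the conclusion is immediate. Writing $X=\sum_{n\ge 0}\nu^n X_n$ with $X_n\in\Xi_\C$ (legitimate by the above), $\CC[[\nu]]$-linearity of $\Pp$ gives $\Pps(X)=\sum_n\nu^n\Pp(X_n)$, and $\Pp(X_n)\in\Xi_{\C\C}$ by Proposition~\ref{prop05}, so $\Pps(X)\in\Xi_{\C\C}[[\nu]]=\Xi_{\C\C\star}$; the identical argument with $\omega\in\Omega_\C[[\nu]]$ (or, alternatively, the explicit formulas $\Pp(X)=\sum_a X(f^a)\,\Np^a$, $\Pp(\omega)=\sum_a\la\Np^a,\omega\ra\,df^a$ together with $X(f^a),\la\Np^a,\omega\ra\in\C[[\nu]]$) gives $\Pps(\omega)\in\Omega_{\C\C\star}$. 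Thus $[X]=[X_{t\star}]$ in $\Xi_{\Ms\star}$ and $[\omega]=[\omega_{t\star}]$ in $\Omega_{\Ms\star}$, and by the same argument the analogous statement for $\star$-multivector fields and higher $\star$-forms. I expect the only step that does not reduce to a mechanical order-by-order transcription of Propositions~\ref{prop05} and \ref{XiOmegaDeco'} to be the identification $\Omega_{\C\star}=\Omega_\C[[\nu]]$ --- i.e. verifying that the $\star$-pairing condition defining $\Omega_{\C\star}$ collapses to the ordinary pairing condition --- and that is precisely where the hypothesis that $\F$ is built from the $\k$-invariant normal frame is used.
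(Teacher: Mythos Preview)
Your proof is correct and follows essentially the same strategy as the paper's: invoke Propositions~\ref{prop07} and~\ref{prop08} to identify all the relevant $\star$-modules and the projections $\Pts,\Pps$ with the $\CC[[\nu]]$-linear extensions of their undeformed counterparts, then apply Proposition~\ref{prop05} order by order in $\nu$. The paper's proof is a three-line sketch that simply asserts these identifications and cites Proposition~\ref{prop05}; you supply the one detail the paper glosses over, namely the explicit verification that $\Omega_{\C\star}=\Omega_\C[[\nu]]$ and $\Omega_{\C\C\star}=\Omega_{\C\C}[[\nu]]$ via the $\k$-invariance of the $\Np^a$ and the collapse of the $\star$-pairing on that frame --- a step not covered by the statements of Propositions~\ref{prop07} or~\ref{prop08} as written.
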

Consequently, we can represent every element of $\Xi_{{\scriptscriptstyle M}\star}$ (resp. $\Omega_{{\scriptscriptstyle M}\star}$) by an element of $\Xi_{t\star}$ (resp. $\Omega_{t\star}$). Similarly for multivector fields and higher rank forms.

\bp{} 
By Propositions~\ref{prop07}, \ref{prop08} the twist-deformed spaces can be
identified with formal power series of the undeformed ones and the twisted projections are
given by the $[[\nu]]$-linear extensions of the undeformed ones. The claim follows
as a corollary of Proposition~\ref{prop05}.
\ep

Motivated from the classical situation we define the \textit{twisted first} and
\textit{second fundamental form} on the family of submanifolds $M_c$ by
\begin{equation}\label{eq08}
\begin{split}
    \gm_{t\star}:=&\gm_\star|_{\Xi_{t\star}\otimes_\star\Xi_{t\star}}
    \colon\Xi_{t\star}\ot_\star\Xi_{t\star}\to\X_{\star},\\
    II^\f_\star:=&\mathrm{pr}_{\scriptscriptstyle{\perp}\star}
    \circ\nabla^\f|_{\Xi_{t\star}\otimes_\star\Xi_{t\star}}\colon
    \Xi_{t\star}\otimes_\star\Xi_{t\star}\rightarrow\Xi_{{\scriptscriptstyle{\perp}}\star},
\end{split}    
\end{equation}
as well as the \textit{twisted projected Levi-Civita connection} on the family of submanifolds $M_c$
\begin{equation}
    \nabla\!_t^{\,\f}:=\mathrm{pr}_{t\star}
    \circ\nabla^\f|_{\Xi_{t\star}\otimes\Xi_{t\star}}
    \colon\Xi_{t\star}\ot\Xi_{t\star}\to\Xi_{t\star}.
\end{equation}
In the following proposition we clarify the relation of these objects to their
classical counterparts. In particular, that twist deformation and
projection to the submanifold commute.
\begin{prop}\label{lem01star}
$\nabla\!_t^{\,\f}$ is a twisted covariant
derivative on the family of submanifolds $M_c$. The twisted first fundamental form $\gm_{t\star}$ is a metric on the family,
with corresponding twisted Levi-Civita covariant derivative $\nabla\!_t^{\,\f}$.
They, as well as the second fundamental form, are $U\mathfrak{k}^\f$-equivariant.
In terms of the undeformed objects we obtain
\begin{equation}\label{eq22}
    \gm_{t\star}(X,Y)
    =\gm_t(\bF_1\trc X,\bF_2\trc Y),
\end{equation}
\begin{equation}\label{II^F}
    II^\f_\star(X,Y)
    =II(\overline{\F}_1\trc X,\overline{\F}_2\trc Y),
\end{equation}
and
\begin{equation}\label{eq23}
    \nabla\!_{t,X}^{\,\f}Y
    =\nabla\!_{t,\overline{F}_1\trc X}(\bF_2\trc Y)
\end{equation}
for all $X,Y\in \Xi_{t\star}=\Xi_t[[\nu]]$. Furthermore
\begin{equation}
    \nabla^\f_X 
    =\nabla\!_t^{\,\f}+II^\f_\star
    \colon\Xi_{t\star}\ot\Xi_{t\star}\to\Xi_\star.
\end{equation}
\end{prop}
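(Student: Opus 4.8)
The strategy is to reduce everything to the undeformed statements of Propositions~\ref{lem01} and~\ref{prop02} by exploiting that, by Propositions~\ref{prop07} and~\ref{prop08}, all the twisted spaces $\Xi_{t\star},\Xi_{\perp\star},\Omega_{t\star},\X_\star$ are just the $\CC[[\nu]]$-modules $\Xi_t[[\nu]],\dots$ and the twisted projections $\mathrm{pr}_{t\star},\mathrm{pr}_{\perp\star}$ are the $\CC[[\nu]]$-linear extensions of $\mathrm{pr}_t,\mathrm{pr}_\perp$, while $\nabla^\f$ is given by the explicit formula (\ref{twistedNabla}), $\nabla^\f_XT=\nabla_{\bbf_1\trc X}(\bF_2\trc T)$, valid here because $\F\in(U\k\otimes U\k)[[\nu]]$ with $\k\subseteq\mathfrak{e}$ (Proposition~\ref{prop02}). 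I would organize the argument in four short steps.

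First I would establish the three ``explicit'' formulas (\ref{eq22}), (\ref{II^F}), (\ref{eq23}). For (\ref{eq22}) apply (\ref{twistedmetric2}) of Proposition~\ref{prop02} and restrict both arguments to $\Xi_{t\star}=\Xi_t[[\nu]]$; since $\F$ is based on $\k\subset\Xi_t$ and $\Xi_t$ is $U\k$-equivariant (Proposition~\ref{Propdiamond}), $\bF_i\trc X$ again lies in $\Xi_t[[\nu]]$, so $\gm(\bF_1\trc X,\bF_2\trc Y)=\gm_t(\bF_1\trc X,\bF_2\trc Y)$. For (\ref{II^F}) and (\ref{eq23}) use (\ref{twistedNabla}): for $X,Y\in\Xi_t[[\nu]]$ we have $\nabla^\f_XY=\nabla_{\bbf_1\trc X}(\bF_2\trc Y)$ with $\bbf_1\trc X,\bF_2\trc Y\in\Xi_t[[\nu]]$, hence this equals $\nabla|_{\Xi_t\otimes\Xi_t}(\bbf_1\trc X\otimes\bF_2\trc Y)=\nabla_t(\bbf_1\trc X,\bF_2\trc Y)+II(\bbf_1\trc X,\bF_2\trc Y)$ by the decomposition $\nabla|_{\Xi_t\otimes\Xi_t}=\nabla_t+II$ from section~\ref{Metric}. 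Applying $\mathrm{pr}_{\perp\star}=\mathrm{pr}_\perp$ (resp. $\mathrm{pr}_{t\star}=\mathrm{pr}_t$), which is $\CC[[\nu]]$-linear and annihilates the tangent (resp. normal) part, yields (\ref{II^F}) (resp. (\ref{eq23})), and simultaneously the last displayed identity $\nabla^\f_X=\nabla_t^{\,\f}+II^\f_\star$ on $\Xi_{t\star}\otimes\Xi_{t\star}$.

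Second I would verify that $\nabla_t^{\,\f}$, defined by $\mathrm{pr}_{t\star}\circ\nabla^\f|_{\Xi_{t\star}\otimes\Xi_{t\star}}$, is a twisted covariant derivative on the family, i.e. satisfies (\ref{eq04})--(\ref{eq00}), the Leibniz rule (\ref{Leibddsg'}), and the pairing compatibility (\ref{compleredd'}) with $\Xi,\Omega,\X$ replaced by $\Xi_{t\star},\Omega_{t\star},\X_\star$. Here I would invoke Proposition~\ref{TwistedNabla}: since $\nabla^\f$ is of the form (\ref{twistedNabla}) with $\F\in(U\mathfrak{e}\otimes U\mathfrak{e})[[\nu]]$ (indeed $U\k\subseteq U\mathfrak{e}$), the Leibniz rule already degenerates to the simple form (\ref{Leibddsg'}), (\ref{compleredd'}); composing with the $\X_\star$-linear, $U\k^\f$-equivariant projection $\mathrm{pr}_{t\star}$ and using that $\mathrm{pr}_{t\star}$ is identity on $\Xi_{t\star}$ preserves these properties. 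Linearity properties (\ref{eq03}), (\ref{eq00}) are immediate from $\X_\star$-linearity of $\mathrm{pr}_{t\star}$ in the first argument and additivity in the second. For (\ref{eq04}) note $\nabla_{t,X}^{\,\f}h=\mathrm{pr}_{t\star}(\nabla^\f_Xh)=\mathrm{pr}_{t\star}(\L^\star_X(h))=\L^\star_X(h)$ since $\L^\star_X(h)\in\X_\star$.

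Third I would identify $\gm_{t\star}$ as a metric on the family with twisted Levi-Civita connection $\nabla_t^{\,\f}$. Non-degeneracy of $\gm_{t\star}$ is exactly the statement in Proposition~\ref{XiOmegaDecostar}, and symmetry (flip-invariance) follows from symmetry of $\gm_t$ together with the fact that $\F$ is based on $\k$, which makes $\gm_{t\star}$ the twist of the symmetric element $\tilde\gm_t$ (compare the discussion of $\tau(\gm_t)=\gm_t$ in section~\ref{Metric} and (\ref{twistedmetric2})). To see $\nabla_t^{\,\f}$ is its Levi-Civita connection I would check the two defining conditions: torsion-freeness follows by projecting the torsion identity for $\nabla^\f$ (which is torsion-free by Proposition~\ref{prop02}) and using that $\mathrm{pr}_{t\star}$ is a $\star$-Lie algebra morphism on $\Xi_{t\star}$ (Proposition~\ref{prop07}, which gives the $\star$-Lie subalgebra structure and hence $[X,Y]_\star\in\Xi_{t\star}$ for $X,Y\in\Xi_{t\star}$); metric compatibility, i.e. (\ref{twistedMetricCov'}) with $\gm_\star,\nabla^\f$ replaced by $\gm_{t\star},\nabla_t^{\,\f}$, follows by restricting (\ref{twistedMetricCov'}) to $X,Y,Z\in\Xi_{t\star}$, decomposing $\nabla^\f_{\br_2\trc X}Z=\nabla_{t,\br_2\trc X}^{\,\f}Z+II^\f_\star(\br_2\trc X,Z)$, and noting that $\gm_{t\star}$ kills the normal pieces $II^\f_\star$ — here one uses that $\R\in U\k^\f\otimes U\k^\f$ preserves $\Xi_{t\star}$ and $\Xi_{\perp\star}$, so the cross terms $\gm_\star(\Xi_{t\star},\Xi_{\perp\star})=0$ by orthogonality in Proposition~\ref{XiOmegaDecostar}. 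Uniqueness is then automatic from \cite{AschieriCastellani2009}~Theorem~6, or from \cite{TWeber2019}~Lemma~3.12 in the equivariant setting.

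Finally the $U\k^\f$-equivariance of $\gm_{t\star},II^\f_\star,\nabla_t^{\,\f}$: each is a composition of $U\k^\f$-equivariant maps — $\nabla^\f$ is $U\mathfrak{e}^\f$-equivariant hence $U\k^\f$-equivariant by Proposition~\ref{TwistedNabla}, the restriction to the $U\k^\f$-equivariant subbimodule $\Xi_{t\star}\otimes_\star\Xi_{t\star}$ is equivariant, and $\mathrm{pr}_{t\star},\mathrm{pr}_{\perp\star}$ are $U\k^\f$-equivariant by Proposition~\ref{XiOmegaDecostar}; likewise $\gm_\star$ restricted to $\Xi_{t\star}$. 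I expect the main obstacle to be the bookkeeping in the metric-compatibility verification of step three: one must carefully track the several $\overline{\mathcal{R}}$-factors in (\ref{twistedMetricCov'}) and confirm that, after inserting the $\nabla_t^{\,\f}+II^\f_\star$ decomposition, every term containing $II^\f_\star$ is paired against something in $\Xi_{t\star}$ via $\gm_\star$ and therefore vanishes — this hinges on the $\R$-factors mapping $\Xi_{t\star}$ into itself, which is exactly the content of $U\k^\f$-equivariance of $\Xi_{t\star}$. Everything else reduces cleanly to the classical Propositions~\ref{lem01}, \ref{prop02} by $\CC[[\nu]]$-linear extension.
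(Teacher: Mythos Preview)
Your proposal is correct and follows essentially the same approach as the paper: establish the explicit formulas (\ref{eq22})--(\ref{eq23}) from (\ref{twistedmetric2}) and (\ref{twistedNabla}), then reduce the metric and Levi-Civita properties to the undeformed Proposition~\ref{lem01}, with equivariance following from composition of equivariant maps. The paper's own proof is considerably terser---it simply cites Proposition~\ref{lem01} to conclude that $\gm_{t\star}$ is a metric with twisted Levi-Civita connection $\nabla_t^{\,\f}$, rather than verifying torsion-freeness and metric compatibility directly as you do in Step~3---but your more explicit verification is sound and amounts to unfolding the same reasoning.
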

\begin{proof}{} 
As a composition of $U\mathfrak{k}^\f$-equivariant maps, 
$\gm_{t\star}$, $II^\f_\star$ and $\nabla\!_t^{\,\f}$ also are. Eq.
(\ref{eq22}) follows from (\ref{twistedmetric2}).
Since $\nabla^\f_X=\nabla_{\overline{\f}_1\trc X}\overline{\F}_2\trc$ for all
$X\in\Xi$ we find  (\ref{II^F}) and (\ref{eq23})
(see also \cite{AschieriCastellani2009}~eq.~129). Then it follows from 
Proposition~\ref{lem01} that $\gm_{t\star}$ is a (non-degenerate) 
metric on the $M_c$'s
 with twisted Levi-Civita covariant derivative given by $\nabla\!_t^{\,\f}$.
\end{proof}

\bigskip
\noindent 
A generalization of
Proposition~\ref{lem01star} to braided commutative geometry is in \cite{TWeber2019}~Proposition~4.4.

The twisted second fundamental form (\ref{eq08})
yields the twisted extrinsic curvature of $M$.  The twisted intrinsic curvature $\rR_{t\star}^\f$ is related to the twisted
curvature $\rR^\f_\star$ of $\nabla^\f$ on $\RR^n$ by the following quantum analogue of the Gauss equation (see appendix~\ref{proofQGauss} for the proof):

\begin{prop}
For all $X,Y,Z,W\in\Xi_{t\star}$  the following twisted Gauss equation holds:
\bea\label{GaussQuantum}
    \gm_\star(\rR^\f_\star(X,Y)Z,W)
    &=&\gm_\star(\rR_{t\star}^\f(X,Y)Z,W)
    +\gm_\star(II^\f_\star(X,\overline{\mathcal{R}}_1\trc Z),
    II^\f_\star(\overline{\mathcal{R}}_2\trc Y,W))\nn[8pt]
    &&-\gm_\star(II^\f_\star(\overline{\mathcal{R}}_{1\widehat{(1)}}\trc Y,
    \overline{\mathcal{R}}_{1\widehat{(2)}}\trc Z),
    II^\f_\star(\overline{\mathcal{R}}_2\trc X,W)).
\eea
\label{QuantumGauss}
\end{prop}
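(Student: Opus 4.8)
\medskip\noindent
\textbf{Proof plan.} The idea is to run the classical derivation of the Gauss equation~(\ref{GaussClassic}) inside the twisted calculus, keeping track of the $\R$-matrix insertions that the braiding of $\Xi_\star$ forces at each step. The hypothesis $\F\in (U\mathfrak{k}\otimes U\mathfrak{k})[[\nu]]$ makes available Propositions~\ref{prop08} and~\ref{lem01star}: the $\star$-splitting $\Xi_\star=\Xi_{t\star}\oplus\Xi_{{\scriptscriptstyle{\perp}}\star}$ is orthogonal for $\gm_\star$ and $U\mathfrak{k}^\f$-equivariant, $\Xi_{t\star}$ is a $\star$-Lie subalgebra (so $[X,Y]_\star\in\Xi_{t\star}$ whenever $X,Y\in\Xi_{t\star}$), and $\nabla^\f|_{\Xi_{t\star}\otimes\Xi_{t\star}}=\nabla\!_t^{\,\f}+II^\f_\star$ with $\nabla\!_t^{\,\f}$ valued in $\Xi_{t\star}$ and $II^\f_\star$ valued in $\Xi_{{\scriptscriptstyle{\perp}}\star}$, both $U\mathfrak{k}^\f$-equivariant. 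The analytic inputs are just this orthogonality, $\gm_\star(\Xi_{{\scriptscriptstyle{\perp}}\star},\Xi_{t\star})=0$, and the twisted metricity of $\nabla^\f$ written as in~(\ref{twistedMetricCov'}).

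The first step is a \emph{twisted Weingarten identity}. For $X,W\in\Xi_{t\star}$ and $N\in\Xi_{{\scriptscriptstyle{\perp}}\star}$ one has $\gm_\star(N,W)=0$, so the left-hand side of~(\ref{twistedMetricCov'}) vanishes and $0=\gm_\star(\nabla^\f_XN,W)+\gm_\star(\bR_1\trc N,\nabla^\f_{\br_2\trc X}W)$. Inserting $\nabla^\f_{\br_2\trc X}W=\nabla\!_{t,\br_2\trc X}^{\,\f}W+II^\f_\star(\br_2\trc X,W)$ and discarding the tangent summand against the normal field $\bR_1\trc N$ (still normal, by $U\mathfrak{k}^\f$-equivariance of $\Xi_{{\scriptscriptstyle{\perp}}\star}$ and $\R\in U\mathfrak{k}^\f\otimes U\mathfrak{k}^\f$) gives $\gm_\star(\nabla^\f_XN,W)=-\gm_\star(\bR_1\trc N,II^\f_\star(\br_2\trc X,W))$. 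Taking $N=II^\f_\star(Y,Z)$ and pushing $\bR_1$ inside $II^\f_\star$ by equivariance, $\bR_1\trc II^\f_\star(Y,Z)=II^\f_\star(\overline{\mathcal{R}}_{1\widehat{(1)}}\trc Y,\overline{\mathcal{R}}_{1\widehat{(2)}}\trc Z)$, produces exactly the last term of~(\ref{GaussQuantum}).

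The second step expands $\rR^\f_\star(X,Y)Z=\nabla^\f_X\nabla^\f_YZ-\nabla^\f_{\r_2\trc Y}\nabla^\f_{\r_1\trc X}Z-\nabla^\f_{[X,Y]_\star}Z$ for $X,Y,Z\in\Xi_{t\star}$, substitutes $\nabla^\f_VZ=\nabla\!_{t,V}^{\,\f}Z+II^\f_\star(V,Z)$ for $V\in\Xi_{t\star}$ and likewise one level deeper, $\nabla^\f_V(\nabla\!_{t,V'}^{\,\f}Z)=\nabla\!_{t,V}^{\,\f}\nabla\!_{t,V'}^{\,\f}Z+II^\f_\star(V,\nabla\!_{t,V'}^{\,\f}Z)$, in the nested pieces, pairs with $W\in\Xi_{t\star}$ via $\gm_\star$, and deletes every term of the form $\gm_\star(\text{normal},\text{tangent})$ (in particular the whole $II^\f_\star([X,Y]_\star,Z)$ contribution). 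The terms with no surviving $II^\f_\star$, using $[X,Y]_\star\in\Xi_{t\star}$, assemble into $\gm_\star\!\big(\nabla\!_{t,X}^{\,\f}\nabla\!_{t,Y}^{\,\f}Z-\nabla\!_{t,\r_2\trc Y}^{\,\f}\nabla\!_{t,\r_1\trc X}^{\,\f}Z-\nabla\!_{t,[X,Y]_\star}^{\,\f}Z,\,W\big)=\gm_\star(\rR_{t\star}^\f(X,Y)Z,W)$, the intrinsic-curvature term of~(\ref{GaussQuantum}). The two remaining terms, $\gm_\star(\nabla^\f_XII^\f_\star(Y,Z),W)$ and $-\gm_\star(\nabla^\f_{\r_2\trc Y}II^\f_\star(\r_1\trc X,Z),W)$, are rewritten by the twisted Weingarten identity; the first is literally the last term of~(\ref{GaussQuantum}), while the second initially carries a nested pair of $\R$'s (one from the curvature definition, one from Weingarten).

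The only genuinely delicate point, which I expect to be the main obstacle, is collapsing that nested pair into the single insertion $\gm_\star(II^\f_\star(X,\bR_1\trc Z),II^\f_\star(\bR_2\trc Y,W))$ printed in~(\ref{GaussQuantum}); this is the statement that reordering the classical slot pattern $II(Y,Z)\otimes II(X,W)$ into the universal order $(X,Y,Z,W)$ is governed by a single $\R$-leg acting through $\Delta_\f$ on the $(Y,Z)$-pair. It is purely Hopf-algebraic: push the $\R$'s into $II^\f_\star$ by $U\mathfrak{k}^\f$-equivariance, convert them to iterated twisted coproducts, and use the quasitriangularity (hexagon) relations of $\R$ with respect to $\Delta_\f$ together with triangularity $\bR=\R_{21}$, i.e.\ $\R_{12}\R_{21}=\1\otimes\1$, all of which follow from the cocycle condition~(\ref{cocycle}). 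An equivalent route, more in the paper's recurring spirit, is to substitute $\gm_\star(A,B)=\gm(\bF_1\trc A,\bF_2\trc B)$, $II^\f_\star(A,B)=II(\bF_1\trc A,\bF_2\trc B)$ and $\rR^\f=\rR$, $\rR_{t\star}^\f=\rR_t$ as $\mathfrak{k}$-invariant tensors (by Propositions~\ref{prop02},~\ref{lem01star} and the Killing-invariance of $\gm,\nabla,\rR,\rR_t,II$), reducing~(\ref{GaussQuantum}) to~(\ref{GaussClassic}) evaluated on twist- and $\R$-transformed arguments; the twist factors then collapse by~(\ref{cocycle}) and the same hexagon identities control the $\R$-reshuffling on the right-hand side of~(\ref{GaussQuantum}). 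Either way the combinatorial heart is the compatibility of $\R$ with the twisted coproduct.
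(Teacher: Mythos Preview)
Your proposal is correct and the key identifications are accurate: the twisted Weingarten identity you derive from~(\ref{twistedMetricCov'}) is valid (with $\bR_1\trc N\in\Xi_{{\scriptscriptstyle{\perp}}\star}$ and $\br_2\trc X\in\Xi_{t\star}$ by $U\mathfrak{k}^\f$-equivariance), the first normal term produces exactly the last summand of~(\ref{GaussQuantum}), and the ``delicate'' collapse of the nested $\R$'s in the second normal term is precisely the combination of the hexagon relation $(\Delta_\f\otimes\id)(\bR)=\bR_{23}\bR_{13}$ with triangularity $\bR\R=\1\otimes\1$, which kills the $X$-leg and leaves the single $\bR$ braiding $Z$ past $Y$.

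Your main route, however, is genuinely different from the paper's. You run the classical Gauss argument \emph{intrinsically} in the twisted calculus: decompose $\nabla^\f|_{\Xi_{t\star}}$, invoke twisted metricity~(\ref{twistedMetricCov'}) to obtain a Weingarten formula, and let the $\R$-insertions appear organically from the twisted Leibniz rules. The paper instead takes the ``equivalent route'' you mention only at the end: it expands each of the three right-hand terms of~(\ref{GaussQuantum}) via $\gm_\star=\gm\circ\bF$, $II^\f_\star=II\circ\bF$, $\rR^\f_{t\star}=\rR_t\circ\bF^{(3)}$, and then performs a sequence of cocycle manipulations~(\ref{bla}) on the resulting twist legs until all three terms carry the \emph{same} pattern $(\bF_{1(1)}\bF'_1,\bF_{1(2)}\bF'_2,\bF_{2(1)}\bF''_1,\bF_{2(2)}\bF''_2)$ on $(X,Y,Z,W)$; at that point the classical Gauss equation~(\ref{GaussClassic}) applies verbatim to the transformed arguments and yields $\gm_\star(\rR^\f_\star(X,Y)Z,W)$. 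Your approach explains \emph{why} the particular $\R$-dressing in~(\ref{GaussQuantum}) is the natural one (it is forced by the twisted metric compatibility), while the paper's approach makes the reduction to the undeformed identity completely explicit and avoids appealing to the hexagon relation, at the cost of longer twist-leg bookkeeping.
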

The twisted first and second fundamental forms,  Levi-Civita connection, 
curvature tensor, Ricci tensor, Ricci scalar {\it on $M$} are 
finally obtained from the above objects
by applying the further projection $\X_\star\to\XM_\star$, which amounts to choosing the $c=0$ manifold $M$ out of the $M_c$ family. 
Of course,  by a different choice of $c$ one can do the same on any other $M_c$.

\subsubsection{Decompositions in bases of $\Omega_\star,\Xi_\star$;  Euclidean, Minkowski metrics}
\label{Basesstar}

In this section we explicitly determine the twisted geometry induced by a twist  $\F\in U\k\otimes U\k[[\nu]]$ (in particular, the decompositions  (\ref{decoXi'star}))
in terms of bases of $\Omega_\star,\Omega_{\scriptscriptstyle{\perp}\star},\Omega_{t\star}$
and  $\Xi_\star,\Xi_{{\scriptscriptstyle{\perp}}\star},\Xi_{t\star}$ for a generic  metric  
$\gm$ on $\RR^n$, specializing to the Euclidean and Minkowski metric at the end, as done in section \ref{Bases}. 

By Proposition \ref{XiOmegaDecostar} 
$\Xi_{t\star},\Omega_{{\scriptscriptstyle{\perp}}\star},\Xi_{{\scriptscriptstyle{\perp}}\star},\Omega_{t\star}$ are $U\mathfrak{k}^\f$-equivariant, and the 
projections $\Pps,\Pts$  are $U\mathfrak{k}^\f$-equivariant.
Here is the twisted analogue of Proposition \ref{XiOmegaDeco'} and the remark following it:

\begin{prop} \ 
$df^a,\Np^a,\theta^a,\Up^a$  are $U\mathfrak{k}^\f$-invariant.
$\N_{\scriptscriptstyle\perp}:=\left\{\Np^a\right\}^k_{a=1}$,
$\B_{\scriptscriptstyle{\perp}}:=\left\{df^a\right\}^k_{a=1}$  are  $\star$-dual bases  of  
$\Xi_{{\scriptscriptstyle \perp}\star},\Omega_{{\scriptscriptstyle \perp}\star}$ respectively:
\be
 \la \Np^a,df^b\ra_\star=\delta^{ab}, \qquad a,b\in\{1,...,k\}.          \label{dualperpstar}
\ee
$\gm^{-1}_\star(df^a,df^b)=\E^{ab}$, $\gm_\star(\Np^a,\Np^b)=K^{ab}$. \
$\{\Up^a\}_{a=1}^k$, $\{\theta^a\}_{a=1}^k$ are   $\star$-dual,  othonormal 
(possibly up to signs $\epsilon_a=\pm 1$) bases  of 
 \   $\Xi_{{\scriptscriptstyle \perp}\star},\Omega_{{\scriptscriptstyle \perp}\star}$ respectively, in the sense 
\be
\gm_\star(\Up^a,\Up^b)=\epsilon_a\delta^{ab}, \qquad \gm^{-1}_\star(\theta^a,\theta^b)=\epsilon_a\delta^{ab},
\qquad \la \Up^a,\theta^b\ra_\star=\delta^{ab}.          \label{perpdualstar}
\ee
On  $X\in\Xi_\star$,  $\omega\in\Omega_\star$  \ the actions of
the projections $\Pps,\Pts$ explicitly read $\Pps(X)=X_{\scriptscriptstyle \perp}$,
$\Pps(\omega)=\omega_{\scriptscriptstyle \perp}$,
and \ $\Pts(X)=X_t =X-X_{\scriptscriptstyle \perp}$,  \  $\Pts(\omega)= \omega_t=\omega-\omega_{\scriptscriptstyle \perp}$; 
the normal components explicitly read
\bea
\ba{l}
\omega_{\scriptscriptstyle \perp}= df^a \star K^{ab} \star 
\gm^{-1}_\star(df^b ,\omega)
= \gm^{-1}_\star(\omega,df^a) \star K^{ab} \star df^b, \\[8pt]
X_{\scriptscriptstyle \perp}=\gm_\star(X ,\Np^a) \star \E^{ab} \star\Np^b
=\Np^a \star \E^{ab} \star\gm_\star(\Np^b,X)  
\ea\label{XiOmegaPerpstar}
\eea 
\label{XiOmegaDeco'star}
in terms of the mentioned bases, twisted product and metric.
\end{prop}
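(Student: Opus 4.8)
The plan is to reduce every assertion to the corresponding undeformed statement in Proposition~\ref{XiOmegaDeco'} and the Remark following it, using two facts: the twist is based on $\k$, i.e. $\F\in(U\k\otimes U\k)[[\nu]]$, and all the objects that appear — $df^a,\Np^a,\theta^a,\Up^a$ as well as the matrices $\E^{ab},K^{ab}$ — are $\k$-invariant. First I would record the invariance claim: by Proposition~\ref{XiOmegaDeco'} and the subsequent Remark, $df^a,\Np^a,\theta^a,\Up^a$ are $\k$-invariant and $\E^{ab},K^{ab}$ are $\k$-invariant functions; since $U\k^\f$ equals $U\k[[\nu]]$ as an algebra, carries the same counit $\varepsilon$ and acts through the same action $\trc$, anything annihilated (up to $\varepsilon$) by $\k$ is automatically $U\k^\f$-invariant, which gives the first sentence of the proposition.

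The core of the argument is a ``collapse'' mechanism. For any $U\k$-invariant element $v$ (a function, vector field, or $1$-form), the counit property $(\varepsilon\ot\id)\bF=\1=(\id\ot\varepsilon)\bF$ — valid for $\bF$ by (\ref{twistcond}) — together with $\bF\in(U\k\otimes U\k)[[\nu]]$ forces every $\star$-product, $\star$-pairing or twisted metric having $v$ in one slot to reduce to the undeformed operation. Concretely I would deduce: $h\star df^a=df^a\star h=h\,df^a$ and $h\star\Np^a=\Np^a\star h=h\,\Np^a$ for all $h\in\X_\star$ (and likewise for $\theta^a,\Up^a$), so these four families are $\star$-central; $\la\Np^a,df^b\ra_\star=\la\Np^a,df^b\ra$ and $\la\Up^a,\theta^b\ra_\star=\la\Up^a,\theta^b\ra$; $\gm_\star(X,\Np^a)=\gm(X,\Np^a)$ by (\ref{twistedmetric2}) and $\gm^{-1}_\star(df^a,\omega)=\gm^{-1}(df^a,\omega)$ using the identity $\gm^{-1}_\star(\omega,\alpha)=\gm^{-1}(\bF_1\trc\omega,\bF_2\trc\alpha)$ recorded in the proof of Proposition~\ref{XiOmegaDecostar}; and similarly $\gm^{-1}_\star(df^a,df^b)=\gm^{-1}(df^a,df^b)$, $\gm_\star(\Np^a,\Np^b)=\gm(\Np^a,\Np^b)$, with the same statements for $\theta^a,\Up^a$.

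From here the conclusions are routine. The $\star$-duality relations (\ref{dualperpstar}), (\ref{perpdualstar}) and the values $\gm^{-1}_\star(df^a,df^b)=\E^{ab}$, $\gm_\star(\Np^a,\Np^b)=K^{ab}$ follow from their undeformed counterparts (\ref{dualperp}), (\ref{perpdual'}). Because $df^a,\Np^a,\theta^a,\Up^a$ are $\star$-central, their $\X_\star$-span equals their ordinary $\X[[\nu]]$-span, which by Proposition~\ref{XiOmegaDecostar} is $\Omega_{\scriptscriptstyle{\perp}\star}=\Omp[[\nu]]$ resp. $\Xi_{{\scriptscriptstyle{\perp}}\star}=\Xip[[\nu]]$; freeness is inherited from the undeformed bases, so all four families are $\star$-bases of the stated bimodules (this also re-derives (\ref{decoOmegadiamondstar})). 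Finally, since Proposition~\ref{XiOmegaDecostar} identifies $\Pps,\Pts$ with the $\CC[[\nu]]$-linear extensions of $\Pp,\Pt$, one has $\Pps(\omega)=\omega_{\scriptscriptstyle\perp}$ and $\Pps(X)=X_{\scriptscriptstyle\perp}$ with the undeformed formulas of Proposition~\ref{XiOmegaDeco'}; applying the collapse mechanism to each $\star$ in the right-hand sides of (\ref{XiOmegaPerpstar}), and using the symmetry of $\gm,\gm^{-1},\E,K$ to reorder the resulting ordinary factors, I would verify that those right-hand sides equal precisely the undeformed $\omega_{\scriptscriptstyle\perp},X_{\scriptscriptstyle\perp}$ of (\ref{XiOmegaPerp})--(\ref{XiOmegaPerp'}). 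The only point genuinely requiring attention is not an obstacle but a checkpoint: the basis statements need deformation not to destroy freeness, and this is exactly what the $\star$-centrality of $df^a,\Np^a,\theta^a,\Up^a$ — a direct consequence of their $\k$-invariance and the counit condition on $\bF$ — guarantees.
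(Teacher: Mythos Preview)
Your proof is correct and follows essentially the same approach as the paper's: both reduce everything to the undeformed Proposition~\ref{XiOmegaDeco'} (and the Remark after it) by invoking the $\k$-invariance of $df^a,\Np^a,\theta^a,\Up^a,\E^{ab},K^{ab}$ together with the counit condition (\ref{twistcond}) so that every $\star$-product, $\star$-pairing and twisted metric with one invariant slot collapses to its undeformed counterpart, and then appeal to Proposition~\ref{XiOmegaDecostar} to identify $\Pps,\Pts$ with the $\CC[[\nu]]$-linear extensions of $\Pp,\Pt$. Your write-up is more explicit than the paper's (in particular your remark that $\star$-centrality guarantees the $\X_\star$-span equals the ordinary $\X[[\nu]]$-span, hence freeness is inherited), but the logical route is the same.
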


\bp{}
All statements but the last one are straightforward consequences of the choice of the twist and
of Propositions \ref{XiOmegaDeco}, \ref{XiOmegaDeco'}. 
As $\Pps,\Pts$ are just the $\CC[[\nu]]$-linear extensions of $\Pp,\Pt$ (see Proposition \ref{XiOmegaDecostar}), \  then $\Pps(\omega)=\omega_{\scriptscriptstyle \perp}$,
$\Pps(X)=X_{\scriptscriptstyle \perp}$, with the right-hand sides as defined in  (\ref{XiOmegaPerp}),
and \ $\Pts(X)=X_t =X-X_{\scriptscriptstyle \perp}$,  \  $\Pts(\omega)= \omega_t=\omega-\omega_{\scriptscriptstyle \perp}$. Eq. (\ref{XiOmegaPerpstar}) holds because
any twist $\star$-product boils down to an ordinary product if one of the two factors
is $U\mathfrak{k}$-invariant, and similarly $\gm^{-1}_\star(\omega ,\omega')
=\gm^{-1}(\omega ,\omega')$,  and $\gm_\star(X ,X')=\gm(X ,X')$,
if one of the arguments is $U\mathfrak{k}$-invariant,  
by eq. (\ref{starprod}), (\ref{twistedmetric2})  (\ref{twistcond});  the order of the factors and 
of the arguments
of $\gm_\star,\gm^{-1}_\star$ can be freely changed, for the same reason and the symmetry of metric.
 \ep

An equivalent alternative to (\ref{XiOmegaPerpstar}) is
\bea
\ba{l}
\omega_{\scriptscriptstyle \perp}=\theta^a \star   \zeta_{ab}\gm^{-1}_\star(\theta^b,\omega)=  
\gm^{-1}_\star(\omega,\theta^a) \zeta_{ab}\star \theta^b,  \\[8pt]
X_{\scriptscriptstyle \perp}=\gm_\star(X ,\Up^a)\zeta_{ab}\star\Up^b=
\Up^a \star \zeta_{ab}\gm_\star(\Up^b,X),  
\ea\label{XiOmegaPerpstar''}
\eea 
where $\zeta_{ab}=\epsilon_a\delta^{ab}$. 
By the $\star$-bilinearity of $\gm_\star$ the above equations imply in particular
\bea
\ba{l}
\omega_{\scriptscriptstyle \perp}= df^a \star K^{ab} \star 
\gm^{-1}_\star(df^b ,dx^i)\star\check\omega_i
=\hat\omega_i \star  \gm^{-1}_\star(dx^i,df^a) \star K^{ab} \star df^b, \\[8pt]
X_{\scriptscriptstyle \perp}=\hat X^i\star\gm_\star(\partial_i ,\Np^a) \star \E^{ab} \star\Np^b
=\Np^a \star \E^{ab} \star\gm_\star(\Np^b,\partial_i) \star\check X^i,   
\ea\label{XiOmegaPerpstar'}
\eea 
in terms of the left  and right decompositions  \ $\omega=\hat\omega_i \star dx^i=dx^i\star\check\omega_i\in\Omega_\star$, \ $X=\hat X^i\star\partial_i=\partial_i \star\check X^i\in\Xi_\star$ 
  in the bases  $\{dx^i\}_{i=1}^n$, $\{\partial_i\}_{i=1}^n$. 
In the latter formulae one can decompose  $df^a,\Np^a,\theta^a,\Up^a$ themselves in the same way,
if one wishes.

By the previous propositions, every complete set of $\Xi_t$, e.g. $\Theta_t$, is also a complete set  of $\Xi_{t\star}$; similarly, every complete set of $\Omega_t$, e.g.  $S_V$, or $S_L$,
is also a complete set  of  $\Omega_{t\star}$.

\medskip
If the metric is Euclidean  ($g_{ij}=\delta_{ij}$) or Minkowski 
[$g_{ij}=g^{ij}= \eta_{ij}=\mbox{diag}(1,...,1,-1)$] 
\bea
\ba{ll} \gm^{-1}_\star(dx^i,df^a)= \gm^{-1}(dx^i,df^a)=f^{ai},\qquad
&\gm^{-1}_\star(df^a,dx^i)=f^{ai},\\[8pt]
\gm_\star(\partial_i ,\Np^a) =\gm(\partial_i ,\Np^a)= K^{ab}f^b_i= K^{ab}\star f^b_i,\qquad
&\gm_\star(\Np^a,\partial_i)= K^{ab}\star f^b_i;
\ea  
\eea 
replacing the right-hand sides in (\ref{XiOmegaPerpstar'}) makes the latter more explicit.

\subsubsection{Twisted differential calculus algebras  $\Q^\bullet_\star,\QMst^\bullet$}
\label{QQMstar}

The twist deformation of the differential calculus algebra $\Q^\bullet$ on $\RR^n$
introduced in section \ref{QQM} gives the one $\Q^\bullet_\star$, with the same
generators $e_\alpha,\xi^i$  and relations
\bea
&&\ba{l}
\sum_{\alpha=1}^A (\F_1\trc t_l^\alpha)\star (\F_2\trc e_\alpha)=0, \qquad l=1,...,A-n,\\[4pt]
e_\alpha \star e_\beta- (\bR_1\trc e_\beta) \star (\bR_2\trc e_\alpha)-  C^\gamma_{\star\alpha\beta}\star e_\gamma=0,\\[4pt]  
e_\alpha\star\xi^i-(\bR_1\trc \xi^i)\star (\bR_2\trc e_\alpha)=0, \\[4pt] 
\xi^i\star\xi^j+(\bR_1\trc \xi^j)\star(\bR_2\trc \xi^i)=0,                         
\ea \label{DCrel2st}\\[10pt]
&&\ba{l}
h\star \xi^i-(\bR_1\trc \xi^i) \star (\bR_2\trc h)=0, \\[4pt] 
e_\alpha \star h -\underbrace{(\bR_1\trc h)\star  (\bR_2\trc e_\alpha)}_{e_\alpha\ltlc_\star  h}- e_{\alpha\star}(h)=0,
\ea\qquad \forall h\in\X_\star,         \label{DCrel1st}  
\eea
where $C^\gamma_{\star\alpha\beta}\in\X_\star$ are defined by the decomposition 
$\left[e_\alpha,e_\beta\right]_\star\equiv\left[\bF_1\trc e_\alpha,\bF_2\trc e_\beta\right]=C^\gamma_{\star\alpha\beta}\star e_\gamma$.
$\Q^\bullet_\star$ is a $U\Xi^\f$-equivariant $\X_\star$-bimodule.
We endow $\Q^\bullet_\star$  with the  $*_\f$-structure. 

Note the change of notation: in the $\Q^\bullet_\star$ framework 
$X\star h=(\bR_1\trc h)\star (\bR_2\trc X) +X_\star(h)$, hence $X\star h$
 has a different meaning with respect to the previous sections, where
it stood just for the first term at the right-hand side, i.e. for the $\star$-product of the vector field $X$
by the function $h$ from the right; in  
the $\Q^\bullet_\star$ framework, we denote the latter by  \ $X \ltlc_\star  h:=(\bR_1\trc h)\star (\bR_2\trc X)$, \ so that we can abbreviate
$X\star h=X \ltlc_\star  h+X_\star(h)$. \
Of course  \ $(X \ltlc_\star  h)_\star(h')=[X_\star(\bR_1\trc h')]\star (\bR_2\trc h)$, \ 
 $(X \ltlc_\star  h)\ltlc_\star  h'=X\ltlc_\star  (h\star h')$ remain valid.

If one chooses $S$ so that a subset $S_t:=\{e_\alpha\}_{\alpha=1}^B$
($B\!:=\!A\!-\!k$)  
 is  complete in $\Xi_{t\star}$ (e.g. it consists of the $L_{i_1i_2...i_{k+1}}$), 
while $e_{B+a}:=V_{\scriptscriptstyle{\perp}}^a$, then, if
$\alpha,\beta\leq B$,  the sum in (\ref{DCrel2st})$_2$ is extended over 
$\gamma\!\leq\! B$.
The twisted differential calculus algebra  $\QM^\bullet{}_\star$  on $M_c$
is  the ${\cal X}^{\scriptscriptstyle M_c\star}$-bimodule generated by the $\xi^1,...,\xi^n,e_1,...,e_B$,
modulo the relations (\ref{DCrel2st}-\ref{DCrel1st}) with $\alpha,\beta\leq B$ and the ones
\bea 
&&\ba{l}
f_c^a\equiv f^a\!-\!c^a\1=0,\\[6pt] 
df^a\equiv\xi^h\star f_h^a=0,     
\ea     \qquad a=1,...,k.              \label{DCMcrelst}
\eea

\section{Examples of twisted algebraic submanifolds of $\RR^3$}
\label{Examples}

We can apply the whole machinery developed in the previous two sections
to twist deform 
algebraic manifolds embedded in $\RR^n$, provided we adopt
$\X=\mbox{Pol}^\bullet(\RR^n)$, etc. everywhere.
We can assume without loss of generality that the $f^a$ be irreducible
polynomial functions\footnote{If for some $a=1,...,k$ $f^a$ is reducible, i.e. $f^a(x)=g^a(x)h^a(x)$ with 
$g^a,h^a\in\X$ of positive degree, then 
$M=M^g\cup M^h$, where the manifold $M^g$ is  defined by 
the equations $g^a(x)=0$ and $f^h(x)=0$ if $h\neq a$, while  $M^h$
is  defined by $h^a(x)=0$ and $f^h(x)=0$ if $h\neq a$. 
If $k=1$, $f(x)=g(x)h(x)$, we find \ 
$L_{ij}=h(x)[g_i\partial_j-g_j\partial_i]+g(x)[h_i\partial_j-h_j\partial_i]$; \ \
on $M_g$ the second term vanishes and the first is tangent to $M_g$, as it must be;
and similarly on $M_h$. Having assumed the 
Jacobian everywhere of maximal rank  $M_g, M_h$ have empty intersection
and can be analyzed separately. Otherwise  $L_{ij}$ vanishes on
$M_g\cap M_h\neq\emptyset$ (the singular part of $M$),
so that  on the latter a twist built using the $L_{ij}$ will reduce to the identity, and the $\star$-product to the pointwise product (see the conclusions).}.
Following subsection \ref{SecTwistingVF},
it is interesting to ask for which algebraic submanifolds $M_c\subset\RR^n$ the infinite-dimensional Lie algebra  $\Xi_t$ 
admits a nontrivial finite-dimensional subalgebra $\g$ over $\RR$,
so that we can build concrete examples
of twisted $M_c$ by choosing a twist $\F\in (U\g\otimes U\g)[[\nu]]$ of a known type. As said, manifestly symmetric $M_c$ are of this type.

We can easily answer this question when $k=1$
and the $L_{ij}$ themselves close a finite-dimensional Lie algebra $\g$
over $\RR$. This means that  in   (\ref{comm}) $f_{ij}=$const, hence $f(x)$ is a quadratic polynomial,
and $M$ is either a quadric or the union of two hyperplanes (reducible case); 
moreover $\g$ is a  Lie subalgebra of the affine Lie algebra ${\sf aff}(n)$ of $\RR^n$.
More explicitly, if
\be
f(x)\equiv \frac 12 a_{ij}x^ix^j+a_{0i}x^i+\frac 12 a_{00}=0   \label{quadricsn}
\ee
with some real constants $a_{\mu\lambda}\!=\!a_{\lambda\mu}$ ($\mu,\lambda=0,1,...,n$),
$f_i=a_{ij}x^j\!+\!a_{i0}$,   $f_{ij} =a_{ij}$ are constant, and (\ref{comm}) has already the desired form 
\bea
[L_{ij},L_{hk}]=a_{jh}L_{ik}-a_{ih}L_{jk}-
a_{jk}L_{ih}+a_{ik}L_{jh};     \label{comm'}
\eea
$L_{ih}\trc $ act as linear transformations of the coordinates $x^k$:
\be
L_{ij}\trc x^h
=(a_{ik}x^k\!+\!a_{0i})\delta^h_j-(a_{jx}x^k\!+\!a_{j0})\delta^h_i.
 \label{L_ij_su_x^h}
\ee
By an Euclidean transformation 
(this is also an affine one)  one can always 
make the $x^i$ canonical coordinates for the quadric, so 
that $a_{ij}=a_i\delta_{ij}$ (no sum over $i$), $b_i:=a_{0i}=0$ if  $a_i\neq 0$.
In  \cite{FioFraWebquadrics} the authors first classify $\g$ and derive some general results  from the only assumptions $\X=\mbox{Pol}^\bullet(\RR^n)$ and 
$\g\subset {\sf aff}(n)$; in particular, that the {\it global} description of differential geometry on $\RR^n,M_c$ in terms of generators and relations extends to their
twist deformations, in such a way to preserve the subspaces of the differential calculus algebras
consisting of polynomials of any fixed degrees in the coordinates $x^i$, differential $dx^i$ and  vector fields chosen as generators. 
Then they analyze in detail the twisted quadrics embedded in $\RR^3$.

\medskip
Here we just present  two families of the latter as examples of 
applications of the formalism developed in the previous sections.
We analyze (referring for  details to  \cite{FioFraWebquadrics}) few twist deformations of the following classes of quadrics in $\RR^3$:  
(a) elliptic cylinders; (c) elliptic cone;
(b) 1-sheet and (d) 2-sheet hyperboloids.
As usual, we identify two quadric surfaces if they can be translated into each other via an
Euclidean transformation. By a suitable one we can make the equation $f(x)=0$
take a canonical (i.e. simplest) form, which we use to identify the class.
In Figure \ref{QuadricSummarySmall} we summarize the characterizing signs,
rank, associated symmetry Lie algebra $\g$, and  type of twist deformation that
we perform; an example in each class is plotted in Figure \ref{SomeQuadricSurfaces}. 
The elliptic cylinders of  class (a) make up a family  $M_c$ parametrized by $c>0$ (the axis of the cylinder $M_0$ is the ${\cal E}_f$ of the family), while classes (b), (c), (d)  altogether give a single family $M_c$ parametrized by $c\equiv -a_{00}\in\RR$. 
This splits into a class of connected manifolds 
(the 1-sheet hyperboloids) and a class of disconnected  ones (the 2-sheet hyperboloids and the cone, which has two nappes separated by the apex - a singular point); all are closed, except the cone, whose apex gives the ${\cal E}_f$ of the family. In either case ${\cal E}_f$ is  an algebraic variety. We note that, since
the $L_{ij}=f_i\partial_j-f_j\partial_i\in \g$ involved in the twist  vanish on  ${\cal E}_f$, the deformation automatically disappears on it, and the twisted algebraic variety is well-defined
as the undeformed. For other examples of submanifold algebras
that are not algebras of functions on smooth manifolds we refer the reader
to the recent paper \cite{Dan19}. 
We devote a subsection to each family 
and a proposition to each twist deformation; propositions are proved in
 \cite{FioFraWebquadrics}, where  twist deformations also of the
other classes of quadrics are discussed in detail.
Throughout this section the $\star$-product $X\star h$
of a vector field $X$ by a function $h$ from the right is understood in the
$\Q_\star,\QM{}_\star$ sense \ $X\star h=X \ltlc_\star  h+X_\star(h)$
\  (see section \ref{QQMstar}).

\begin{figure}[t]
\begin{subfigure}{.48\textwidth}
\centering
    \includegraphics[width=0.46\textwidth, angle=0]{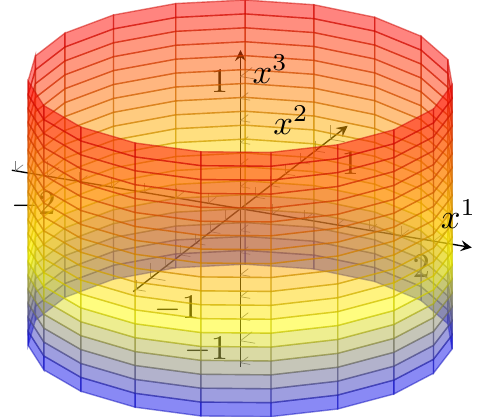}
    \caption{Elliptic cylinder with $a_1=\frac{1}{2}$, $a_2=2$} 
    \label{Elliptic cylinder}
\end{subfigure}
\begin{subfigure}{.48\textwidth}
\centering
    \includegraphics[width=0.44\textwidth, angle=0]{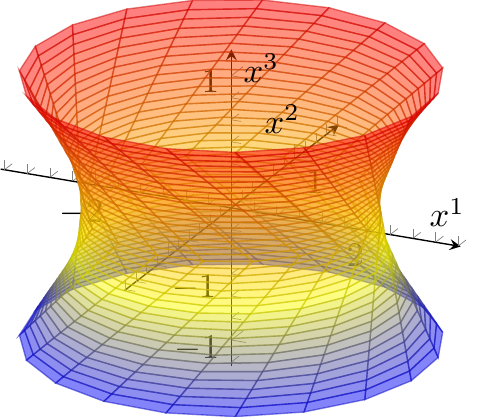}
    \caption{1-sheet hyperboloid with $a_1=\frac{1}{2}$, \\ $a_2=-a_3=2$} 
    \label{Elliptic hyperboloid of one sheet}
\end{subfigure}
\vskip.25cm
\begin{subfigure}{.48\textwidth}
\centering
    \includegraphics[width=0.46\textwidth, angle=0]{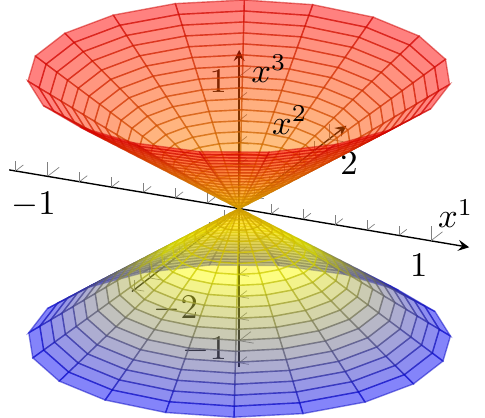}
    \caption{Elliptic cone with $a_1=-a_3=2$, $a_2=\frac{1}{2}$} 
    \label{Elliptic cone}
\end{subfigure}
\begin{subfigure}{.48\textwidth}
\centering
    \includegraphics[width=0.46\textwidth, angle=0]{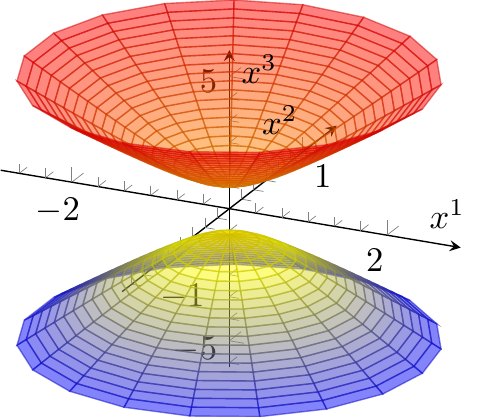}
    \caption{2-sheet hyperboloid  with $a_1=8$, $a_2=32$, \\ $a_3=-2$} 
    \label{Elliptic hyperboloid of two sheets}
\end{subfigure}
\caption{The  irreducible quadric surfaces of $\mathbb{R}^3$ that we twist-deform here.}
\label{SomeQuadricSurfaces}
\end{figure}

\begin{figure}[h!]
\begin{center}
\begin{tabular}{|c|c|c|c|c|c|c|c|c|c|c|}
\hline
& $a_1$ & $a_2$ & $a_3$ & $a_{03}$ & $a_{00}$ & $r$ &  quadric  &$\mathfrak{g}\simeq$ & Abelian & Jordanian \\
\hline
(a) & $+$ & $+$ & 0  & 0 &  $-$ & 3 & elliptic cylinder & 
\begin{tabular}{c}
$\mathfrak{so}(2) \cross \RR^2$\\ 
$\mathfrak{so}(2) \times \RR$ 
\end{tabular}
& \begin{tabular}{c}
Yes\\ 
Yes
\end{tabular}
 &  \begin{tabular}{c}
No\\ 
No
\end{tabular} \\
\hline
(b) & $+$ & $+$ & $-$  & 0 & $-$ & 4 &  1-sheet hyperboloid
& $\mathfrak{so}(2,1)$ & No & Yes \\
\hline
(c) & $+$ & $+$ & $-$  & 0 & 0 & 3 &  elliptic cone
& $\mathfrak{so}(2,\!1)\!\times\!\RR$ & Yes & Yes \\
\hline
(d) & $+$ & $+$ & $-$  & 0 &+& 4 & 2-sheet hyperboloid
& $\mathfrak{so}(2,1)$ & No & Yes \\
\hline
\end{tabular}
\end{center}
\caption{Signs of the coefficients  of the canonical equations, 
rank,  associated symmetry Lie algebra $\g$, type of twist deformation.
The cone (c) consists of two open components (nappes) disconnected by the apex
(a singular point);
we build an abelian twist for it using also the generator of dilatations.}
\label{QuadricSummarySmall}
\end{figure}

\subsection{(a)   Family of elliptic cylinders in Euclidean $\RR^3$}

Their equations in canonical form (with $a_1\!=\!1$, $a_3\!=\!a_{0i}\!=\!0$) 
are parametrized by \
$c\equiv -a_{00}>0$, $a\equiv a_2>0$ and read
\be\label{EllCyleq}
    f_c(x):=\frac{1}{2}\big[(x^1)^2+a(x^2)^2\big] -c=0.
\ee
For every $a>0$, $\{M_c\}_{c\in\RR^+}$ is a foliation of $\RR^3\setminus \vec{z}$,
where $\vec{z}$ is the axis of equations $x^1\!=\!x^2\!=\!0$.
The  vector fields $L_{12}=x^1\partial_2-ax^2\partial_1$, 
$L_{13}=x^1\partial_3$, $L_{23}=a x^2\partial_3$ generate   \ $\g\simeq \mathfrak{so}(2)\cross \RR^2$: 
\bea\label{eq04'}
[L_{12},L_{13}]=-L_{23}, \qquad [L_{12},L_{23}]=aL_{13}, 
\qquad  [L_{13},  L_{23}]=0. 
\eea
The commutation relations \
 $[L_{ij},x^h]=L_{ij}\trc x^h$, \ $[L_{ij},\partial_h]=L_{ij}\trc \partial_h$, \
$[L_{ij},\xi^h]=0$ \ hold in $\Q^\bullet$. \

\begin{prop}
$\F=\exp(i\nu L_{13}\otimes L_{23})$ is a unitary abelian twist 
 inducing  a twist deformation of $U\g$,  of $\Q^\bullet$ on $\RR^3$
and of $\QM^\bullet$  on the elliptic cylinders (\ref{EllCyleq}).
The basic relations  characterizing the $U\g^\f$-module $*$-algebra  $\QMst^\bullet$
keep the same form, in particular 
\bea \label{characterizingEllCyl}
f_c(x)\equiv\frac{1}{2}(x^1\!\star\! x^1\!+\!a x^2\!\star\! x^2)\!-\!c=0,\quad
df\equiv \xi^1\!\star\! x^1\!+\!a\, \xi^2\!\star\! x^2=0,
\quad \epsilon^{ijk} f_i \star L_{jk}=0.
\eea
\end{prop}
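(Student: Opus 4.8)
The plan is to verify the three hypotheses required to invoke the general machinery, and then read off the stated relations. First I would check that $\F=\exp(i\nu L_{13}\otimes L_{23})$ is a genuine twist of the abelian (Reshetikhin) type: this follows immediately from example i.) in Section~\ref{TwistSym} with $P=L_{13}\otimes L_{23}$, since $[L_{13},L_{23}]=0$ by \eqref{eq04'}; unitarity follows because $L_{13}=x^1\partial_3$ and $L_{23}=ax^2\partial_3$ are anti-Hermitian under the natural $*$-structure \eqref{natural*} (each is $x^j$ times $\partial_3$ with $j\ne 3$, so $(x^j\partial_3)^*=\partial_3^* x^j=-\partial_3 x^j=-x^j\partial_3$ because $[\partial_3,x^j]=0$). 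Hence $\F$ induces a Hopf $*$-algebra deformation $U\g^\f$, and since $L_{13},L_{23}\in\Xi_t$ (they annihilate $f_c$, being multiples of $\partial_3$ while $f_c$ does not depend on $x^3$), all the results of Section~\ref{SecTwistingVF} apply: $\X_\star,\Q^\bullet_\star$ are well-defined and project consistently onto $\X^{M_c}_\star,\QM^\bullet_\star$, by Propositions~\ref{Propdiamondstar}, \ref{XCreprstar} and the construction of $\QMst^\bullet$ in Section~\ref{QQMstar}.

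Next I would establish the first relation in \eqref{characterizingEllCyl}. By the braiding formula \eqref{braid1}, for commuting elements $x^i$ one has $x^i\star x^i=(\R_2\trc x^i)\star(\R_1\trc x^i)$; but more directly, since the $\star$-product \eqref{starprod} only involves the operators $L_{13}^p\otimes L_{23}^q$ and $L_{13}(x^1)=0$, $L_{23}(x^2)=0$ (as $L_{13}=x^1\partial_3$ kills $x^1$ and $L_{23}=ax^2\partial_3$ kills $x^2$), we get $x^1\star x^1=x^1x^1$, $x^2\star x^2=x^2 x^2$, and likewise $x^1\star x^1$ is central. Therefore $\tfrac12(x^1\star x^1+a\,x^2\star x^2)-c=\tfrac12[(x^1)^2+a(x^2)^2]-c=f_c(x)$, which is $0$ in $\X^{M_c}_\star$ by definition (the relation $f^a_c=0$ in \eqref{DCMcrelst}).

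For the second relation, $df=\xi^h\star f_h=\xi^1\star x^1+a\,\xi^2\star x^2$: here $f_1=\partial_1 f=x^1$, $f_2=\partial_2 f=ax^2$, so $df=\xi^h f_h$ classically, and I would note that $L_{13}\trc\xi^i=0=L_{23}\trc\xi^i$ combined with $L_{13}(x^1)=0$, $L_{23}(x^2)=0$ forces $\xi^1\star x^1=\xi^1 x^1$ and $\xi^2\star x^2=\xi^2 x^2$ (no genuine deformation appears), so that $df\equiv\xi^1\star x^1+a\,\xi^2\star x^2=\xi^h f_h=0$ in $\QM^\bullet_\star$ by \eqref{DCMcrelst}. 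The third relation $\epsilon^{ijk}f_i\star L_{jk}=0$ is the $k=1$ instance of the dependence relation $f^a_{[j}L_{i_1\dots i_{k+1}]}=0$ (equivalently $f_{[i}L_{jk]}=0$ from \eqref{DepRel}), promoted to the $\star$-product: one checks that $L_{jk}$ is $U\g$-invariant up to the coefficient structure, or more simply that each $\star$-product $f_i\star L_{jk}$ reduces to the ordinary product because $L_{13},L_{23}$ act trivially enough on $f_i=x^1,ax^2,0$ and on the $L_{jk}$ (as $\g$ is a Lie algebra, $L_{13}\trc L_{jk}=[L_{13},L_{jk}]$ stays within $\g$ with constant coefficients), so the classical identity $\epsilon^{ijk}f_i L_{jk}=0$ carries over verbatim.

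The main obstacle I anticipate is not conceptual but bookkeeping: one must be careful that each claimed ``$a\star b=ab$'' really holds, i.e. that at least one of the two factors in every product appearing in \eqref{characterizingEllCyl} is annihilated by \emph{all} powers of the relevant twist leg, so that \eqref{starprod} collapses. For the cylinder this is clean because $L_{13},L_{23}$ are so simple ($x^1\partial_3$ and $ax^2\partial_3$), but in the third relation $\epsilon^{ijk}f_i\star L_{jk}$ one should double-check the term $f_3\star L_{12}=0\star L_{12}$ and the terms $f_1\star L_{23}$, $f_2\star L_{13}$, $f_1\star L_{32}$, etc., confirming in each case either a vanishing factor or that the twist acts by inner derivations preserving the classical identity. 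Once these case checks are dispatched, the proposition is immediate. The bulk of the work — setting up $\Q^\bullet_\star,\QM^\bullet_\star$ and their module-algebra structure — has already been done in Sections~\ref{SecTwistingVF}--\ref{QQMstar} and in \cite{FioFraWebquadrics}, so the proof here is essentially a verification that the abelian twist built from $L_{13}\otimes L_{23}$ satisfies the standing hypotheses and leaves these particular relations undeformed.
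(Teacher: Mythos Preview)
The paper does not include a proof of this proposition; it explicitly defers to \cite{FioFraWebquadrics} (see the paragraph preceding the subsections of Section~\ref{Examples}). Your direct verification is correct and is the natural approach: confirm that $\F$ is an abelian unitary twist based on $\Xi_t$, then check that each $\star$-product in \eqref{characterizingEllCyl} collapses to the ordinary product because $L_{13}=x^1\partial_3$ annihilates every left factor that actually appears ($x^1,x^2,\xi^1,\xi^2,f_1,f_2$, and $f_3=0$). One small slip: your blanket claim $L_{13}\trc\xi^i=0=L_{23}\trc\xi^i$ fails for $i=3$ (indeed $L_{13}\trc\xi^3=\xi^1$ and $L_{23}\trc\xi^3=a\xi^2$), but this is harmless since only $\xi^1,\xi^2$ occur in $df$; likewise your remark about the third relation is more elaborate than needed, since $L_{13}\trc f_i=0$ for all $i$ already forces $f_i\star L_{jk}=f_iL_{jk}$ without any appeal to how $L_{23}$ acts on the $L_{jk}$.
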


\noindent
Alternatively, as a complete set in $\Xi_t$ instead of $\{L_{12},L_{13},L_{23}\}$
we can use $S_t=\{L_{12},\partial_3\}$, which is actually a basis of $\Xi_t$;
the Lie algebra $\g\simeq \mathfrak{so}(2)\times \RR$ generated by the latter is abelian.

\begin{prop}\label{prop03}
$\F=\exp(i\nu \partial_3\otimes L_{12})$   is a unitary abelian twist
 inducing  a twist  deformation of $U\g$,  of $\Q^\bullet$ on $\RR^3$
and of $\QM^\bullet$  on the elliptic cylinders (\ref{EllCyleq}). 
The basic relations  characterizing the $U\g^\f$-module $*$-algebra  $\QMst^\bullet$
keep the same form, in particular 
\bea
f_c(x)\equiv\frac{1}{2}(x^1\!\star\! x^1+a x^2\!\star\! x^2)-c=0,\quad
df_c\equiv \xi^1\!\star\! x^1+a\, \xi^2\!\star\! x^2=0, \quad \epsilon^{ijk} f_i \star L_{jk}=0.
\eea
\end{prop}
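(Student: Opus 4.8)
The plan is to establish the three assertions in turn, reducing each to the general constructions of Sections~\ref{TwistAlgStruc}--\ref{SecTwistingVF} supplemented by the elementary properties of the vector fields $\partial_3$ and $L_{12}$.

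First I would check that $\F$ is a unitary abelian twist. Since $L_{12}=x^1\partial_2-ax^2\partial_1$ does not involve $x^3$ we have $[\partial_3,L_{12}]=0$, so $\F=\exp(i\nu\,\partial_3\otimes L_{12})$ is of Reshetikhin form $\exp(i\nu P)$ with $P=\partial_3\otimes L_{12}$ a product of two commuting elements of $\g$; by the first example of Section~\ref{TwistSym} it is then a Drinfel'd twist on $U\g$. For unitarity it suffices that both legs of $P$ be anti-Hermitian: under the natural $*$-structure (\ref{natural*}) one has $\partial_3^{\,*}=-\partial_3$, while $L_{12}^{\,*}=(x^1\partial_2)^*-a(x^2\partial_1)^*=-x^1\partial_2+ax^2\partial_1=-L_{12}$, using $\partial_2(x^1)=\partial_1(x^2)=0$. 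Hence $U\g^\f$ is a triangular Hopf $*$-algebra deformation of $U\g$.

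Next, since $\partial_3(f_c)=0$ and $L_{12}(f_c)=x^1\,ax^2-ax^2\,x^1=0$, both generators are tangent to all cylinders $M_c$, i.e.\ $\F\in(U\g\otimes U\g)[[\nu]]\subset(U\Xi_t\otimes U\Xi_t)[[\nu]]$. By Proposition~\ref{Propdiamondstar} and the construction of Section~\ref{QQMstar}, $\F$ then deforms $\Q^\bullet$ on $\RR^3$ into $\Q^\bullet_\star$ and $\QM^\bullet$ on the cylinders into $\QMst^\bullet$, with the projection onto each $M_c$ commuting with the twisting; since $\F$ is unitary, $\QMst^\bullet$ is a $U\g^\f$-module $*$-algebra with $*$-structure as in (\ref{star'}).

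For the claim about the relations, the key observation is that $\partial_3$ annihilates $x^1,x^2,\xi^1,\xi^2,\partial_1,\partial_2$ and $L_{12}$, while $L_{12}$ annihilates $x^3,\xi^3,\partial_3$; hence every $\star$-product and every insertion of $\R=\F_{21}\bF$ with one factor in the $x^3$-free subalgebra (respectively in the $L_{12}$-annihilated one) collapses to the corresponding undeformed product. In particular $x^i\star x^j=x^ix^j$ and $\xi^i\star x^j=\xi^ix^j$ for $i,j\in\{1,2\}$, so (\ref{EllCyleq}) and $df_c\equiv\xi^h\star f_h=0$ turn into the first two displayed identities; and since the components $f_i$ are functions of $x^1,x^2$ only, $f_i\star L_{jk}=f_iL_{jk}$, whence $\epsilon^{ijk}f_i\star L_{jk}=\epsilon^{ijk}f_if_j\,\partial_k-\epsilon^{ijk}f_if_k\,\partial_j=0$ by the symmetry of $f_if_j$. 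Running the same reduction through (\ref{DCrel2st})--(\ref{DCrel1st}) and (\ref{DCMcrelst}), together with the $\Q^\bullet$ relations $[L_{12},x^h]=L_{12}\trc x^h$, $[L_{12},\partial_h]=L_{12}\trc\partial_h$, $[L_{12},\xi^h]=0$ and their $\partial_3$ counterparts, one checks that the whole defining set of $\QMst^\bullet$ keeps the undeformed form of (\ref{DCrel2})--(\ref{DCrel1}) and (\ref{DCMcrel}) and that the degree filtration is preserved. The main difficulty is organizational rather than conceptual: one must verify that \emph{every} structure relation, not merely the three highlighted ones, retains its form and that no lower-order corrections are generated; this is exactly the general result of \cite{FioFraWebquadrics} for quadrics with $\g\subset{\sf aff}(n)$, of which the present proposition is a special case.
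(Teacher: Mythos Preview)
Your argument is correct for what the proposition actually asserts, and it is the natural one: verify the Reshetikhin twist conditions and unitarity via anti-Hermiticity of $\partial_3,L_{12}$, check tangency to all $M_c$, and then use that $\partial_3\trc$ kills every factor appearing in the three displayed relations so that each $\star$-product collapses to the ordinary product. The paper itself does not give a proof of this proposition; it states the result and refers the reader to \cite{FioFraWebquadrics} for the detailed relations, so there is nothing substantive to compare against.

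One caution: your final paragraph overstates the situation when you say that ``the whole defining set of $\QMst^\bullet$ keeps the undeformed form of (\ref{DCrel2})--(\ref{DCrel1}).'' That is not true here. For instance $x^3\star\xi^1=x^3\xi^1+i\nu a\,\xi^2\neq\xi^1\star x^3$, so the relation $h\star\xi^i=\xi^i\star h$ fails for $h=x^3$, and similarly the $\X_\star$-commutation relations among the coordinates are genuinely deformed (e.g.\ $x^3\star x^1-x^1\star x^3=i\nu a\,x^2$). What \cite{FioFraWebquadrics} shows is that these deformed relations can still be written as \emph{polynomial} relations of the same degree, preserving the filtration --- not that they are literally unchanged. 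The phrase ``keep the same form'' in the proposition refers specifically to the three submanifold-defining relations displayed (the analogues of (\ref{DCMcrel}) and the $L_{ij}$ dependence relation), which is exactly what your annihilation argument establishes.
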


In \cite{FioFraWebquadrics} the reader can find the  relations characterizing these two deformations in detail.

In the case  $a_1\!=\!a_2\!=\!1$
\ (circular cylinder of radius $R=\sqrt{2c}$ embedded in the Euclidean $\RR^3$) 
$S:=\{L,\partial_3,\Np\}$ is an orthonormal basis of $\Xi$  alternative to $S'\!:=\!\{\partial_1,\partial_2,\partial_3\}$
and such that $S_t\!:=\!\{L,\partial_3\}$, $S_{\scriptscriptstyle \perp}\!:=\!\{\Np\}$
are orthonormal bases of $\Xi_t$, $\Xi_{\scriptscriptstyle \perp}$,  respectively;
here  $L:=L_{12}/R$, $\Np=f^i\partial_i/R=(x^1\partial_1+x^2\partial_2)/R$. 
The Killing Lie algebra $\k$ is abelian and spanned (over $\RR$) by $S_t$. \ 
 $\nabla_XY=0$ for all $X,Y\!\in\! S'$, whereas the only non-zero $\nabla_XY$, with 
$X,Y\in S$ are
\be
\nabla_LL=-\frac 1 R \Np,\qquad\nabla_L\Np=\frac 1 R L,\qquad
\nabla_{\Np}L=\frac 1 R L,\qquad\nabla_{\Np}\Np=\frac 1 R\Np.
\ee
The second fundamental form is  explicitly given by \ $II(X,Y)=\left( \nabla_X Y\right)_{\perp}=-\frac{\tilde X\,\tilde Y }{R}\,\Np$, \   for all $X,Y\in \Xi_t$; \ 
here we are using the decomposition 
$ Z=\tilde Z L+Z^3\partial_3$ of the generic $Z\in \Xi_t$.
Thus $II$ is diagonal in the basis $S_t$,
with diagonal elements (i.e. principal curvatures) \ $\kappa_1=0, \kappa_2=-1/R$. \
Hence the Gauss (i.e. intrinsic) curvature $K=\kappa_1\kappa_2$ vanishes; 
$\rR_t=0$ easily follows also from $\rR=0$ using the Gauss theorem. The mean (i.e. extrinsic) curvature is $H=(\kappa_1\!+\!\kappa_2)/2=-1/2R$. \ 
The Levi-Civita covariant derivative $\nabla\!_{t,X}$ on $M_c$ is 
$$
\nabla\!_{t,X}Y=\mathrm{pr}_t(\nabla_XY)=\nabla_XY-II(X,Y)=\nabla_XY +\tilde X\,\tilde Y \,\Np/R.
$$
The deformation via the abelian twist $\F=\exp(i\nu \partial_3\otimes  L_{12})$  yields \ $\nabla^\f_{X} =\nabla^{}_X$ \  for all \ $X\in S\cup S'=
\{\partial_1,\partial_2,\partial_3,L,\Np\}$, \
$\nabla\!_{t,X}^{\,\f}Y = \mathrm{pr}_t(\nabla_XY)=\nabla\!_{t,X}Y$ \ for all
\ $X,Y\in\{\partial_3,L\}$, \
because $\partial_3$ commutes with all such $X$,
so that $\bF_1\trc X\otimes\bF_2=X\otimes\1$, \ and the projections $\mathrm{pr}_{\scriptscriptstyle \perp},\mathrm{pr}_t,$ stay undeformed, as shown in 
Proposition \ref{prop08}.  \ These results determine $ \nabla^\f_{X}Y$
for all $X,Y\in\Xi_\star$ and  $\nabla\!_{t,X}^{\,\f}Y=\nabla\!_{t,X}Y$ for all $X,Y\in\Xi_{t\star}$ via  the function left $\star$-linearity
in $X$ and the deformed Leibniz rule for $Y$. 
The twisted curvatures $\rR^{\f},\rR^{\f}_t$ vanish.
Furthermore, for all $X,Y\in S_t$
\begin{equation}\label{II^F=II}
    II^\f_\star(X,Y)
    \stackrel{(\ref{II^F})}{=}II(\mathcal{F}_1^{-1}\trc X,\mathcal{F}_2^{-1}\trc Y)
    =\gm(\nabla_{\f_1^{-1}\trc X}(\mathcal{F}_2^{-1}\trc Y),\Np)
    \Np    =II(X,Y),
\end{equation}
i.e. the principal curvatures  
\ $\kappa_1=0, \kappa_2=1/R$, \
Gauss and mean curvatures are  undeformed.

\subsection{(b-c-d)  Family of  hyperboloids and cone in Minkowski $\RR^3$}

Their equations in canonical form (with $a_1\!=\!1$) are parametrized by
$a=a_2>0$, $b=-a_3>0$, $c=-a_{00}$ ($c>0$, $c<0$ resp. for the  $1$-sheet  
and the $2$-sheet hyperboloids, $c=0$ for the cone) and read
\begin{equation}\label{eq19}
    f_c(x) :=\frac{1}{2}[(x^1)^2+a(x^2)^2-b(x^3)^2]-c=0.
\end{equation}
For all $a,b>0$, $\{M_c\}_{c\in\RR\setminus\{0\}}$ is a foliation of $\RR^3\setminus M_0$,
where $M_0$ is the cone of equation $f_0=0$.
The Lie algebra  $\g$ is spanned by  $L_{12}=x^1\partial_2-ax^2\partial_1$, 
$L_{13}=x^1\partial_3+bx^3\partial_1$,
$L_{23}=ax^2\partial_3+bx^3 \partial_2$, which  fulfill \
$[L_{12},L_{13}]=-L_{23}$, \ $ [L_{12},L_{23}]=aL_{13}$, \ 
$[L_{13},L_{23}]    =bL_{12}$. \
Setting $H:=\frac{2}{\sqrt{b}}L_{13}$,
$E:=\frac{1}{\sqrt{a}}L_{12}+\frac{1}{\sqrt{ab}}L_{23}$
and $E':=\frac{1}{\sqrt{a}}L_{12}-\frac{1}{\sqrt{ab}}L_{23}$,
we obtain 
\be
[H,E]=2E,\qquad [H,E']=-2E',\qquad [E,E']=-H,       \label{so(2,1)}
\ee 
showing that the corresponding symmetry Lie algebra is
$ \mathfrak{g}\simeq\mathfrak{so}(2,1)$. 
The commutation relations \
 $[L_{ij},x^h]=L_{ij}\trc x^h$, \ $[L_{ij},\partial_h]=L_{ij}\trc \partial_h$, \
$[L_{ij},\xi^h]=0$ \ hold in $\Q^\bullet$.
To compute the action of $\mathcal{F}$ on  functions it is convenient 
to adopt as new coordinates the eigenvectors of $H\trc$ 
\be
    y^1    =x^1+\sqrt{b}x^3,\qquad
    y^2=x^2,\qquad
    y^3=x^1-\sqrt{b}x^3;
\ee
with eigenvalues $\lambda_1=2$, $\lambda_2=0$ and $\lambda_3=-2$. \ 
 Also  $\eta^i\equiv dy^i, \:\tilde{\partial}_i\equiv\frac{\partial}{\partial y^i}$
are eigenvectors of $H\trc$.
%
\begin{prop}
 $\F=\exp(H/2\otimes\log(\1+i\nu E))$ is a unitary twist
 inducing a twist deformation of $U\g$,
 of $\Q^\bullet$ on $\RR^3$
and of $\QM^\bullet$  on the elliptic hyperboloids and cone (\ref{eq19}). 
The basic relations  characterizing the $U\g^\f$-module 
$*$-algebra  $\QMst^\bullet$ are  
\be  \label{QMstarEllHyp}
\ba{lll}
    0 &=&f_c(y)\equiv
    \frac{1}{2}y^3\star y^1+\frac{a}{2}y^2\star y^2-c,\\[6pt]
  0&=& \mathrm{d}f =\frac{1}{2}(y^3\star\eta^1+\eta^3\star y^1)    +ay^2\star\eta^2,\\[6pt]
  0&= &y^3\star E-y^1\star E'-\sqrt{a}\,y^2\star H+i\nu y^1\star H-2i\nu (1+i\nu) y^1\star E.
\ea\ee
%
\end{prop}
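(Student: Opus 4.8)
\noindent
The plan is to verify first that $\F$ is a bona fide \emph{unitary} Drinfel'd twist built from a finite-dimensional Lie subalgebra of $\Xi_t$ — so that the general machinery of Sections~\ref{QQM}--\ref{QQMstar} produces $\QMst^\bullet$ as a $U\g^\f$-module $*$-algebra — and then to translate the resulting presentation into~(\ref{QMstarEllHyp}) by an explicit computation of the relevant $\star$-products in the eigenbasis of $H$. For the first part: by~(\ref{so(2,1)}) one has $[H,E]=2E$, so $\F=\exp\big(\tfrac12 H\ot\log(\1+i\nu E)\big)$ is a Jordanian twist on $U\g$ (example~ii of Section~\ref{TwistSym}); each $L_{ij}=f_i\partial_j-f_j\partial_i$ is anti-Hermitian for the natural $*$-structure~(\ref{natural*}) because $L_{ij}^*=-(f_i\partial_j-f_j\partial_i)+(f_{ij}-f_{ji})=-L_{ij}$, with $f_{ij}:=\partial_i\partial_j f$ symmetric, hence so are $H=\tfrac{2}{\sqrt b}L_{13}$ and $E=\tfrac{1}{\sqrt a}L_{12}+\tfrac{1}{\sqrt{ab}}L_{23}$ and $\F$ is unitary. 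Since the $L_{ij}$ annihilate $f_c$ we have $\g=\mathrm{span}_\RR\{H,E,E'\}\subset\Xi_t$ (in fact $\g$ equals the Lie algebra $\k$ of Killing fields of Minkowski $\RR^3$ tangent to the $M_c$), so $\F\in(U\g\ot U\g)[[\nu]]\subset(U\Xi_t\ot U\Xi_t)[[\nu]]$ and Propositions~\ref{Propdiamondstar}, \ref{prop07}, \ref{prop08}, \ref{XCreprstar} and~\ref{lem01star} apply: $\Q^\bullet\leadsto\Q^\bullet_\star$ and $\QM^\bullet\leadsto\QMst^\bullet$ are well-defined $U\g^\f$-equivariant $*$-algebras, with $\QMst^\bullet$ presented by $\xi^h$ and $L_{jk}$ (equivalently $H,E,E'$) modulo the deformed ambient relations~(\ref{DCrel2st})--(\ref{DCrel1st}) and the submanifold relations~(\ref{DCMcrelst}): $f_c=0$, $df\equiv\xi^h\star f_h=0$, and the deformed $\X_\star$-linear dependence among the $L_{jk}$. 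What is left is to rewrite these in the $y$-coordinates and in terms of $H,E,E'$.

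\smallskip\noindent
Next I would pass to $y^1=x^1+\sqrt b\,x^3$, $y^2=x^2$, $y^3=x^1-\sqrt b\,x^3$, which (Section~\ref{Examples}) diagonalise $H\trc$ with eigenvalues $\lambda_1=2,\lambda_2=0,\lambda_3=-2$; the $\eta^i=dy^i$ carry the same eigenvalues, and $E\trc y^1=E\trc\eta^1=0$, $H\trc y^2=H\trc\eta^2=0$. The change $x\mapsto y$ being linear with constant Jacobian, the relations~(\ref{DCMcrelst}) keep their form. Writing $\sigma:=\log(\1+i\nu E)$, which has no constant term, one has $\sigma^n\trc y^1=\sigma^n\trc\eta^1=0$ and $(\tfrac12 H)^n\trc y^2=(\tfrac12 H)^n\trc\eta^2=0$ for all $n\ge1$; consequently a $\star$-product of two of the $y^i,\eta^i$ in which $y^1$ or $\eta^1$ stands on the right, or $y^2$ or $\eta^2$ on the left, collapses to the ordinary product. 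This gives at once $\tfrac12\,y^3\star y^1+\tfrac a2\,y^2\star y^2-c=\tfrac12\,y^1y^3+\tfrac a2\,(y^2)^2-c=f_c(y)$, which vanishes on $M_c$ — hence~(\ref{QMstarEllHyp})$_1$ — and, since forms commute with functions in $\Q^\bullet$, $\tfrac12(y^3\star\eta^1+\eta^3\star y^1)+a\,y^2\star\eta^2=\tfrac12\,y^3\,dy^1+\tfrac12\,y^1\,dy^3+a\,y^2\,dy^2=df_c$, which equals $0$ in $\QMst^\bullet$ — hence~(\ref{QMstarEllHyp})$_2$.

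\smallskip\noindent
For~(\ref{QMstarEllHyp})$_3$: classically $\Xi_t$ is a rank-$2$ $\X$-module spanned by the three vector fields $H,E,E'$, whose unique $\X$-linear dependence, obtained from their explicit $y$-coordinate expressions, is $y^3E-y^1E'-\sqrt a\,y^2H=0$ (the $y$-incarnation of $f_{[i}L_{jk]}=0$). Deforming it via the identity $ab=(\F_1\trc a)\star(\F_2\trc b)$ (valid in $\Q^\bullet$) amounts to replacing each coefficient and each generator by $\F_1\trc(\cdot)$, resp.\ $\F_2\trc(\cdot)$, and expanding the Jordanian exponential, using $(\tfrac12 H)^n\trc y^i=(\lambda_i/2)^ny^i$ and the adjoint action $\sigma\trc E=0$, $\sigma\trc H=-2i\nu E$, $\sigma\trc E'=-i\nu H+\nu^2E$ (read off from $[H,E]=2E$, $[E,E']=-H$). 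The $y^3$- and $y^2$-sectors are annihilated after one step by the relevant power of $H$ acting in the first tensor slot and merely reproduce the classical terms, whereas the $y^1$-sector (eigenvalue $2$, on which $\tfrac12 H$ acts as the identity) survives and is responsible for the $\nu$-dependent corrections displayed in~(\ref{QMstarEllHyp})$_3$. I expect this last step to be the main obstacle: one has to track carefully the cross-terms produced by the non-abelian Jordanian exponential, which copy of $\F$ versus $\bF$ enters at each stage, and the re-expression of ordinary products of generators as $\star$-products without generating the spurious zero-order parts of the enlarged Heisenberg algebra. The detailed computation — together with the parallel treatment of the ambient relations~(\ref{DCrel2st})--(\ref{DCrel1st}) and of the remaining quadric families — is carried out in~\cite{FioFraWebquadrics}.
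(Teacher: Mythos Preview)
Your approach is correct and aligns with the paper's own treatment, which does not prove this proposition in the text but explicitly defers to~\cite{FioFraWebquadrics}; your sketch is in fact more explicit than what the paper provides --- the unitarity argument via $L_{ij}^*=-L_{ij}$, the derivation of~(\ref{QMstarEllHyp})$_{1,2}$ from the $H$-eigenbasis, and your formulas $\sigma\trc E=0$, $\sigma\trc H=-2i\nu E$, $\sigma\trc E'=-i\nu H+\nu^2E$ are all right and render the computation mechanical.

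One point worth flagging: if you actually carry your recipe $t\,e=(\F_1\trc t)\star(\F_2\trc e)$ through for the $y^1$-sector --- using $\sigma^2\trc E'=\sigma\trc(-i\nu H+\nu^2E)=-2\nu^2E$ and $\sigma^n\trc E'=0$ for $n\ge3$ --- the two $\nu^2\,y^1\star E$ contributions cancel and you obtain
\[
y^3\star E-y^1\star E'-\sqrt a\,y^2\star H+i\nu\,y^1\star H=0,
\]
i.e.\ \emph{without} the final $-2i\nu(1+i\nu)\,y^1\star E$ term displayed in~(\ref{QMstarEllHyp})$_3$. Equivalently, expanding all $\star$-products of~(\ref{QMstarEllHyp})$_3$ back to ordinary products leaves a residual $2\nu(\nu-i)\,y^1E\neq0$. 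So your method is sound, but be prepared for a convention mismatch (sign of $\nu$, ordering of the twist legs, or a typo) when you compare with the detailed relations in~\cite{FioFraWebquadrics}.
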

In Ref. \cite{FioFraWebquadrics} the reader can find, first the detailed actions of $H,E,E'$ on  $y^i,\partial_i, \eta^i$, then the twisted coproducts and antipodes of $H,E,E'$,  the star products and commutation relations among the generators $H,E,E', y^i,\partial_i, \eta^i$ arising from this twist.

\medskip
Let us now focus on the case $1=a_1=a=b$, i.e.
$f_c(x)=\frac{1}{2}((x^1)^2+(x^2)^2-(x^3)^2)-c$. This covers the
circular cone, the circular hyperboloid of one and two sheets.
We endow $\RR^3$ with the Minkowski metric $\gm:=\eta_{ij}\mathrm{d}x^i\otimes\mathrm{d}x^j=\mathrm{d}x^1\otimes\mathrm{d}x^1
+\mathrm{d}x^2\otimes\mathrm{d}x^2-\mathrm{d}x^3\otimes\mathrm{d}x^3$,
whence $\gm(\partial_i,\partial_j)=\eta_{ij}$. \ $\gm$
is equivariant with respect to $U\mathfrak{g}$, where
$\mathfrak{g}$ is the Lie $^*$-algebra spanned by the vector fields
$L_{ij}$, tangent to $M_c$. $\E =\gm^{-1}(df,df)=x^ix_i=2c$, which vanishes for $c=0$.
Therefore ${\cal D}_f'=\RR^3\setminus M_0$ ($M_0$ is the cone).
The induced metric on the remaining $M_c\subset{\cal D}_f'$ (or first fundamental form) $\gm_t$ makes 
$M_c$ Riemannian if $c<0$, Lorentzian if $c>0$ (whereas it is degenerate on the cone $M_0$); moreover, in any basis $S_t:=\{v_1,v_2\}$ of $\Xi_t$ we find
\ $II(v_\alpha,v_\beta)=-g_{\alpha\beta}\,\Vp/2c$, \ 
where  $\alpha,\beta\in\{1,2\}$, $g_{\alpha\beta}:=\gm(v_\alpha,v_\beta)$, 
$\Vp=f_j\eta^{ji}\partial_i=x^i\partial_i$. 
We find the components of the curvature,  Ricci tensors,  the Ricci scalar  
 (or {\it  Gauss curvature}) on $M_c$
\bea
\rR_t{\,}^{\delta}_{\alpha\beta\gamma}=
\frac{g_{\alpha\gamma}\delta_\beta^\delta-g_{\beta\gamma}\delta_\alpha^\delta}{2c},\qquad \ric_{t}{}_{\beta\gamma}=\rR_{t}{\,}_{\alpha\beta\gamma}^{\alpha}
=-\frac {g_{\beta\gamma}}{2c},\qquad \mathfrak{R}_{t}
=\ric_{t}{}_{\beta\gamma}g^{\beta\gamma}=-\frac 1{c}  \label{curvatures}
\eea
applying the Gauss theorem.
All diverge as $c\to 0$ (i.e. in  the cone $M_0$ limit). $M_c$ is therefore
anti-de Sitter space $AdS_2$ if $c>0$, the union of two copies of de Sitter space
$dS_2$ if $c<0$.

Under twist deformation the curvature (and Ricci) tensor on $\RR^3$ remain zero.
By propositions \ref{prop08}, \ref{prop02}, \ref{lem01star} on $M_c$  
the first and second fundamental forms, as well as the
 curvature and Ricci tensors, stay undeformed as elements of the corresponding
tensor spaces:   $\gm^\f_{t}=\gm_t\in\Omega_t\otimes\Omega_t[[\nu]]$, ... \ 
Only the associated multilinear maps of twisted tensor products 
$\gm_{t\star}:\Xi_{t\star}\ots \Xi_{t\star}\to\X_\star$, ...,
 `feel' the twist; they are related to the undeformed maps
through formulae (\ref{twistedmetric2}), (\ref{II^F}) and
\be
\ric^\f_{t\star}(X,Y)=\ric_t\left(\bF_1\trc X,\bF_2\trc Y\right)
\ee 
[compare also to \cite{AschieriCastellani2009}~Theorem~7 and eq.~(6.138)],
and similarly for $\rR^\f_{t\star}$.
Also the Ricci scalar (or Gauss curvature) $\mathfrak{R}_t^\f$ remains
the undeformed one $-1/c$.

\medskip
Finally, one can elaborate \cite{FioFraWebquadrics} also abelian twist deformations for the elliptic cone, (\ref{eq19}) with $c=0$, enlarging the Lie algebra $\g$
by a generator $D=
x^i\partial_i=y^i\tilde{\partial}_i$ of dilatations, which commutes with all $L_{ij}$ and
is also tangent to the cone (only),  as $D(f)=2f$: $D\in\XiM$.

\section{Outlook and final remarks}
\label{Outlook}

Considering a generic
embedded submanifold $M\subset\RR^n$ that consists of  the solutions $x$
of a set of $k$ equations $f^a(x)=0$ ($a=1,...,k$), where 
$f:{\cal D}_f\subset\RR^n\to \RR^k$ is a $k$-ple of smooth  functions with Jacobian matrix of rank $k$, in this work we have 
explicitly built its noncommutative analogue in the framework of Drinfel'd (cocycle) twist \cite{Drinfeld1983}
deformation of differential geometry  \cite{Aschieri2006,AschieriCastellani2009}.
This can be considered as a successful result, also in the broader framework of
deformation quantization \cite{BayFlaFroLicSte,Kon97}, in the sense that the deformed
algebra $\X_\star$ of functions on ${\cal D}_f$ and the one $\XM_\star$  of functions 
on $M$ both coincide with their 
undeformed counterparts $\X[[\nu]],\XM[[\nu]]$ as $\CC[[\nu]]$-modules ($\nu$ is the deformation parameter), because the same occurs for the ideal $\C$ of functions vanishing on $M$ and its deformed counterpart, $\C_\star$, \ $\C_\star=\C[[\nu]]$; \ only the
pointwise product is deformed into a (in general noncommutative) one, the
$\star$-product. In other words, taking the quotient and performing the deformation commute:
$\XM_\star=(\X/\C)_\star=\X_\star/\C_\star$. \  
The key point has been to perform the deformation 
using a Drinfeld twist $\F$ based on the Lie subalgebra $\Xi_t$ (\ref{defeqXis}) 
of vector fields on ${\cal D}_f$ that are tangent to {\it every} manifold $M_c$ of the family
of level sets of $f$ (the latter is parametrized by $c\in f\left({\cal D}_f\right)\subset\RR^k$; $M_c$ consists of the solutions of $f^a(x)-c^a=0$, 
$a=1,...,k$), rather than on the Lie algebra $\XiM$ of vector
fields tangent  to $M$ only; this has given for free the deformation of the whole
family by the same twist. Every vector field in the $\star$-Lie algebra $\XiMst$
can be represented by an element
of the $\star$-Lie algebra $\Xi_{t\star}$, as it occurs in the undeformed case.
The whole  twisted Cartan calculus is automatically equivariant  under the non-cocommutative Hopf
algebra $U\Xi_t^\f$; the latter may be interpreted as the quantum group of (small)
diffeomorphisms of the deformed submanifolds.
The dimensions of $\Xi_{\star}, \Xi_{t\star}$ as $\X_\star$ bimodules, as well as 
of their duals $\Omega_\star,\Omega_{t\star}$, 
remain undeformed, contrary to what happens to the quantum group
bicovariant or equivariant differential calculi mentioned in the introduction. This is because we consider 2-cocycles twists, but could change
with more general twists leading to quasitriangular Hopf algebras or quasi-Hopf algebras,
or twists in the category of bialgebroids.
We have also shown that,
when $\RR^n$ is endowed with a connection $\nabla$, 
taking the tangent projection (from $\RR^n$ to $M$) of $\nabla$ and the associated torsion, curvature, commutes with performing the deformation, provided  $\F$ is based on the equivariance Lie subalgebra $\mathfrak{e}\subset\Xi_t$  
[see (\ref{equivLieAlg})].
When $\RR^n$ is endowed with a metric $\gm$  the same holds for 
$\gm$ itself, the associated  Levi-Civita connection, the intrinsic and extrinsic
curvatures  (while the torsions remain zero), only if  $\F$ is based on the  Lie algebra $\k\subset\Xi_t$ of Killing vector fields of the metric.

All our results are global, in that we have determined global (i.e., defined on all of $M$) bases - or complete sets - of all the relevant 
$\X_\star$- and $\XM_\star$-bimodules from their undeformed counterparts: $\C_\star$ is spanned by the globally defined functions $f^a$, $\Xi_{t\star}$ by some 
complete set $\{e_\alpha\}$
of globally defined 
vector fields [e.g. (\ref{genL})]; these  fulfill some linear dependence relations], the $\X_\star$-bimodule $\Xips\subset\Xi_\star$ of twisted vector fields normal to the $M_c$'s (with respect to the metric $\gm$) is spanned by the globally defined 
vector fields (\ref{DefNpa}), and similarly the dual
ones $\Omega_{t\star},\Omega_{{\scriptscriptstyle \perp}\star}$ of 1-forms, their tensor or wedge powers,.... This means that both in the undeformed and deformed context
these bimodules/algebras can be formulated in terms of (the mentioned) generators and polynomial relations, with elements in $\X,\X_\star$ as coefficients. 

In the polynomial setting, if the polynomial functions $f^a(x)$ 
fulfill suitable irreducibility conditions, then
also $\X,\X_\star,\XM,\XM_\star$ can be defined in terms of  generators $x^i$  (the Cartesian coordinates) and  polynomial relations \cite{FioFraWebquadrics}. The procedure can be potentially applied to a large number of algebraic manifolds, starting from
algebraic hypersurfaces ($k=1$),  in particular quadrics; 
one can use the examples of cocycle twists available in the literature (tipically based on finite-dimensional Lie algebras $\g$) to build concrete  deformations of these submanifolds. In \cite{FioFraWebquadrics}  
the authors discuss in detail deformations of all families
of quadric surfaces embedded in $\RR^3$ that are induced by unitary twists of
the abelian \cite{Reshetikhin1990} or Jordanian \cite{Ogievetsky1992,Ohn1992} type,
except the ellipsoids.  Here (section \ref{Examples}) we have only presented the results for the
elliptic (in particular, circular) cylinders, hyperboloids and cone.
Endowing $\RR^3$ with the Euclidean (resp. Minkowski) metric
we have found twisted circular (i.e. maximally symmetric) cylinders, hyperboloids and cone  $M_c$ that are (pseudo)Riemannian and equivariant  under 
a non-trivial Hopf algebra $U\mathfrak{k}^{\f}$ (``quantum group of isometries");
the twisted Levi-Civita connection on all $M_c$ equals the projection of the twisted  Levi-Civita connection on $\RR^3$ (the exterior derivative), 
while the twisted intrinsic curvature can be expressed 
in terms of the twisted second fundamental form (or extrinsic curvature)
via the twisted Gauss theorem;
the twisted  curvatures are the same constants as their undeformed counterparts. The twisted hyperboloids with $c<0$ (resp. $c>0$)
can be thus considered as twisted de Sitter spaces $dS_2$  (resp.  anti-de Sitter spaces $AdS_2$).

We recall that the higher-dimensional generalizations of the latter manifolds 
play a prominent role in present cosmology and theoretical physics as  maximally symmetric cosmological solutions to the Einstein field equations of general relativity with a nonzero cosmological constant $\Lambda$; in particular, de Sitter spacetime ($\Lambda>0$) 
can describe a universe with accelerating expansion rate (see e.g. \cite{Dod03}), while anti-de Sitter  spacetimes 
 ($\Lambda<0$) are at the base  of the socalled Ads/CFT correspondence \cite{Mal98}.
Interpreting Minkowski $\RR^{2+1}$ as a relativistic momentum, rather than
position, space ($x^1,x^2$ playing the role of
components of the momentum, $x^3$ of energy), the equations   (\ref{eq19}) as dispersion relations
for relativistic particles, and performing the deformations, we should regard
the $x^3>0$ component of the twisted 2-sheet hyperboloid ($c<0$) 
as the twisted mass shell of a particle of mass $\sqrt{|2c|}$;
similarly, the $x^3>0$ nappe of the cone $c=0$ would do for a massless particle,
while $c>0$ would do for a tachyon. 

Generalizing the framework to submanifolds of $\CC^n$
looks straightforward and should make things even simpler, as we
drop  $*$-structures and the related constraints on the twist. For instance,
there are no abelian twist deformations of the ellipsoids in $\RR^3$, because
the corresponding $\g\simeq\mathfrak{so}(3)$ is simple;
neither are there Jordanian ones, because $\mathfrak{so}(3)$  (over $\RR$)
contains no elements $E,H$ fulfilling $[H,E]=2E$.
If  we considered $f(x)\equiv x^ix^i-1$ as a polynomial function $f:\CC^3\to\CC$,
 then such $E,H\in\g\simeq\mathfrak{so}(3,\CC)$ would exist, and we could perform a Jordanian deformation
also of the complex ellipsoid $M\subset\CC^3$  solution of $f(x)=0$.

\smallskip
Finally, 
in \cite{FioPis18,FioPis19JPA,FioPis19LMP} an alternative approach to introduce
noncommutative (fuzzy) embedded submanifolds $S$ in $\RR^n$
was proposed and applied to the spheres;
it is obtained projecting the algebra of observables of a quantum particle
in $\RR^n$, in a confining potential with a very sharp
minimum on $S$, to the Hilbert subspace with energy below a certain cutoff.

\bigskip\noindent
{\bf Acknowledgments.} 
We are indebted with F. D'Andrea for critical advice, fruitful discussions  and suggestions at all stages of the work. We thank D. Franco for stimulating remarks, P. Aschieri for clarifying some crucial aspects of Ref.
\cite{Aschieri2006,AschieriCastellani2009}. T. Weber would like
to thank P. Aschieri also for his kind hospitality at the
University of Eastern Piedmont "Amedeo Avogadro".

\section{Appendix}

\subsection{Proof of Proposition \ref{DecoC}}

As the inclusion $\C\supset\bigoplus_{a=1}^k \X f^a$ is trivial, we need to prove the converse one.
For all $\bar x\in M$ we can find 
a local smooth change of coordinates $\phi:x\in V_{\bar x}\mapsto z\in U_{\bar z}$ of the form $\phi(x)=(f,y)\equiv(f^1,...,f^k,y^1,...,y^{n-k})$, where 
$\bar z\equiv\phi(\bar x)=(0_k,\bar y)$  ($0_k$ stands for the row with $k$ zeroes),
$U_{\bar z}\subset\RR^n$ is an open ball with center $\bar z$, and
$V_{\bar x}=\phi^{-1}\left(U_{\bar z}\right)\subset\RR^n$; one can choose the extra coordinates $y^h$ e.g. as a subset of the $x^j$ themselves\footnote{
Consider in fact the set of equations in the variables $(x,c)\in\RR^{n+k}$
\be
l^a(x,c):= f^a(x)-c^a=0 ,\qquad a=1,2,...,k<n.   \label{eql}
\ee
The Jacobian matrix of $l=(l^1,...,l^k)$ is
the $k\times(n\!+\!k)$-matrix $(J|-I_k)$, where $J=\partial f/\partial x$ has rank $k$.   $M$ consists of the  points $x$ such that $(x,0_k)$ solves (\ref{eql}). Fixed a $\bar x\in M$, we can always permute the coordinates so that 
the $k\times k$-matrix $A:=\left(\partial f^a/\partial x^b\right)_{a,b=1}^k$ is invertible in $\bar x$.
By the implicit function theorem there exists an open ball $U_{\bar z}\subset\RR^n$ centered at $\bar z:=(0_k,\bar x^{k+1},...,\bar x^n)$  and  smooth functions 
 $x^a(z)$ of
$ z:=(c^1,...,c^k,x^{k+1},...,x^n)\in U_{\bar z}$ such that $x^a(\bar z)=\bar x^a$, and 
$l\big(x^1(z),...,x^k(z),x^{k+1},...,x^n,c^1,...,c^k\big)=0$; thus we can set \ $y^1= x^{k+1}$, ..., $y^{n-k}= x^n$. 
}.
For all $h\in\X$ the function defined on $U_{\bar z}$ by $\hat h(z)=h(x)$ is 
smooth as well. 
In terms of the new coordinates the points of $V_{\bar x}\cap M$ belong to the 
hyperplane $z^1=...=z^k=0$. For all $z=(c,y)\in U_{\bar z}$
we denote as $z':=(0_k,y)$ its projection on this hyperplane; the segment $zz'$ is contained in the ball $U_{\bar z}$.
Applying Hadamard's lemma 
to the dependence of $\hat h(z)$ on the first $k$ coordinates [considering $y$ as parameters] we find 
$\hat h(z)=\hat h(z')+\sum_{a=1}^kc^a\hat h^a(z)$  in  $U_{\bar z}$,
with smooth $\hat h^a$; more explicitly, 
$$
\hat h^a(z)=\int^1_0 \frac{\partial \hat h}{\partial z^a}(tc,y)\: dt.
$$
Equivalently,  $h(x)=h(x')+\sum_{a=1}^kf^a(x)h^a_{\bar x}(x)$ in  $V_{\bar x}$, where
$x'=\phi^{-1}(z')\in V_{\bar x}\cap M$ and $h^a_{\bar x}$ are  defined 
by $h^a_{\bar x}(x)=\hat h^a(z)$ and smooth in 
$V_{\bar x}$. 
If $h\in\C$ then $h(x')=0$,  and $h(x)=\sum_{a=1}^kf^a(x)h^a_{\bar x}(x)$.
 This is the desired decomposition, but only locally. To make it
global, consider the open cover of  ${\cal D}_f$
$$
\O=\O'\cup \{V_1\}\cup ...\cup\{V_k\}, \qquad \O':=\{V_{\bar x}\: |\:\bar x\in M\},\quad
V_a:={\cal D}_f \setminus \overline{M_a},
$$
where $\overline{M_a}$ is the closure of the hypersurface $M_a$, which is the level
set of $f^a$ ($M=\bigcap_{a=1}^kM_a$). Since ${\cal D}_f$  is paracompact (as so is the metric space $\RR^n$), there is a smooth partition of unity 
subordinated to $\O$, i.e. there exist: a function $\rho_a$  with support contained in 
$V_a$, for all $a\in\{1,...,k\}$, and 
a function $\rho_{\bar x}\in\X$  with support contained in $V_{\bar x}$,
for all $\bar x\in  M$, such that  for all $x\in {\cal D}_f$ \
$\sum_{\bar x\in  M}\rho_{\bar x}(x)+\sum_{a=1}^k\rho_a(x)=1$, with only a finite
number of non-zero terms in the sum.  The functions
$$
\tilde h^a_{\bar x}(x):=\left\{\!\ba{lr}
h^a_{\bar x}(x)\rho_{\bar x}(x) &\mbox{if }x\in V_{\bar x},\\[6pt]
0 &\mbox{if }x\in \RR^n\setminus V_{\bar x},\ea\right. \qquad 
\tilde h^a(x):=\left\{\!\ba{lr}
\displaystyle h(x)\frac{\rho_a(x)}{f^a(x)} &\mbox{if }x\in V_a,\\[4pt]
0 &\mbox{if }x\in \RR^n\setminus V_a,\ea\right.
$$
belong to $\X$ and fulfill \ $\sum_{a=1}^kf^a(x)\tilde h^a_{\bar x}(x)=
h(x)\rho_{\bar x}(x)$, \ $f^a(x)\tilde h^a(x)= h(x)\rho_a(x)$; \
hence \ $h^a:=\sum_{\bar x\in  M}\tilde h^a_{\bar x}+
\tilde h^a\in\X$ are  the coefficients needed for (\ref{DecoCeq}) to hold. 
In fact,  for all $x\in {\cal D}_f$
$$
\sum_{a=1}^k h^a(x)f^a(x)=\sum_{a=1}^k f^a(x)h^a(x)=
h(x)\left[\sum_{\bar x\in M}\rho_{\bar x}(x)+\sum_{a=1}^k\rho_a(x)\right]
=h(x).    \label{DecohinC}
$$

\subsection{More on twists}

We write in a compact notation
the  inverse of (\ref{cocycle}) and its consequences
\be\ba{ll}
\bF_{(12)3}\bF_{12}=\bF_{1(23)}\bF_{23}, \qquad\qquad &
\bF_{(123)4}\bF_{(12)3}=\bF_{(12)(34)}\bF_{34},\\[8pt]
\bF_{(123)4}\bF_{1(23)}=\bF_{1(234)}\bF_{(23)4},\qquad\qquad &
\bF_{(12)(34)}\bF_{12}=\bF_{1(234)}\bF_{2(34)},\ea \label{bla}
\ee
obtained applying $\Delta$ on the first, second, third tensor factor
and recalling that $\Delta$ is cocommutative; the bracket encloses tensor factors
obtained from one by application of $\Delta$.   To denote the
decomposition of $\F_{(12)3}$ we have used a Sweedler-type notation
$
\F_{(12)3}\equiv (\Delta\ot\id)(\F)=\F_{1(1)}\ot \F_{1(2)}
\ot\F_2,
$
and similarly for $\F_{1(23)},\bF_{(12)3}...$. Several proofs are based on these 
formulae.

\subsection{Isomorphism of twisted Hopf $*$-algebras for unitary twists}
\label{AppendixD}

We use the notation of section \ref{TwistAlgStruc}.
Assume that $(H,*)$ is a Hopf $*$-algebra. We now prove 

\begin{prop}
If $\F$ is unitary, then $D:(H_\star,*_\star)\to(H^\f,*)$ is an isomorphism of  triangular
Hopf $*$-algebras; in particular, $D\circ *_\star= *\circ D$.
 \label{IsomorHopf}
\end{prop}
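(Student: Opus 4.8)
\medskip\noindent
\textbf{Proof strategy.} By (\ref{HF-HstarREL}), $D$ is already an isomorphism of triangular Hopf algebras from $(H_\star,\star,\eta,\Delta_\star,\epsilon,S_\star,\R_\star)$ onto $(H^\f,\mu,\eta,\Delta_\f,\epsilon,S_\f,\R)$, and for unitary $\F$ the pair $(H^\f,*)$ is a triangular Hopf $*$-algebra (section~\ref{TwistSym}). Hence the only thing still missing is the compatibility of $D$ with the two $*$-structures, and the plan is to prove the single identity $D(a^{*_\star})=\big(D(a)\big)^*$ for all $a\in H[[\nu]]$; once this is established, transporting $*$ along the Hopf isomorphism $D$ automatically turns $(H_\star,\star,\eta,\Delta_\star,\epsilon,S_\star,\R_\star,*_\star)$ into a triangular Hopf $*$-algebra and $D$ into an isomorphism of such.

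First I would expand both sides explicitly. Starting from $D(a)=\F_1\,a\,S(\F_2)\,\beta^{-1}$ in (\ref{eq20}), using that $*$ is an antilinear anti-homomorphism, unitarity $\F^{*\ot*}=\bF$ (whence also $\bF^{*\ot*}=\F$, and $\beta^*=S(\beta^{-1})$, so $(\beta^{-1})^*=S(\beta)$), and the Hopf $*$-algebra relation $S(x)^*=S^{-1}(x^*)=S(x^*)$ — the last equality because $H=U\g$ is cocommutative, hence $S^2=\id$ — one obtains
\be
\big(D(a)\big)^*=(\beta^{-1})^*\,\big(S(\F_2)\big)^*\,a^*\,\F_1^*=S(\beta)\,S(\bF_2)\,a^*\,\bF_1=S(\bF_2\beta)\,a^*\,\bF_1 .
\ee
On the other side, $*_\star$ is the involution (\ref{star'}) of the module $*$-algebra $H[[\nu]]$, i.e. $a^{*_\star}=S(\beta)\trc a^*$ with $\trc$ the adjoint action $g\trc x=g_{(1)}\,x\,S(g_{(2)})$; using $\Delta\circ S=(S\ot S)\circ\Delta$ together with $S^2=\id$ this simplifies to $a^{*_\star}=S(\beta_{(1)})\,a^*\,\beta_{(2)}$, whence
\be
D(a^{*_\star})=\F_1\,S(\beta_{(1)})\,a^*\,\beta_{(2)}\,S(\F_2)\,\beta^{-1} .
\ee
Comparing, $D(a^{*_\star})=\big(D(a)\big)^*$ for every $a$ will follow once we establish the purely twist-theoretic identity in $H\ot H$
\be
\F_1\,S(\beta_{(1)})\;\ot\;\beta_{(2)}\,S(\F_2)\,\beta^{-1}\;=\;S(\bF_2\beta)\;\ot\;\bF_1 . \label{keyidentityIsom}
\ee

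Establishing (\ref{keyidentityIsom}) is the crux. I would substitute $\beta=\F_1 S(\F_2)$, $\beta^{-1}=S(\bF_1)\bF_2$, expand $\Delta(\beta)=\F_{1(1)}S(\F_{2(1)})\ot\F_{1(2)}S(\F_{2(2)})$ (again using $\Delta S=(S\ot S)\Delta$), and rewrite both sides in terms of $(\Delta\ot\id)\F$, $(\id\ot\Delta)\F$ and their inverses; the cocycle condition (\ref{cocycle}) — in the compact forms (\ref{bla}) obtained by applying $\Delta$ to a single leg — then transforms one side into the other, after repeated use of the antipode axiom $\mu\circ(S\ot\id)\circ\Delta=\eta\circ\epsilon$, the counit normalization (\ref{twistcond}), and $S^2=\id$. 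Note that (\ref{keyidentityIsom}) involves no $*$-structure and holds for an arbitrary twist on the cocommutative $H$; one may sanity-check it order by order in $\nu$ — for instance for an abelian twist $\F=\exp\!\big(i\nu\sum_a e_a\ot e_{n+a}\big)$ both sides equal $\1\ot\1+i\nu\sum_a\big(e_{n+a}\ot e_a-e_a e_{n+a}\ot\1\big)+\mathcal{O}(\nu^2)$. The main obstacle is entirely bookkeeping: tracking the several $\F$-summations and the split coproduct legs while chaining the cocycle identities, which is exactly why the compact notation of (\ref{bla}) is the right tool. Once (\ref{keyidentityIsom}) is in hand, multiplying through by $a^*$ gives $D\circ *_\star=*\circ D$ and completes the proof.
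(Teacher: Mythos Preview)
Your proposal is correct and follows essentially the same route as the paper: both reduce $D\circ *_\star=*\circ D$ to the tensor identity you label (\ref{keyidentityIsom}), and the paper dispatches it in one stroke by invoking the known formula $\Delta(\beta)=\bF_{21}(\beta\ot\beta)[(S\ot S)\bF]$ (eq.~(\ref{deltabeta})), which is precisely the cocycle computation you describe. The only difference is packaging --- you rederive (\ref{deltabeta}) on the fly rather than citing it --- so your outline and the paper's proof coincide.
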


\begin{proof}{}  Via (\ref{bla}) one can prove the relation (see e.g.  Lemma 2.2. in \cite{Majid1994} or eq. (126) in \cite{Fiore2010})
\be 
\Delta(\beta)=\F^{-1}(\beta \ot \beta )[(S\ot S)\F^{-1}_{21}]
=\F^{-1}_{21}(\beta \ot \beta )[(S\ot S)\F^{-1}].   \label{deltabeta}
\ee
$D\circ *_\star= *\circ D$ is almost the same as eq. (31) in \cite{Fiore2010}. 
We prove it again using (\ref{deltabeta}):
\bea
D(\xi^{*_\star}) \!&\!=\!&\! D\left[S(\beta)\trc \xi^*\right]= 
D\left[S\!\left(\beta_{(1)}\right)  \xi^* \beta_{(2)} \right]=
\F_1 S\left(\beta_{(1)}\right)  \xi^* \beta_{(2)}  S(\F_2) \beta^{-1}\stackrel{(\ref{deltabeta})}{=}\! S(\beta)S(\bF_2)\, \xi^*\,\bF_1\nn
\!&\!=\!&\! S(\beta)\left[\F_1\, \xi \,S(\F_2)\right]^*
=S(\beta)\left[D( \xi)\beta\right]^*=S(\beta)S\big(\beta^{-1}\big)\left[D( \xi)\right]^*
=\left[D( \xi)\right]^*. 
\eea
As particular consequences,
$\Delta_\star\circ *_\star=(*_\star\otimes *_\star)\circ\Delta_\star $, 
$S_\star\circ *_\star\circ S_\star\circ *_\star=\id$ follow from
$\Delta_\f\circ *=(*\ot *)\circ\Delta_\f$, 
$S_\f\circ *\circ S_\f\circ *=\id$.
\end{proof}

\subsection{Proof of Proposition \ref{TwistedNabla}}

First of all, $\nabla^{\f}$ is well-defined, since
$U\g[[\nu]]\trc\TT^{p,q}_\star\subseteq\TT^{p,q}_\star$.
Eq. (\ref{eq04}) easily follows from the properties of the classical covariant derivative:
$\nabla^{\f}_Xh
    =\nabla_{\overline{\f}\!_1\trc X}(\overline{\mathcal{F}}\!_2\trc h)
    =\mathcal{L}_{\overline{\mathcal{F}}\!_1\trc X}(\overline{\mathcal{F}}\!_2\trc h)
    =\mathcal{L}^\star_X h.
$
Furthermore, for every $g\in U\mathfrak{e}^\f$ we obtain
\begin{align*}
    g\trc(\nabla^\f_XT)
    =\nabla_{(g_{(1)}\overline{\f}_1)\trc X}((g_{(2)}\overline{\F}_2)\trc T)
    =\nabla_{(\overline{\f}_1g_{\widehat{(1)}})\trc X}((\overline{\F}_2g_{\widehat{(2)}})\trc T)
    =\nabla^\f_{g_{\widehat{(1)}}\trc X}(g_{\widehat{(2)}}\trc T),
\end{align*}
where $X\in\Xi_\star$ and $T\in\mathcal{T}_\star$ are arbitrary. In other words,
$\nabla^\f$ is $U\mathfrak{e}^\f$-equivariant.
If $\mathcal{F}$ is based on $U\mathfrak{e}$, $\nabla^\f$ is equivariant with respect
to the action of any leg of $\mathcal{F}$ or $\R$  (and their inverses).
By the linearity properties of the classical
covariant derivative and (\ref{bla}) we obtain
\begin{align*}
    \nabla^{\f}_{h\star X}T
    =&\nabla_{\overline{\mathcal{F}}\!_1\trc((\overline{\mathcal{F}}'_1\trc h)
    (\overline{\mathcal{F}}'_2\trc X))}(\overline{\mathcal{F}}\!_2\trc T)
    =((\overline{\mathcal{F}}\!_{1(1)}\overline{\mathcal{F}}'_1)\trc h)
    \nabla_{(\overline{\mathcal{F}}\!_{1(2)}\overline{\mathcal{F}}'_2)\trc X}
    (\overline{\mathcal{F}}\!_2\trc T)\\
    =&(\overline{\mathcal{F}}\!_{1}\trc h)
    \nabla_{(\overline{\mathcal{F}}\!_{2(1)}\overline{\mathcal{F}}'_1)\trc X}
    ((\overline{\mathcal{F}}\!_{2(2)}\overline{\mathcal{F}}'_2)\trc T)
    =(\overline{\mathcal{F}}\!_{1}\trc h)(\overline{\mathcal{F}}\!_2\trc
    (\nabla_{\overline{\mathcal{F}}'_1\trc X}
    (\overline{\mathcal{F}}'_2\trc T)))
    =h\star\nabla^{\f}_XT,
\end{align*}
which, together with \ $\nabla^{\f}_{Z+Z'}T=\nabla_{\overline{\mathcal{F}}\!_1\trc Z+\overline{\mathcal{F}}\!_1\trc Z'}(\bF_2\trc T)
=\nabla_{\overline{\mathcal{F}}\!_1\trc Z}(\bF_2\trc T)+\nabla_{\overline{\mathcal{F}}\!_1\trc Z'}(\bF_2\trc T)=\nabla^{\f}_{Z}T+\nabla^{\f}_{Z'}T$ gives  (\ref{eq03}).
By the Leibniz rule of the classical covariant derivative,
the inverse $2$-cocycle property, the equivariance of $\nabla$ and the Lie derivative
we obtain (\ref{ddsDuhv}).
The  rule (\ref{rightLeibniz})
\begin{align*}
    &\nabla^\f_X(T\star h)
    =(\mathcal{L}^\star_X(\overline{\R}_1\trc h))\star(\overline{\R}_2\trc T)
    +((\overline{\R}'_1\overline{\R}_1)\trc h)
    \star(\nabla^\f_{\overline{\r}'_2\trc X}(\overline{\R}_2\trc T))\\
    =&((\overline{\R}'_1\overline{\R}_2)\trc T)
    \star(\overline{R}'_2\trc\mathcal{L}^\star_X(\overline{\R}_1\trc h))
    +(\overline{\R}''_1\trc\nabla^\f_{\overline{\r}'_2\trc X}(\overline{\R}_2\trc T))
    \star((\overline{\R}''_2\overline{\R}'_1\overline{\R}_1)\trc h)\\
    =&((\overline{\R}'_1\overline{\R}_2)\trc T)
    \star(\mathcal{L}^\star_{\overline{\r}'_{2(1)}\trc X}
    ((\overline{\R}'_{2(2)}\overline{\R}_1)\trc h))
    +(\nabla^\f_{(\overline{\r}''_{1(1)}\overline{\r}'_2)\trc X}
    ((\overline{\R}''_{1(2)}\overline{\R}_2)\trc T))
    \star((\overline{\R}''_2\overline{\R}'_1\overline{\R}_1)\trc h)\\
    =&((\overline{\R}'_1\overline{\R}''_1\overline{\R}_2)\trc T)
    \star(\mathcal{L}^\star_{\overline{\r}'_{2}\trc X}
    ((\overline{\R}''_{2}\overline{\R}_1)\trc h))
    +(\nabla^\f_{(\overline{\r}'''_{1}\overline{\r}'_2)\trc X}
    ((\overline{\R}''_{1}\overline{\R}_2)\trc T))
    \star((\overline{\R}''_2\overline{\R}'''_2\overline{\R}'_1\overline{\R}_1)\trc h)\\
    =&(\overline{\R}'_1\trc T)\star(\mathcal{L}^\star_{\overline{\r}'_{2}\trc X}(h))
    +(\nabla^\f_{X}T)\star h
\end{align*}
holds for all $X\in\Xi$, $T\in\mathcal{T}^{p,q}$ and $h\in\X$.
The proof of (\ref{Leibddsg}-\ref{compleredd}) is
\begin{align*}
    &\nabla^{\f}_X( T\ots T')
    =\nabla_{\overline{\mathcal{F}}_1\trc X}
    (((\overline{\mathcal{F}}_{2(1)}\overline{\mathcal{F}}'_1)\trc T)
    \otimes((\overline{\mathcal{F}}_{2(2)}\overline{\mathcal{F}}'_2)\trc T'))\\
    =&(\nabla_{(\overline{\mathcal{F}}_{1(1)}\overline{\mathcal{F}}'_1)\trc X}
    ((\overline{\mathcal{F}}_{1(2)}\overline{\mathcal{F}}'_2)\trc T))
    \otimes(\overline{\mathcal{F}}_{2}\trc T')
    +((\overline{\mathcal{F}}_{2(1)}\overline{\mathcal{F}}'_1)\trc T)
    \otimes(\nabla_{\overline{\mathcal{F}}_1\trc X}
    ((\overline{\mathcal{F}}_{2(2)}\overline{\mathcal{F}}'_2)\trc T'))\\
    =&(\overline{\mathcal{F}}_1\trc\nabla_{\overline{\mathcal{F}}'_1\trc X}
    (\overline{\mathcal{F}}'_2\trc T))
    \otimes(\overline{\mathcal{F}}_{2}\trc T')
    +((\overline{\mathcal{F}}_{1(2)}\overline{\mathcal{F}}'_2
    \R'_1\overline{\R}_1)\trc T)
    \otimes((\nabla_{\overline{\mathcal{F}}_{1(1)}\overline{\mathcal{F}}'_1
    \r'_2\overline{\r}_2)\trc X}
    (\overline{\mathcal{F}}_{2}\trc T'))\\
    =&\nabla^{\f}_{X}(T)\ots T'
    +((\overline{\mathcal{F}}_{1(1)}\overline{\mathcal{F}}'_1
    \overline{\R}_1)\trc T)
    \otimes((\nabla_{\overline{\mathcal{F}}_{1(2)}\overline{\mathcal{F}}'_2
    \overline{\r}_2)\trc X}
    (\overline{\mathcal{F}}_{2}\trc T'))\\
    =&\nabla^{\f}_{X}(T)\ots T'
    +(\overline{\mathcal{R}}_1\trc T)\ots
    \nabla^{\f}_{\overline{\r}_2\trc X} T'  \hskip8cm  (\ref{Leibddsg'})\\
    =&\nabla^{\f}_{(\overline{\r}_1\overline{\r}'_2)\trc X}
    ((\overline{\R}'''_1\overline{\mathcal{R}}''_2)\trc T)
    \ots((\overline{\mathcal{R}}'''_2\overline{\mathcal{R}}_2\overline{\mathcal{R}}'_1
    \overline{\mathcal{R}}''_1)\trc T')
    +(\overline{\mathcal{R}}_1\trc T)\ots
    \nabla^{\f}_{\overline{\r}_2\trc X} T'\\
    =&\nabla^{\f}_{(\overline{\r}_{1\widehat{(1)}}\overline{\r}'_2)\trc X}
    ((\overline{\R}_{1\widehat{(2)}}\overline{\mathcal{R}}''_2)\trc T)
    \ots((\overline{\mathcal{R}}_2\overline{\mathcal{R}}'_1
    \overline{\mathcal{R}}''_1)\trc T')
    +(\overline{\mathcal{R}}_1\trc T)\ots
    \nabla^{\f}_{\overline{\r}_2\trc X} T'\\
    =&(\overline{\mathcal{R}}_1\trc\nabla^{\f}_{\overline{\r}'_2\trc X}
    (\overline{\mathcal{R}}''_2\trc T))
    \ots((\overline{\mathcal{R}}_2\overline{\mathcal{R}}'_1
    \overline{\mathcal{R}}''_1)\trc T')
    +(\overline{\mathcal{R}}_1\trc T)\ots
    \nabla^{\f}_{\overline{\r}_2\trc X} T'
\end{align*}
and (in complete analogy)
\begin{align*}
    \nabla^{\f}_X\la Y,\omega\ra_\star 
    =&\la\nabla^{\f}_X(Y),\omega\ra_\star + 
\la\bR_1\trc Y,\nabla^{\f}_{\br_2 \trc X} \omega\ra_\star \hskip6.5cm  (\ref{compleredd'})\\
=&\la\overline{\mathcal{R}}_1\trc(\nabla^{\f}_{\overline{\r}'_2\trc X}
    (\overline{\mathcal{R}}''_2\trc Y)),
    (\overline{\mathcal{R}}_2\overline{\mathcal{R}}'_1\overline{\mathcal{R}}''_1)
    \trc\omega\ra_\star
    +\la\overline{\mathcal{R}}_1\trc Y,
    \nabla^{\f}_{\overline{\r}_2\trc X} \omega\ra_\star,
\end{align*}
for all $X,Y\in\Xi$, $\omega\in\Omega^1$ and $T,T\in\mathcal{T}^{p,q}$.
In particular, we proved  (\ref{Leibddsg'}) and (\ref{compleredd'}) on
the way. 
Note that we further used the
cocommutativity of $\Delta$,
the equivariance property (\ref{eq01}), the (inverse)
$2$-cocycle condition, as well as the relations $(\Delta_\f\ot\id)\R=\R_{13}\R_{23}$ and
$\R^{-1}=\R_{21}$.

\subsection{Proof of Proposition \ref{TwistedLC}  }

First of all we prove that $\k$ is a Lie subalgebra of the equivariance Lie algebra
(\ref{equivLieAlg}) of $\nabla$, which implies that $\nabla^{\f}$ is a well-defined
covariant derivative according to Proposition~\ref{TwistedNabla}. Let $\xi\in\k$. Then,
making use of the Koszul formula, we obtain
\begin{align*}
    2\gm&(\nabla_{\xi_{(1)}\trc X}(\xi_{(2)}\trc Y),Z)\\
    =&(\xi_{(1)}\trc X)(\gm(\xi_{(2)}\trc Y,Z))
    +(\xi_{(2)}\trc Y)(\gm(Z,\xi_{(1)}\trc X))
    -Z(\gm(\xi_{(1)}\trc X,\xi_{(2)}\trc Y))\\
    &-\gm(\xi_{(1)}\trc X,[\xi_{(2)}\trc Y,Z])
    +\gm(\xi_{(2)}\trc Y,[Z,\xi_{(1)}\trc X])
    +\gm(Z,[\xi_{(1)}\trc X,\xi_{(2)}\trc Y])\\
    =&\xi_{(1)}\trc\left\{
    X(\gm(Y,S(\xi_{(2)})\trc Z))
    +Y(\gm(S(\xi_{(2)})\trc Z,X))
    -(S(\xi_{(2)})\trc Z)(\gm(X,Y))\right.\\
    &\left. -\gm(X,[Y,S(\xi_{(2)})\trc Z])
    +\gm(Y,[S(\xi_{(2)})\trc Z,X])
    +\gm(S(\xi_{(2)})\trc Z,[X,Y])
    \right\}\\
    =&\xi_{(1)}\trc(2\gm(\nabla_XY,S(\xi_{(2)})\trc Z))
    =2\gm(\xi\trc\nabla_XY,Z)
\end{align*}
for all $X,Y,Z\in\Xi$, where we further employed the $U\k$-equivariance of $\gm$ and 
of the pairing of vector fields with
forms,
as well as the cocommutativity of $U\k$ and the antipode properties.
Since $\gm$ is non-degenerate it follows that
$\xi\trc\nabla_XY=\nabla_{\xi_{(1)}\trc X}(\xi_{(2)}\trc Y)$, i.e. 
$\xi$ is an element of the equivariance Lie algebra of $\nabla$. 
Thus we have shown the inclusion $\k\subset\g$.
If \ $\mathcal{F}\in U\k\otimes U\k[[\nu]]$, \ then $\bF_2\trc \gm= \varepsilon(\bF_2) \gm$,
and using (\ref{twistcond}), (\ref{startensor}) we immediately find (\ref{twistedmetric2}). In fact
\begin{align*}
    \gm_\star(X,Y)
    =&\langle X,\langle Y,\gm^A\rangle_\star\star\gm_A\rangle_\star
    =\langle X,\langle Y,\gm^A\rangle\gm_A\rangle_\star
    =\langle\overline{\mathcal{F}}_1\trc X,
    \langle\overline{\mathcal{F}}_2\trc Y,\gm^A\rangle\gm_A\rangle\\
    =&\left\la  \: \bF_1\trc( X\ots Y),\bF_2\trc\gm \:\right\ra
    =\left\la\:  X\ots Y,\gm \:\right\ra,
\end{align*}
where in the last two equations the pairing is extended to double tensor products, see
(\ref{ExtPairing}). This reduces to
the undeformed $\gm(X,\!Y)$ if  $\bF\!=\!\1 \ot \1$.  
The twisted metric $\gm_\star$ is right $\X_\star$-linear, since
\begin{align*}
    \gm_\star(X,Y\star f)
    =&\langle X,\langle Y\star f,\gm^A\rangle_\star\star\gm_A\rangle_\star
    =\langle X,\langle Y,
    \overline{\mathcal{R}}_1\trc\gm^A\rangle_\star
    \star(\overline{\mathcal{R}}'_1\trc\gm_A)
    \star((\overline{\mathcal{R}}'_2\overline{\mathcal{R}}_2)\trc f)\rangle_\star\\
    =&\langle X,\langle Y,\gm^A\rangle_\star\star\gm_A\rangle_\star\star f
    =\gm_\star(X,Y)\star f
\end{align*}
for all $f\in\X_\star$. Next we prove that
$\nabla^{\f}$ is torsion-free with respect to the twisted torsion
and metric compatible with respect to the twisted  metric. The first
property holds since
\begin{align*}
    \tT^{\f}_\star(X,Y)
    =&\nabla_{\overline{\mathcal{F}}\!_1\trc X}(\overline{\mathcal{F}}\!_2\trc Y)
    -\nabla_{(\overline{\mathcal{F}}\!_1\mathcal{R}_2)\trc Y}
    ((\overline{\mathcal{F}}\!_2\mathcal{R}_1)\trc X)
    -[\overline{\mathcal{F}}\!_1\trc X,\overline{\mathcal{F}}\!_2\trc Y]\\
    =&\nabla_{\overline{\mathcal{F}}\!_1\trc X}(\overline{\mathcal{F}}\!_2\trc Y)
    -\nabla_{\overline{\mathcal{F}}\!_2\trc Y}
    (\overline{\mathcal{F}}\!_1\trc X)
    -[\overline{\mathcal{F}}\!_1\trc X,\overline{\mathcal{F}}\!_2\trc Y]
    =\tT_\star(\overline{\mathcal{F}}\!_1\trc X,\overline{\mathcal{F}}\!_2\trc Y)
    =0,
\end{align*}
while the second one holds because
\begin{align*}
    \mathcal{L}^\star_X&\gm_\star(Y,Z)
    =(\overline{\mathcal{F}}\!_1\trc X)
    (\gm((\overline{\mathcal{F}}\!_{2(1)}\overline{\mathcal{F}}'_1)\trc Y,
    (\overline{\mathcal{F}}\!_{2(2)}\overline{\mathcal{F}}'_2)\trc Z))\\
    =&\gm(\nabla_{\overline{\mathcal{F}}\!_1\trc X}
    ((\overline{\mathcal{F}}\!_{2(1)}\overline{\mathcal{F}}'_1)\trc Y),
    (\overline{\mathcal{F}}\!_{2(2)}\overline{\mathcal{F}}'_2)\trc Z)
    +\gm((\overline{\mathcal{F}}\!_{2(1)}\overline{\mathcal{F}}'_1)\trc Y,
    \nabla_{\overline{\mathcal{F}}\!_1\trc X}
    ((\overline{\mathcal{F}}\!_{2(2)}\overline{\mathcal{F}}'_2)\trc Z))\\
    =&\gm(\nabla_{(\overline{\mathcal{F}}\!_{1(1)}\overline{\mathcal{F}}'_1)\trc X}
    ((\overline{\mathcal{F}}\!_{1(2)}\overline{\mathcal{F}}'_2)\trc Y),
    \overline{\mathcal{F}}\!_{2}\trc Z)
    +\gm((\overline{\mathcal{F}}\!_{1(2)}\overline{\mathcal{F}}'_2)\trc Y,
    \nabla_{(\overline{\mathcal{F}}\!_{1(1)}\overline{\mathcal{F}}'_1)\trc X}
    (\overline{\mathcal{F}}\!_{2}\trc Z))\\
    =&\gm(\overline{\mathcal{F}}\!_1\trc(\nabla_{\overline{\mathcal{F}}'_1\trc X}
    (\overline{\mathcal{F}}'_2\trc Y)),
    \overline{\mathcal{F}}\!_{2}\trc Z)
    +\gm((\overline{\mathcal{F}}\!_{1(2)}
    \overline{\mathcal{F}}'_1\mathcal{R}_2)\trc Y,
    \nabla_{(\overline{\mathcal{F}}\!_{1(1)}
    \overline{\mathcal{F}}'_2\mathcal{R}_1)\trc X}
    (\overline{\mathcal{F}}\!_{2}\trc Z))\\
    =&\gm_\star(\nabla^{\f}_XY,Z)
    +\gm((\overline{\mathcal{F}}\!_{1(1)}
    \overline{\mathcal{F}}'_1\mathcal{R}_2)\trc Y,
    \nabla_{(\overline{\mathcal{F}}\!_{1(2)}
    \overline{\mathcal{F}}'_2\mathcal{R}_1)\trc X}
    (\overline{\mathcal{F}}\!_{2}\trc Z))\\
    =&\gm_\star(\nabla^{\f}_XY,Z)
    +\gm((\overline{\mathcal{F}}\!_{1}\mathcal{R}_2)\trc Y,
    \nabla_{(\overline{\mathcal{F}}\!_{2(1)}
    \overline{\mathcal{F}}'_1\mathcal{R}_1)\trc X}
    ((\overline{\mathcal{F}}\!_{2(2)}\overline{\mathcal{F}}'_2)\trc Z))\\
    =&\gm_\star(\nabla^{\f}_XY,Z)
    +\gm((\overline{\mathcal{F}}\!_{1}\mathcal{R}_2)\trc Y,\overline{\mathcal{F}}\!_2\trc(
    \nabla_{(\overline{\mathcal{F}}'_1\mathcal{R}_1)\trc X}
    (\overline{\mathcal{F}}'_2\trc Z)))\\
    =&\gm_\star(\nabla^{\f}_XY,Z)
    +\gm_\star(\mathcal{R}_2\trc Y,\nabla^{\f}_{\mathcal{R}_1\trc X}Z)
\end{align*}
for all $X,Y,Z\in\Xi$, which is equivalent to $\nabla^{\f}_X\gm=0$.
The statements about the twisted curvature and torsion are proven in 
\cite{AschieriCastellani2009}~Theorem~7, while the uniqueness of $\nabla^\f$
is given by \cite{AschieriCastellani2009}~Theorem~5.
We prove that the twisted torsion and curvature are right $\star$-linear in the last
argument if $\mathcal{F}$ is based on Killing vector fields.
Let $X,Y,Z\in\Xi$ and $h\in\X$. Then
\begin{align*}
    \tT^\f_\star(X,Y\star h)
    =&\nabla^\f_X(Y\star h)
    -\nabla^\f_{\overline{\r}_1\trc(Y\star h)}(\overline{\R}_2\trc X)
    -[X,Y\star h]\\
    =&\nabla^\f_X(Y)\star h
    +(\overline{\R}_1\trc Y)\star\mathcal{L}^\star_{\overline{\r}_2\trc X}(h)
    -\nabla^\f_{(\overline{\r}_{1(1)}\trc Y)\star(\overline{\r}_{1(2)}\trc h)}
    (\overline{\R}_2\trc X)\\
    &-[X,Y]\star h
    -(\overline{\R}_1\trc Y)\star\mathcal{L}^\star_{\overline{\r}_2\trc X}(h)\\
    =&\nabla^\f_X(Y)\star h
    -\nabla^\f_{\overline{\r}_{1}\trc Y}
    ((\overline{\R}'_1\overline{\R}''_2\overline{R}_2)\trc X)
    \star((\overline{\R}'_2\overline{\R}''_{1})\trc h)
    -[X,Y]\star h\tT^\f_\star(X,Y)\star h
\end{align*}
proves right $\X_\star$-linearity of the twisted torsion.
Finally,
\begin{align*}
    \rR^\f_\star&(X,Y,Z\star h)
    =\nabla^\f_X\nabla^\f_Y(Z\star h)
    -\nabla^\f_{\overline{\mathcal{R}}_1\trc Y}
    \nabla^\f_{\overline{\mathcal{R}}_2\trc X}(Z\star h)
    -\nabla^\f_{[X,Y]}(Z\star h)\\
    =&\nabla^\f_X((\nabla^\f_YZ)\star h
    +(\overline{\mathcal{R}}_1\trc Y)\star(\mathcal{L}^\star_{\overline{R}_2\trc X}h))
    -\nabla^\f_{\overline{\mathcal{R}}_1\trc Y}(
    (\nabla^\f_{\overline{\mathcal{R}}_2\trc X}Z)\star h\\
    &+(\overline{\mathcal{R}}'_1\trc Z)
    \star(\mathcal{L}^\star_{(\overline{\mathcal{R}}'_2\overline{\mathcal{R}}_2)\trc X}h))
    -(\nabla^\f_{[X,Y]}Z)\star h
    -(\overline{\mathcal{R}}_1\trc Z)
    \star(\mathcal{L}^\star_{\overline{\mathcal{R}}_2\trc[X,Y]}h)\\
    =&\rR^\f_\star(X,Y,Z)\star h
    +(\overline{\mathcal{R}}_1\trc(\nabla^\f_YZ))
    \star(\mathcal{L}^\star_{\overline{\mathcal{R}}_2\trc X}h)
    +\nabla^\f(\overline{\mathcal{R}}_1\trc Y)\star(\mathcal{L}^\star_{\overline{R}_2\trc X}h)\\
    &+((\overline{\mathcal{R}}'_1\overline{\mathcal{R}}_1)\trc Y)
    (\mathcal{L}^\star_{\overline{\mathcal{R}}'_2\trc X}
    (\mathcal{L}^\star_{\overline{R}_2\trc X}h))
    -(\overline{\mathcal{R}}'_1\trc(\nabla^\f_{\overline{\mathcal{R}}_2\trc X}Z))
    \star(\mathcal{L}^\star_{(\overline{\mathcal{R}}'_2\overline{\mathcal{R}}_1)\trc Y}h)\\
    &-(\nabla^\f_{\overline{\mathcal{R}}_1\trc Y}(\overline{\mathcal{R}}'_1\trc Z))
    \star(\mathcal{L}^\star_{(\overline{\mathcal{R}}'_2\overline{\mathcal{R}}_2)\trc X}h)
    -((\overline{\mathcal{R}}''_1\overline{\mathcal{R}}'_1)\trc Z)
    \star\mathcal{L}^\star_{(\overline{\mathcal{R}}''_2\overline{\mathcal{R}}_1)\trc Y}
    (\mathcal{L}^\star_{(\overline{\mathcal{R}}'_2\overline{\mathcal{R}}_2)\trc X}h)\\
    &-(\overline{\mathcal{R}}_1\trc Z)
    \star(\mathcal{L}^\star_{\overline{\mathcal{R}}_2\trc[X,Y]}h)
    =\rR^\f_\star(X,Y,Z)\star h,
\end{align*}
where in the last equation the eighth term cancels with the fourth and seventh term, the second and sixth cancel each other, and so do the third and fifth terms.

\subsection{Proof of Proposition \ref{XCrepr}  and eq. (\ref{extra})}

Decompose \ $X=X_t+X_{\scriptscriptstyle \perp}$, $ X_{\scriptscriptstyle \perp}=  X^a\Np^a$.  \ Then \ $X(f^b)=X^a\Np^a(f^b) 
= X^aK^{ac}(f^{ci}f^{b}_i)=  X^b$ must belong
to $\C$ for all $b=1,..,k$, i.e. must be of the form
$X^a= f^bX^a_b $, for some $X^a_b\in\X$.
Hence $X_{\scriptscriptstyle \perp}= f^b(  X^a_b\Np^a)$
belongs to $\Xi_{\C\C}$, and $X_t\in[X]$.
Decompose \ $\omega=\omega_t+\omega_{\scriptscriptstyle \perp}$.  \ 
One can find an atlas of ${\cal D}_f$, with a pair $\{e_i\}$, $\{\theta^i\}$
of dual frames in each chart,   
such that $\{e_\alpha\}_{\alpha=1}^{n-k}$ is a basis of $\Xi_t$ and 
 $\{\theta^\alpha\}_{\alpha=1}^{n-k}$ is a basis of $\Omega_t$.
Then 
 $\omega_t=\omega_{\alpha}\theta^\alpha$, and  for all $X= X^\alpha e_\alpha\in\Xi_t$ \ it is
 $\la X,\omega\ra=\la X,\omega_t\ra=  X^\alpha \omega_{\alpha}$;
by Theorem \ref{DecoC}, this  belongs to $\C$ for all $(X^\alpha)$ if and only if
\ $\omega_{\alpha}= f^a\omega_{\alpha}^a $, \ for some $\omega_{\alpha}^a\in\X$.
Hence $\omega_{\scriptscriptstyle \perp}=
 f^a\omega_{\alpha}^a\theta^\alpha $
belongs to $\Omega_{\C\C}$, and $\omega_t\in[\omega]$.

In Proposition \ref{Propdiamond} we have   shown  that
$\Omp\subseteq\Omp':=\left\{\omega\in\Omega  \:|\:   \la \Xi_t,\omega\ra=0 \right\}$. 
Conversely, for any $\omega\in\Omp'$  we have
$0=\la X,\omega\ra=\la X,\omega_t\ra$ for all $X\in\Xi_t$, whence $\omega_t=0$
and $\omega=\omega_{\scriptscriptstyle \perp}\in \Omp$. This proves the first equality in 
(\ref{extra}). To prove the last equality, decompose $X=X_t+X_{\scriptscriptstyle \perp}$; this belongs to $\Xi_\C$ if and only if $X_{\scriptscriptstyle \perp}$ is of the form
$X_{\scriptscriptstyle \perp}=
f^bX_{ba}\Np^a$,
whence \ $ \la X,\omega\ra= \la X_t,\omega\ra+
f^bX_{ba}\la\Np^a,\omega\ra$ belongs to $\C$ iff $\la X_t,\omega\ra$ does, for all $X_t\in\Xi_t$.

\subsection{Proof of Proposition \ref{QuantumGauss}}\label{proofQGauss}

We reduce eq.(\ref{GaussQuantum}) to the $\nu$-linear extension of eq. (\ref{GaussClassic}).
For $X,Y,Z,W\in\Xi_t$ we obtain
\begin{align*}
    \gm_\star (\rR_{t\star}^\f(X,Y)Z,W)
    =&\gm(\rR_{t}((\overline{\mathcal{F}}_{1(1)}\overline{\mathcal{F}}'_{1(1)}
    \overline{\mathcal{F}}''_{1})\trc X,
    (\overline{\mathcal{F}}_{1(2)}\overline{\mathcal{F}}'_{1(2)}
    \overline{\mathcal{F}}''_{2})\trc Y)
    ((\overline{\mathcal{F}}_{1(3)}\overline{\mathcal{F}}'_{2})\trc Z),
    \overline{\mathcal{F}}_{2}\trc W)\\
       =&\gm(\rR_{t}((\overline{\mathcal{F}}_{1(1)}
    \overline{\mathcal{F}}''_{1})\trc X,
    (\overline{\mathcal{F}}_{1(2)}
    \overline{\mathcal{F}}''_{2})\trc Y)
    ((\overline{\mathcal{F}}_{2(1)}\overline{\mathcal{F}}'_{1})\trc Z),
    (\overline{\mathcal{F}}_{2(2)}\overline{\mathcal{F}}'_2)\trc W)
\end{align*}
\begin{align*}
  \gm_\star&(II^\f_\star(X,\overline{\mathcal{R}}_1\trc Z),
    II^\f_\star(\overline{\mathcal{R}}_2\trc Y,W))\\
    =&\gm(II((\overline{\mathcal{F}}_{1(1)}\overline{\mathcal{F}}'_{1})\trc X,
    (\overline{\mathcal{F}}_{1(2)}\overline{\mathcal{F}}'_{2}
    \overline{\mathcal{R}}_{1})\trc Z),
    II((\overline{\mathcal{F}}_{2(1)}\overline{\mathcal{F}}''_{1}
    \overline{\mathcal{R}}_{2})\trc Y,
    (\overline{\mathcal{F}}_{2(2)}\overline{\mathcal{F}}''_{2})\trc W))\\
      =&\gm(II(\overline{\mathcal{F}}_{1}\trc X,
    (\overline{\mathcal{F}}_{2(1)}\overline{\mathcal{F}}'_{1}
    \overline{\mathcal{R}}_{1})\trc Z),
    II((\overline{\mathcal{F}}_{2(2)}\overline{\mathcal{F}}'_{2(1)}
    \overline{\mathcal{F}}''_{1}
    \overline{\mathcal{R}}_{2})\trc Y,
    (\overline{\mathcal{F}}_{2(3)}\overline{\mathcal{F}}'_{2(2)}
    \overline{\mathcal{F}}''_{2})\trc W))\\
       =&\gm(II(\overline{\mathcal{F}}_{1}\trc X,
    (\overline{\mathcal{F}}_{2(1)}\overline{\mathcal{F}}'_{1(1)}\overline{\mathcal{F}}''_1
    \overline{\mathcal{R}}_{1})\trc Z),
    II((\overline{\mathcal{F}}_{2(2)}\overline{\mathcal{F}}'_{1(2)}
    \overline{\mathcal{F}}''_{2}
    \overline{\mathcal{R}}_{2})\trc Y,
    (\overline{\mathcal{F}}_{2(3)}\overline{\mathcal{F}}'_{2})\trc W))\\
    =&\gm(II(\overline{\mathcal{F}}_{1}\trc X,
    (\overline{\mathcal{F}}_{2(2)}\overline{\mathcal{F}}'_{1(2)}
    \overline{\mathcal{F}}''_2)\trc Z),
    II((\overline{\mathcal{F}}_{2(1)}\overline{\mathcal{F}}'_{1(1)}
    \overline{\mathcal{F}}''_{1})\trc Y,
    (\overline{\mathcal{F}}_{2(3)}\overline{\mathcal{F}}'_{2})\trc W))\\
     =&\gm(II(\overline{\mathcal{F}}_{1}\trc X,
    (\overline{\mathcal{F}}_{2(2)}\overline{\mathcal{F}}'_{2(1)}
    \overline{\mathcal{F}}''_1)\trc Z),
    II((\overline{\mathcal{F}}_{2(1)}\overline{\mathcal{F}}'_{1})\trc Y,
    (\overline{\mathcal{F}}_{2(3)}
    \overline{\mathcal{F}}'_{2(2)}\overline{\mathcal{F}}''_2)\trc W))\\
       =&\gm(II((\overline{\mathcal{F}}_{1(1)}\overline{\mathcal{F}}'_1)\trc X,
    (\overline{\mathcal{F}}_{2(1)}
    \overline{\mathcal{F}}''_1)\trc Z),
    II((\overline{\mathcal{F}}_{1(2)}\overline{\mathcal{F}}'_{2})\trc Y,
    (\overline{\mathcal{F}}_{2(2)}\overline{\mathcal{F}}''_2)\trc W))
\end{align*}
\begin{align*}
        -\gm_\star&(II^\f_\star(\overline{\mathcal{R}}_{1\widehat{(1)}}\trc Y,
    \overline{\mathcal{R}}_{1\widehat{(2)}}\trc Z),
    II^\f_\star(\overline{\mathcal{R}}_2\trc X,W))\\
    =&-\gm(II((\overline{\mathcal{F}}_{1(1)}\overline{\mathcal{F}}'_{1}
    \overline{\mathcal{R}}_{1\widehat{(1)}})\trc Y,
    (\overline{\mathcal{F}}_{1(2)}\overline{\mathcal{F}}'_{2}
    \overline{\mathcal{R}}_{1\widehat{(2)}})\trc Z),
    II((\overline{\mathcal{F}}_{2(1)}\overline{\mathcal{F}}''_{1}
    \overline{\mathcal{R}}_{2})\trc X,
    (\overline{\mathcal{F}}_{2(2)}\overline{\mathcal{F}}''_{2})\trc W))\\
    =&-\gm(II((\overline{\mathcal{F}}_{1(1)}
    \overline{\mathcal{R}}_{1(1)}\overline{\mathcal{F}}'_{1})\trc Y,
    (\overline{\mathcal{F}}_{1(2)}
    \overline{\mathcal{R}}_{1(2)}\overline{\mathcal{F}}'_{2})\trc Z),
    II((\overline{\mathcal{F}}_{2(1)}\overline{\mathcal{F}}''_{1}
    \overline{\mathcal{R}}_{2})\trc X,
    (\overline{\mathcal{F}}_{2(2)}\overline{\mathcal{F}}''_{2})\trc W))\\
    =&-\gm(II((\overline{\mathcal{F}}_{1(1)}\overline{\mathcal{F}}''_{1(1)}
    \overline{\mathcal{R}}_{1(1)}\overline{\mathcal{F}}'_{1})\trc Y,
    (\overline{\mathcal{F}}_{1(2)}\overline{\mathcal{F}}''_{1(2)}
    \overline{\mathcal{R}}_{1(2)}\overline{\mathcal{F}}'_{2})\trc Z),
    II((\overline{\mathcal{F}}_{1(3)}\overline{\mathcal{F}}''_{2}
    \overline{\mathcal{R}}_{2})\trc X,
    \overline{\mathcal{F}}_{2}\trc W))\\
    =&-\gm(II((\overline{\mathcal{F}}_{1(1)}\overline{\mathcal{F}}''_{2(1)}
    \overline{\mathcal{F}}'_{1})\trc Y,
    (\overline{\mathcal{F}}_{1(2)}\overline{\mathcal{F}}''_{2(2)}
    \overline{\mathcal{F}}'_{2})\trc Z),
    II((\overline{\mathcal{F}}_{1(3)}\overline{\mathcal{F}}''_{1})\trc X,
    \overline{\mathcal{F}}_{2}\trc W))\\
    =&-\gm(II((\overline{\mathcal{F}}_{1(2)}\overline{\mathcal{F}}''_{1(2)}
    \overline{\mathcal{F}}'_{2})\trc Y,
    (\overline{\mathcal{F}}_{1(3)}\overline{\mathcal{F}}''_{2})\trc Z),
    II((\overline{\mathcal{F}}_{1(1)}\overline{\mathcal{F}}''_{1(1)}
    \overline{\mathcal{F}}'_1)\trc X,
    \overline{\mathcal{F}}_{2}\trc W))\\
    =&-\gm(II((\overline{\mathcal{F}}_{1(2)}
    \overline{\mathcal{F}}'_{2})\trc Y,
    (\overline{\mathcal{F}}_{2(1)}\overline{\mathcal{F}}''_{1})\trc Z),
    II((\overline{\mathcal{F}}_{1(1)}
    \overline{\mathcal{F}}'_1)\trc X,
    (\overline{\mathcal{F}}_{2(2)}\overline{\mathcal{F}}''_2)\trc W)),
\end{align*}
using the $2$-cocycle property of $\overline{\mathcal{F}}$ and its consequences (\ref{bla}),
eq. (\ref{inter-2}), the definition of $\R$, as well as the
$U\mathfrak{k}$-equivariance of $\rR^\f_\star$, $\rR_{t\star}^\f$ and $II^\f_\star$.
The sum of these three terms  is the right-hand side (rhs) of (\ref{GaussQuantum}). By (\ref{GaussClassic}) it equals the  left-hand side:
\begin{align*}
    \mbox{rhs}(\ref{GaussQuantum})  =&\gm(\rR_t((\overline{\mathcal{F}}_{1(1)}
    \overline{\mathcal{F}}'_{1})\trc X,
    (\overline{\mathcal{F}}_{1(2)}
    \overline{\mathcal{F}}'_{2})\trc Y)
    ((\overline{\mathcal{F}}_{2(1)}\overline{\mathcal{F}}''_{1})\trc Z),
    (\overline{\mathcal{F}}_{2(2)}\overline{\mathcal{F}}''_2)\trc W)\\
    &+\gm(II((\overline{\mathcal{F}}_{1(1)}\overline{\mathcal{F}}'_1)\trc X,
    (\overline{\mathcal{F}}_{2(1)}
    \overline{\mathcal{F}}''_1)\trc Z),
    II((\overline{\mathcal{F}}_{1(2)}\overline{\mathcal{F}}'_{2})\trc Y,
    (\overline{\mathcal{F}}_{2(2)}\overline{\mathcal{F}}''_2)\trc W))\\
    &-\gm(II((\overline{\mathcal{F}}_{1(2)}
    \overline{\mathcal{F}}'_{2})\trc Y,
    (\overline{\mathcal{F}}_{2(1)}\overline{\mathcal{F}}''_{1})\trc Z),
    II((\overline{\mathcal{F}}_{1(1)}
    \overline{\mathcal{F}}'_1)\trc X,
    (\overline{\mathcal{F}}_{2(2)}\overline{\mathcal{F}}''_2)\trc W))\\
    =&\gm(\rR((\overline{\mathcal{F}}_{1(1)}\overline{\mathcal{F}}'_1)\trc X),
    (\overline{\mathcal{F}}_{1(2)}\overline{\mathcal{F}}'_2)\trc Y)
    (\overline{\mathcal{F}}_{2(1)}\overline{\mathcal{F}}''_1)\trc Z),
    \overline{\mathcal{F}}_{2(2)}\overline{\mathcal{F}}''_2)\trc W)\\
    =&\gm_\star(\rR^\f_\star(X,Y)Z,W). 
\end{align*}

\end{document}